\newtheorem{thm}{Theorem}
\newtheorem{conj}{Conjecture}
\renewcommand{\a}{\alpha}
\renewcommand{\tilde}{\widetilde}
\begin{document}

\title[On deformation and classification of $\vee$-systems]{On deformation and classification of $\vee$-systems}

\author{V. Schreiber}\address{Department of Mathematical Sciences,
Loughborough University, Loughborough LE11 3TU, UK}
\email{V.Schreiber@lboro.ac.uk}
\author{A.P. Veselov}
\address{Department of Mathematical Sciences,
Loughborough University, Loughborough LE11 3TU, UK  and Moscow State University, Moscow 119899, Russia}
\email{A.P.Veselov@lboro.ac.uk}

\maketitle

\begin{abstract}
The $\vee$-systems are special finite sets of covectors which appeared in the theory of the generalized
Witten-Dijkgraaf-Verlinde-Verlinde (WDVV) equations. Several families of $\vee$-systems are known, but their classification is an open problem.
We derive the relations describing the infinitesimal deformations of $\vee$-systems 
and use them to study the classification problem for $\vee$-systems in dimension three. 
We discuss also possible matroidal structures of $\vee$-systems in relation with projective geometry and give the catalogue of all known irreducible rank three $\vee$-systems.
\end{abstract}

{\it Keywords}: Root systems; $\vee$-systems; WDVV equation.

\section{Introduction}

The $\vee$-systems are special finite sets of covectors introduced in \cite{V1,V2}. The motivation came from the study of certain special solutions of the generalized Witten-Dijkgraaf-Verlinde-Verlinde (WDVV) equations, playing an important role in 2D topological  field theory and $N=2$ SUSY Yang-Mills theory \cite{D1, MMM}. 

Let $V$ be a real vector space and  $\mathcal{A}\subset V^*$ be a
finite set of vectors in the dual space $V^*$ (covectors) spanning $V^*$. To such a set one can
associate the following {\it canonical form} $G_{\mathcal A}$ on
$V$:
$$G_{\mathcal A}(x,y)=\sum_{\a\in\mathcal{A}}\a(x)\a(y), $$
where $x,y\in V$, which
establishes the isomorphism
$$
\varphi_{\mathcal A}: V \rightarrow V^*.
$$
The inverse $\varphi_{\mathcal A}^{-1}(\alpha)$ we denote as
$\a^\vee$.
The system $\mathcal{A}$ is called $\vee$-{\it system} if the following relations 
\begin{equation}
\label{vee}
\sum\limits_{\beta \in \Pi \cap \mathcal {A}}
\beta(\alpha^\vee)\beta^\vee=\nu \alpha^{\vee}
\end{equation}
(called $\vee$-{\it conditions}) are satisfied for any $\alpha \in \mathcal{A}$ and any two-dimensional plane $\Pi \subset V^*$ containing $\alpha$ and some $\nu$, which may depend on $\Pi$ and $\alpha.$ If $\Pi$ contains more than two covectors, then $\nu$ does not depend on $\alpha \in \Pi$ and the corresponding two forms $G_{\mathcal A}$ and 
$$G^{\Pi}_{\mathcal A}(x,y):=\sum_{\a\in\Pi\cap\mathcal{A}}\a(x)\a(y)$$ are proportional on the plane $\Pi^{\vee} \subset V$ (see \cite{V1,V2}).
If $\Pi$ contains only two covectors from $\mathcal A$, say $\alpha$ and $\beta,$ then we must have
$$
G_{\mathcal A}(\alpha^{\vee}, \beta ^{\vee})=0.
$$
The $\vee$-conditions are equivalent to the flatness of the corresponding Knizhnik--Zamolodchikov-type $\vee$-connection 
\[
\nabla_a=\partial_a+\kappa\sum_{\alpha\in {\mathcal A}}\frac{\langle\alpha,a\rangle}{\langle\alpha,x\rangle}\alpha^\vee\otimes \alpha.
\]

The examples of $\vee$-systems include
all two-dimensional systems, Coxeter configurations and so-called deformed root systems \cite{MG, SV,V1}, 
but the full classification is an open problem. The main results in this direction can be found in \cite{CV,FV,FV2,F,LST}. In particular, in \cite{FV2} it was shown that the class of $\vee$-systems is closed under the operation of restriction, which gives a powerful tool to construct new  examples of $\vee$-systems.

The most comprehensive list of known $\vee$-systems together with their geometric properties can be found in \cite{FV, FV2}.
The main purpose of this paper is to present some arguments in favour of the completeness of this list in dimension three by studying the infinitesimal deformations of $\vee$-systems. 

We start with a brief review of the general notions from the theory of matroids  \cite{Oxley}, which provides a natural framework for the problem of classification of $\vee$-systems. A matroidal approach in this context was also used by Lechtenfeld et al in \cite{LST}.

Then we study the infinitesimal deformations of the $\vee$-systems of given matroidal type and derive the corresponding linearised $\vee$-conditions.
This allows us to show that the isolated 3D $\vee$-systems listed in \cite{FV} are indeed isolated.

The main question which still remains open is what are possible matroidal structures of $\vee$-systems. We discuss this in the context of the projective geometry using the analysis of the known $\vee$-systems.

In the last section we study the property of the corresponding $\nu$-function on the flats of matroid and state the uniqueness conjecture, saying that the matroid and function $\nu$ on its flats uniquely determine the corresponding $\vee$-system.

In the Appendix we give the catalogue of all known $\vee$-systems in dimension three together with the corresponding matroids and $\nu$-functions.

\section{Vector Configurations and Matroids}

The combinatorial structure of the vector configurations can be described using the notion of matroid.
The theory of {\it matroids} was introduced by Whitney in 1935, who was looking for an abstract notion generalising the linear dependence in the vector space.

We review some standard notions from this theory following mainly Oxley \cite{Oxley}.

 A {\it matroid} M is a pair ($X$, $\mathcal{I}$), where $X$ is a finite set
and $\mathcal{I}$ is a collection of subsets $\mathit{S}$ of X (called the {\it independent sets} of $M$) such that:
\begin{itemize}
\item $\mathcal{I}$ is non-empty
\item For any $\mathit{S}$$\in\mathit{\mathcal{I}}$, any $\mathit{S}{}^{'}\subset\mathit{S}$
one has $\mathit{S}^{'}\in\mathcal{I}$.
\item If $A, B\in\mathcal{I}$, $\mid A\mid=\mid B\mid+1$ then $\exists x\in A\setminus B$
such that $B\cup\{x\}\in\mathcal{I}$. 
\end{itemize}

The {\it rank} of the matroid $M$ is defined as $r\left(M\right)=max_{I\in\mathcal{I}}\{ \mid I\mid\}.$ 
More generally, the rank of the subset $S \subset X$ is defined as $r\left(S\right)=max_{I\in\mathcal{I}}\{ \mid I\mid: I \subseteq S\}.$ 

A {\it direct sum}  of matroids $M_1=(X_1, \mathcal{I}_1)$ and $M_2=(X_2, \mathcal{I}_2)$ is defined
as
$$M_1\oplus M_2=(X_1\cup X_2, \{I_1\cup I_2: I_1 \in \mathcal{I}_1, I_2 \in \mathcal{I}_2\}).$$
A matroid is called {\it connected} if it can not be represented as a direct sum.

The most important class of matroids for us consists of vector matroids. Let $A$ be a real $r \times n$ matrix, $X=\{1,2,\dots, n\}$ be the set of column labels of $A$, and $\mathcal{I}$ be
the collection of subsets $S$ of $X,$ for which the columns labelled by $S$ are linearly independent over $\mathbb R.$ Then ($X$,$\mathcal{I}$) is a matroid, which is called {\it rank $r$ vector matroid} 
and denoted by $M[A].$ 

The following operations on matrix $A$ do not affect the corresponding
vector matroid $M[A]:$ 

\begin{enumerate}
 \item Elementary operations with the rows,
 \item Multiplication of a column by a non-zero number.
\end{enumerate}

Two matrices $A$ and $A^{'}$ representing the same matroid $M$
are said to be {\it projectively equivalent representations} of $M$
if $A^{'}$ can be obtained from $A$ by a sequence of these operations. 
Equivalently, one can say that $A'=CAD$, where $C$ is an invertible $r\times r$ matrix, 
and $D$ is a diagonal $n \times n$ matrix with non-zero diagonal entries.

Alternatively, one can define the {\it linear dependence matroid} on the set $X$ as a family  $\mathcal{I_{C}}$ of minimal dependent subsets $\mathit{C}$ of $X$ (called {\it circuits}) through the following axioms:\
\begin{itemize}
\item The empty set is not a circuit.
\item No curcuit is contained in another circuit.
\item If $C_1,C_2\in\mathcal{I_{C}}$ are two circuits sharing an element $e\in{X}$, then $(C_{1}\cup C_{2})\setminus{e}$ is a circuit or contains a circuit. 
\end{itemize}

 The rank of a circuit is defined as the dimension of the vector space spanned by its vectors. Circuits spanning the same $d$-dimensional subspace can be united in so-called $d$-flats. A set $F\subseteq X$ is a {\it flat} of the matroid $M$
if $$r(F\cup\{x\})=r(F)+1$$ for all $x\in X\setminus F,$ where $r(F)$ is the rank of the flat $F.$ 
The matroid can be labelled by listing all $d$-flats.

As an example consider the positive roots of the $B_{3}$-type system.
The corresponding matrix (with the first row giving the labelling) is 

$$A =\left[\begin{array}{ccccccccc} 1 & 2 & 3 & 4 & 5 & 6 & 7 & 8 & 9\\ 1 & 1 & 0 & 0 & 1 & 1 & 1 & 0 & 0\\ 1 & -1 & 1 & 1 & 0 & 0 & 0 & 1 & 0\\ 0 & 0 & -1 & 1 & 1 & -1 & 0 & 0 & 1 \end{array}\right].$$
Here matroid $M$ is defined on the set $X=\{1,2,3,4,5,6,7,8,9\},$ 
with 2-flats
$$\left\{ (4,1,6),(6,2,3),(4,5,2),(1,5,3)\right\}$$
and
$$\left\{ (3,4,8,9),(1,2,7,8),(5,6,7,9)\right\}$$
with three and four elements respectively.
Together with the 3-flat $X$ this gives the complete list of flats.

Graphically on the projective plane we have

\begin{figure} [H]
\begin{center}
\includegraphics[scale=0.3]{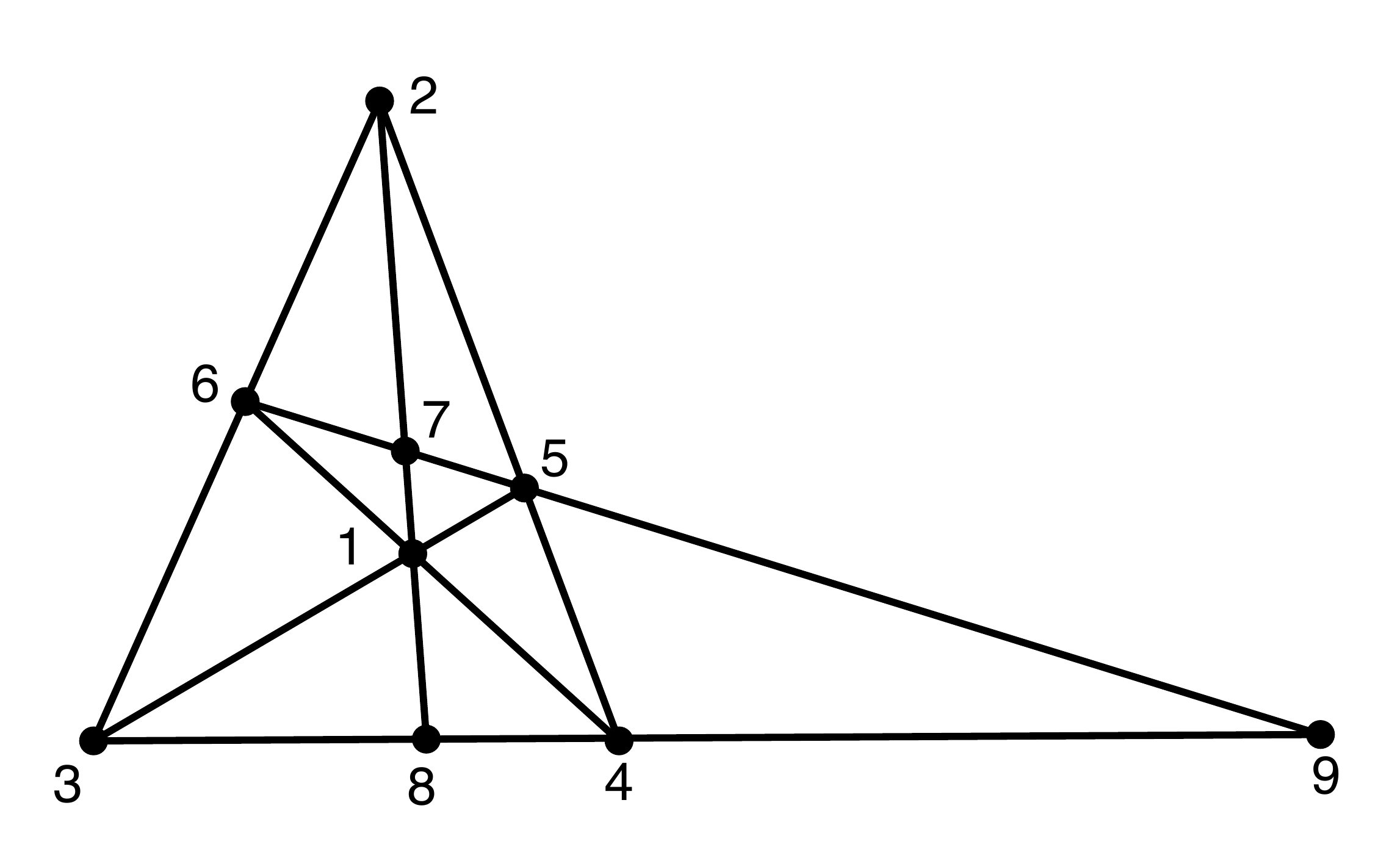} 
\caption{Graphic representation of $B_3$-matroid: lines correspond to rank-2 flats}
\end{center}
\end{figure}

A matroid is called {\it simple} if it does not contain one- or two-element circuits. 
For vector matroids this means that no two vectors are proportional.

Number of matroids up to isomorphism grows very rapidly with $n = |X|.$ The following table summarises the results for rank $3$-matroids for small $n$ (see \cite{YM}).

\medskip

\begin{flushleft}
\begin{tabular}{|c|c|c|c|c|c|c|c|c|c|c|}
\hline 
$n$ & 3 & 4 & 5 & 6 & 7 & 8 & 9 & 10 & 11 & 12\tabularnewline
\hline 
\hline 
all matroids & 1 & 4 & 13 & 38 & 108 & 325 & 1275 & 10037 & 298491 & 31899134\tabularnewline
\hline 
simple matroids & 1 & 2 & 4 & 9 & 23 & 68 & 383 & 5249 & 232928 & 28872972\tabularnewline
\hline 
\end{tabular}
\par\end{flushleft}

\medskip

Vector matroids build the class of {\it realisable matroids.}
The problem of finding a criterion for realisability is known to be $NP$-hard \cite{R-GZ}. 

Let $M$ be a rank $r$ vector matroid. We say that matroid $M$ is {\it projectively rigid} if the space of all its rank $r$ vector realisations
$$\mathcal R(M)=\{A: M = M[A]\}/\sim$$ modulo projective equivalence is discrete and {\it strongly projectively rigid} if it consists of only one point (which means that modulo projective equivalence $M$ has a unique vector realisation).

Let $G$ be a finite Coxeter group, which is a finite group generated by the hyperplane reflections in a Euclidean space.
We say that matroid $M$ is of {\it Coxeter type} if it describes the vector configuration of the normals 
to the corresponding reflection hyperplanes (one for each hyperplane) for such a group.
For rank three Coxeter matroids we have the following result.

\begin{thm}
\label{rigid}
The matroids of Coxeter types $A_3$ and $B_3$ are strongly projectively rigid.
The matroid of type $H_3$ is projectively rigid with precisely two projectively non-equivalent vector realisations.
\end{thm}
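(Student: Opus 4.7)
\medskip
\noindent\textbf{Proof plan.} The plan is to use the projective equivalence $A'=CAD$, with $C\in GL_3(\mathbb R)$ and $D$ a nonzero diagonal matrix, to reduce any vector representation to a canonical form, and then to read off the remaining freedom from the flat structure.

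\smallskip
\emph{Step 1 (normalisation).} In each of the three Coxeter matroids $A_3$, $B_3$, $H_3$ one can pick four elements no three of which are coplanar. Using the left action of $C$ one sends three of them to $e_1,e_2,e_3$, and using the residual diagonal freedom one rescales them so that the fourth normalised column becomes $(1,1,1)^T$. After this step the only remaining gauge is a single nonzero scalar per column for each of the remaining $n-4$ columns, and any two representations that agree after this normalisation are projectively equivalent.

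\smallskip
\emph{Step 2 (encoding the flats).} Each rank-2 flat $F$ translates, in the projective plane $\mathbb{P}(V^*)$, into the collinearity of its points. For a three-element flat this is a single $3\times 3$ determinant condition; for a four-element flat it is a pair of such conditions. These are polynomial equations in the unknown entries. One then proves rigidity by showing that the finite list of such equations coming from the matroid forces the remaining columns to take a unique (or in the case of $H_3$ a finite) set of values modulo the residual one-dimensional scalings.

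\smallskip
\emph{Step 3 (case analysis).} For $A_3$ the matroid has $4$ three-element flats, after normalising $4$ columns only $2$ remain, and the collinearity equations determine them uniquely. For $B_3$ one chooses the four normalised columns so that two of them lie on one of the three four-element flats listed in the introduction; the remaining elements of that flat are then forced by the linear relations, after which the three-element flats listed above propagate uniqueness to the rest of the matrix. For $H_3$ the icosahedral flat structure reduces, after normalisation, to a single polynomial equation in one variable. This equation is quadratic with irrationality $\sqrt 5$, so it has exactly two solutions that are Galois conjugate under $\sqrt 5\mapsto -\sqrt 5$, corresponding to the two projective classes built from the golden ratio $\varphi=(1+\sqrt 5)/2$ and its conjugate $\bar\varphi=(1-\sqrt 5)/2$.

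\smallskip
\emph{Main obstacle.} The combinatorial bookkeeping of collinearity relations is entirely routine for $A_3$ and manageable for $B_3$ using the three four-element flats, but for $H_3$ one has to navigate a much larger Desargues-type incidence structure of $15$ lines on $15$ points and identify the one non-trivial equation that genuinely distinguishes the two solutions; the rest of the equations must then be checked to be automatically satisfied. The conceptual reason for the pair of solutions in the $H_3$ case is that $H_3$ is non-crystallographic, so its defining field is $\mathbb Q(\sqrt 5)$ rather than $\mathbb Q$, and the two realisations are simply the two Galois embeddings of this field into $\mathbb R$.
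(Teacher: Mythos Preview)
Your proposal is correct and follows essentially the same approach as the paper: fix four general points as a projective basis and propagate the remaining points through the flat structure, with the $H_3$ case reducing to a quadratic in one parameter whose roots are $(1\pm\sqrt5)/2$. The paper phrases the propagation geometrically as intersections of lines, writing $a_5=(a_1a_3)\wedge(a_2a_4)$ etc., and for $H_3$ derives the quadratic $x^2-x-1=0$ explicitly from equalities of cross-ratios under central projection, whereas you frame the same computations as $3\times3$ determinant conditions; your remark on the Galois conjugation $\sqrt5\mapsto-\sqrt5$ as the conceptual source of the two realisations is a nice addition the paper does not spell out.
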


\begin{proof}
Let us prove this first for $B_3$ case. Since the images $a_1, a_2, a_3, a_4$ of the elements 1,2,3 and 4 in the projective plane form a projective basis it is enough to prove that the remaining $a_5, a_6, a_7, a_8, a_9$ can be constructed uniquely. From the matroid structure we can see that $x_5$ must be an intersection point of the lines (2-flats) $a_1a_3$ and $a_2a_4.$ We denote this as
$$a_5=(a_1a_3) \wedge (a_2a_4)$$
using the general lattice theory notation. Similarly we have 
$$a_6=(a_2a_3) \wedge (a_1a_4), \quad a_7=(a_1a_2) \wedge (a_5a_6),$$
$$a_8=(a_1a_2) \wedge (a_3a_4), \quad a_9=(a_3a_4) \wedge (a_5a_6).$$
Similarly one can prove the rigidity in $A_3$ case (see Fig. 2). In both these cases the space of realisations modulo projective equivalence consists of only one point.

\begin{figure} [H]
\begin{center}
\includegraphics[scale=0.3]{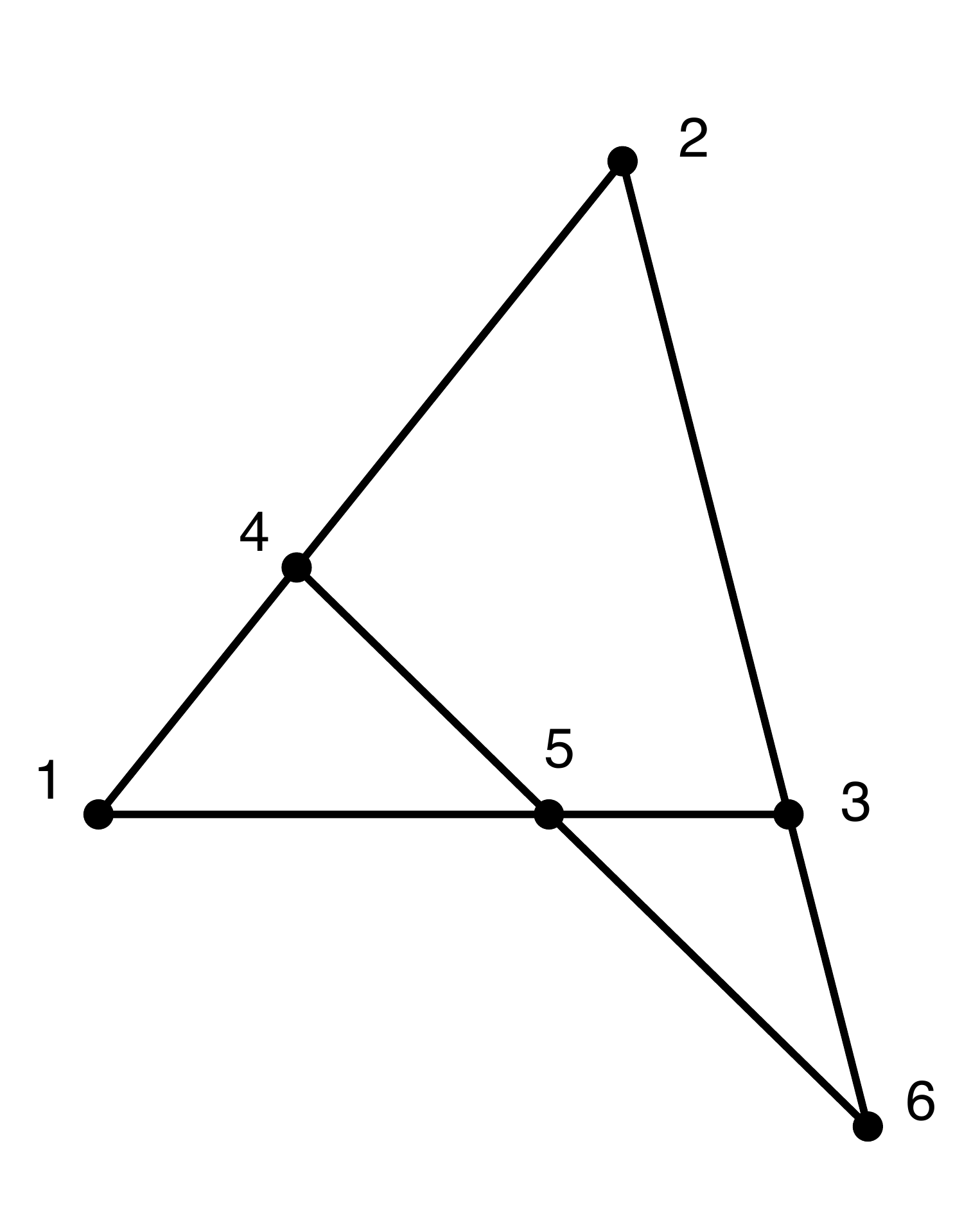} 
\caption{ Graphic representation of $A_3$-matroid}
\end{center}
\end{figure}

The $H_3$ case is more interesting.  Fig. 3 shows the graphic representation of the system $H_{3}$ in
the real projective plane $\mathbb{R}P^{2}$. 

Recall that on the projective line $\mathbb{R}P^{1}$ any three
points can be mapped into any other three via the action of the group
$PGL(2,\mathbb{R}).$ For four distinct points $p_{1},p_{2},p_{3},p_{4}$
on the projective line $\mathbb{R}P^{1}$ with homogeneous coordinates
$[x_{i},y_{i}]$ there is a projective invariant, namely {\it cross-ratio}
defined as 
\[
(p_{1},p_{2};p_{3},p_{4})=\frac{(x_{1}y_{3}-x_{3}y_{1})}{(x_{1}y_{4}-x_{4}y_{1})}\frac{(x_{2}y_{4}-x_{4}y_{2})}{(x_{2}y_{3}-x_{3}y_{2})}.
\]
If none of the $y_{i}$ is zero the cross-ratio can be expressed in
terms of the ratios $z_{i}=\frac{x_{i}}{y_{i}}$ as follows:
\[
(z_{1},z_{2};z_{3},z_{4})=\frac{(z_{1}-z_{3})}{(z_{1}-z_{4})}\frac{(z_{2}-z_{4})}{(z_{2}-z_{3})}.
\]
Since any projection from a point in the projective plane
preserves the cross-ratio of four points we have the equalities
$$(a_{6},a_{5};a_{9},a_{3})=(a_{4},a_{7};a_{10},a_{3})=(a_{5},a_{6};a_{8},a_{3}),$$
$$(a_{6},a_{5};a_{9},a_{3})=(a_{7},a_{11};a_{10},a_{3})=(a_{5},a_{8};a_{9},a_{3}).$$

Using elementary manipulations with cross-ratios one can show that that these equalities imply 
that $x=(a_{6},a_{5};a_{9},a_{3})$ satisfies the equation
$$x^{2}-x-1=0$$ with two solutions $x_{1}=\frac{1+\sqrt{5}}{2}$
and $x_{2}=\frac{1-\sqrt{5}}{2}$.

If we fix the positions of the four points $a_4, a_5, a_6, a_7$ forming a projective basis in $\mathbb{R}P^{2}$ 
we can first reconstruct 
$$a_{1}=(a_{5}a_{4})\wedge(a_{6}a_{7}),\quad a_{2}=(a_{5}a_{7})\wedge(a_{6}a_{4}), \quad a_{3}=(a_{5}a_{6})\wedge(a_{7}a_{4}).$$
Then using the knowledge of $x=(a_{6},a_{5};a_{9},a_{3})$ we can reconstruct $a_9$
and all the remaining points as
$$ a_{14}=(a_{2}a_{9})\wedge(a_{5}a_{4}), \quad a_{12}=(a_{7}a_{6})\wedge(a_{2}a_{9}), \quad a_{10}=(a_{2}a_{9})\wedge(a_{3}a_{4}),$$
$$a_{13}=(a_{9}a_{4})\wedge(a_{6}a_{7}), \quad a_{8}=(a_{2}a_{13})\wedge(a_{3}a_{6}), \quad a_{15}=(a_{2}a_{13})\wedge(a_{5}a_{4}),$$
$$ a_{11}=(a_{2}a_{8})\wedge(a_{3}a_{4}).$$
Thus we have shown that modulo projective group we have only two different vector realisations of matroid $H_3.$
\end{proof}

\begin{figure} [H]
\begin{center}
\includegraphics[scale=0.3]{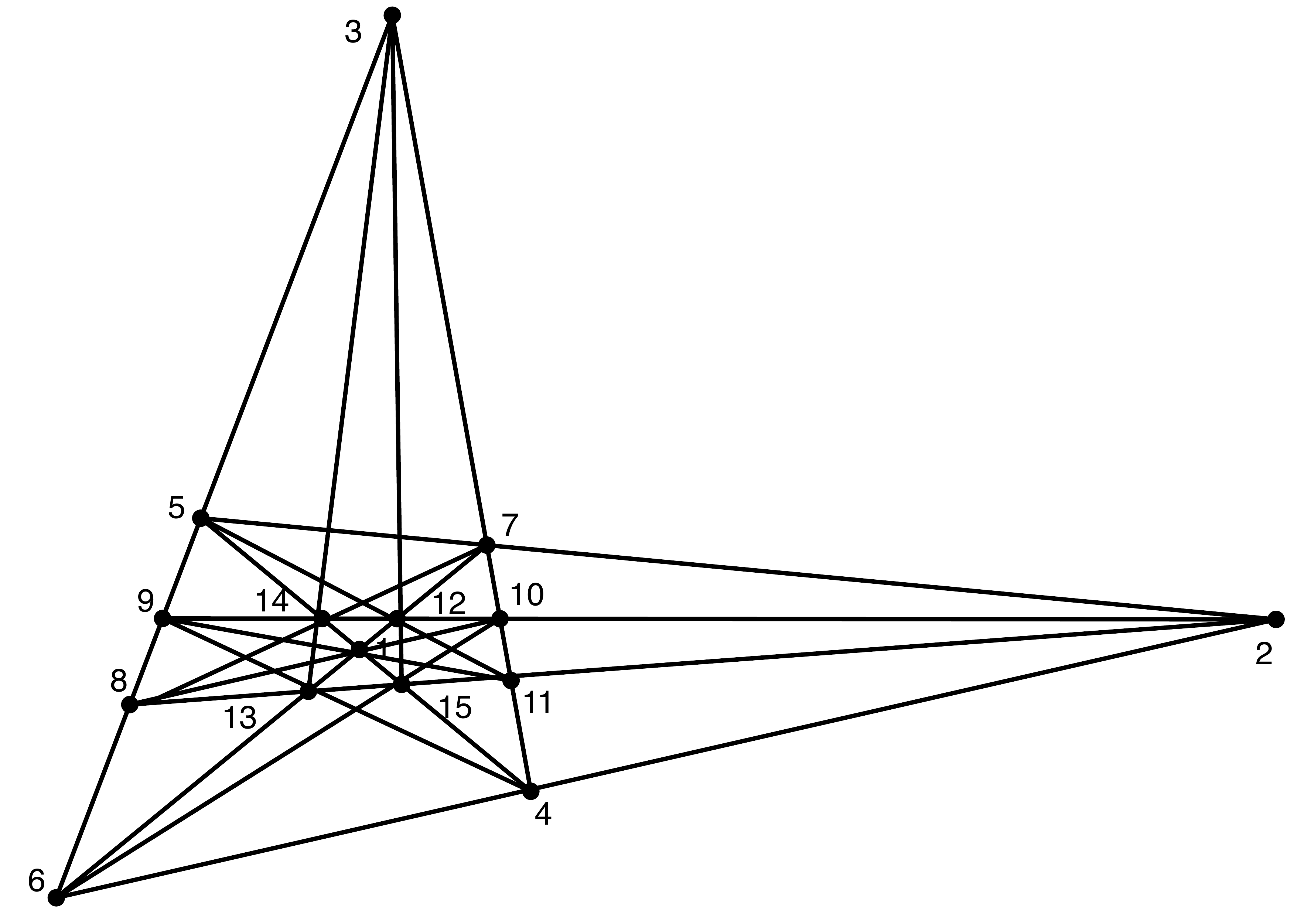} 
\caption{ Graphic representation of $H_3$-matroid}
\end{center}
\end{figure}

{\it Remark.} The existence of two projectively non-equivalent realisations is related to the existence of a symmetry of matroid $M(H_3)$, which can not be realised geometrically, see  \cite{EG}. These two realisations are related by re-ordering of the vectors and thus give rise to the equivalent $\vee$-systems.

\section{Classification Problem for $\vee$-Systems of Given Matroidal Type}

For any $\vee$-system $\mathcal A \subset V^*$ one can consider the corresponding matroid $M(\mathcal A)$, which encodes a combinatorial structure of $\mathcal A.$
Conversely, having a matroid $M$ one can look for {\it $\vee$-system realisations} $\mathcal A$ of $M$ with given combinatorial structure $M(\mathcal A)=M.$

Let $\mathcal R^{\vee}(M)$ be the set of all such realisations modulo group $G=GL(V^*)$ of linear automorphisms of $V^*.$ 

If vector matroid $M=M(A)$ is strongly projectively rigid then all its vector realisations modulo $G$ have the form $A'=AD$, or in terms of the columns $a_i, \, i=1,\dots,n$ of $A$,
$$a_i'=x_ia_i, \, \, i=1,\dots,n$$
with arbitrary non-zero parameters $x_i.$ The $\vee$-conditions form a system of nonlinear algebraic relations on the parameters $x_i \in \mathbb R\setminus 0$  and define $\mathcal R^{\vee}(M)$ as an open set of a real algebraic variety.

For a generic vector matroid this set is actually empty. For example, for $n$ vectors $a_i$ in $\mathbb R^3$ in general position the $\vee$-conditions imply that these vectors must be pairwise orthogonal, which is impossible if $n>3.$

In the case when the space $\mathcal R^{\vee}(M)$ is known to be non-empty (for example, for all vector matroids $M$ of Coxeter type) we have the question of how to describe this space effectively.

For the case of matroid of Coxeter type $A_{3}$ the answer is known \cite{CV}. The positive roots of $A_3$ system are $e_i-e_j, \,\, 1\leq i<j\leq 4$, where $e_i,\, i=1,\dots, 4$ is an orthonormal basis in $\mathbb R^4.$

Since matroid $A_3$ is strongly projectively rigid it is enough to consider the system 
\begin{equation}
\mathcal{A}=\left\{ \mu_{ij}\left(e_{i}-e_{j}\right),1\leq i<j\leq4\right\}. \label{Adef}
\end{equation}

\begin{thm}
\label{defA3}\cite{CV} 
The system (\ref{Adef}) satisfies the $\vee$-conditions if and only if the parameters satisfy the relations 
\[
\mu_{12}\mu_{34}=\mu_{13}\mu_{24}=\mu_{14}\mu_{23}.
\]
\medskip{}
All the corresponding $\vee$-systems can be parametrized as 
\[
A_{3}(c)=\left\{\sqrt{c_{i}c_{j}}(e_{i}-e_{j}),1\leq i<j\leq4\right\} ,
\]
with arbitrary positive real 
$c_{1},\ldots, c_{4}.$ \medskip{}
\end{thm}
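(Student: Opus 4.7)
The plan is to invoke the strong projective rigidity of the $A_3$-matroid (Theorem~\ref{rigid}) to reduce the problem to determining the admissible positive parameters $\mu_{ij}$ in \eqref{Adef}, and then to analyse the $\vee$-conditions flat by flat. The 2-flats of the $A_3$ configuration split into four $A_2$-triples $\{\alpha_{ij},\alpha_{ik},\alpha_{jk}\}$ (one per 3-subset of $\{1,2,3,4\}$) and three orthogonal pairs $\{\alpha_{ij},\alpha_{kl}\}$ (one per 2+2 partition of $\{1,2,3,4\}$), and the $\vee$-conditions must be checked on each.

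I would first set up $G_{\mathcal A}$ in coordinates. Identifying $V\subset\mathbb{R}^4$ with the sum-zero hyperplane, the matrix of $G_{\mathcal A}$ acts as $(M_G v)_p=\sum_{k\neq p}\mu_{pk}^2(v_p-v_k)$, i.e.\ as a weighted graph Laplacian with weights $\mu_{ij}^2$; for each covector $\alpha_{ij}$ the dual $\alpha_{ij}^\vee$ is the unique sum-zero solution of $M_G v=\mu_{ij}(e_i-e_j)$. The orthogonal-pair condition for $\{\alpha_{ij},\alpha_{kl}\}$ collapses to $G_{\mathcal A}(\alpha_{ij}^\vee,\alpha_{kl}^\vee)=\mu_{ij}\bigl((\alpha_{kl}^\vee)_i-(\alpha_{kl}^\vee)_j\bigr)=0$. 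Imposing $(\alpha_{kl}^\vee)_i=(\alpha_{kl}^\vee)_j$ in the defining equations at indices $p=i$ and $p=j$ produces a $2\times 2$ homogeneous linear system in the remaining coordinate differences whose vanishing determinant is exactly $\mu_{ik}\mu_{jl}=\mu_{il}\mu_{jk}$. Running through the three orthogonal pairs yields the relations $\mu_{12}\mu_{34}=\mu_{13}\mu_{24}=\mu_{14}\mu_{23}$ as necessary conditions.

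Assuming these relations, I would obtain the $c_i$-parametrisation and verify the remaining conditions. The relations guarantee that $c_i:=\mu_{ij}\mu_{ik}/\mu_{jk}$ is independent of the choice of $j,k\neq i$, giving positive $c_1,\dots,c_4$ with $\mu_{ij}^2=c_ic_j$; equivalently $M_G=CD-cc^{\top}$ with $C=\sum_p c_p$ and $D=\mathrm{diag}(c_p)$. Under this rank-one structure the $\alpha_{ij}^\vee$ admit an explicit closed form, and the remaining $A_2$-triple $\vee$-conditions on each $\Pi=\mathrm{span}(\alpha_{ij},\alpha_{ik},\alpha_{jk})$ reduce to a polynomial identity in the $c_i$. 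I expect this last verification to be the main technical step, since the three-term sums couple all vertices of the triangle at once; however, the rank-one form of $M_G$ forces the proportionality factor $\nu$ to be symmetric in $\{i,j,k\}$, and the identity should fall out of a short direct computation (and can independently be cross-checked against the known $\vee$-system structure of the deformed root systems in \cite{V1, SV}).
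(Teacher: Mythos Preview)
The paper does not supply its own proof of this statement; it is quoted from \cite{CV} without argument. Your proposal is correct, and in spirit it is exactly the template the paper applies in detail to the $B_3$ case (Theorem~3): exploit strong projective rigidity to reduce to scalar multipliers, impose the orthogonality $\vee$-conditions on the two-element flats to extract the necessary relations, parametrise, and then verify the proportionality conditions on the remaining flats.

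Your derivation of the necessary conditions via the $2\times 2$ Laplacian determinant is clean and correct. The only place you hedge is the sufficiency check on the $A_2$-triples, which you flag as ``the main technical step.'' In fact it is immediate once you use the rank-one structure you already identified: with $\mu_{ij}^2=c_ic_j$ and $M_G=C\,\mathrm{diag}(c)-cc^{\top}$, $C=\sum_p c_p$, one has explicitly
\[
\alpha_{ij}^{\vee}=\frac{\sqrt{c_ic_j}}{C}\Bigl(\frac{e_i}{c_i}-\frac{e_j}{c_j}\Bigr)\quad\text{(modulo the constant vector),}
\]
since $M_G\bigl(e_i/c_i-e_j/c_j\bigr)=C(e_i-e_j)$. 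Substituting into the triple sum for $\Pi_{ijk}$ collapses it to $\frac{c_i+c_j+c_k}{C}\,\alpha_{ij}^{\vee}$, giving $\nu(\Pi_{ijk})=(c_i+c_j+c_k)/C$ with no residual identity to verify; the orthogonality conditions on the three $2$-element flats also follow at once from the same formula. So the step you anticipated as delicate is in fact a two-line computation.
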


Without loss of generality, we may choose $c_{4}=1$ and consider
the restriction of the system onto the hyperplane $x_{4}=0.$  
This gives the following parametrisation of the space $\mathcal R^{\vee}(M(A_3))$ by positive 
real $c_{1},c_2, c_{3}$ as 
\[
A_{3}(c)=\begin{cases}
\sqrt{c_{i}c_{j}}(e_{i}-e_{j}), & 1\leq i<j\leq3\\
\sqrt{c_{i}}e_{i}, & i=1,2,3.
\end{cases}
\]

Consider now the case $B_3,$
corresponding to the following configuration of vectors in $\mathbb{R}^{3}$
\[
B_{3}=\begin{cases}
e_{i}\pm e_{j}, & 1\leq i<j\leq3,\\
e_{i} & i=1,\ldots, 3.
\end{cases}
\]

The following 4-parametric family of $\vee$-systems of $B_3$-type was found in \cite{CV}:
\begin{equation}
\label{CV}
\mathcal{B}\left(c, \gamma\right)=\begin{cases}
\sqrt{c_{i}c_{j}}\left(e_{i}\pm e_{j}\right), & 1\leq j<i\leq3\\
\sqrt{2c_{i}(c_{i}+\gamma)}e_{i}, & 1\leq i\leq3
\end{cases}
\end{equation}
with arbitrary positive $c_{1},c_{2},c_{3}$ and $\gamma$ such that $c_i+\gamma>0$ for all $i=1,2,3.$

\begin{thm}
Formula (\ref{CV}) gives all rank 3 $\vee$-systems of matroid type $B_3.$
\end{thm}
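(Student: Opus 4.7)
The plan is to exploit the strong projective rigidity of $M(B_3)$ established in Theorem \ref{rigid}: every vector realisation of this matroid is a diagonal rescaling of the standard $B_3$ configuration, so any $\vee$-system of matroid type $B_3$ can be written as
\[
\mathcal{A}=\{p_{ij}(e_i+e_j),\,q_{ij}(e_i-e_j),\,r_i e_i\colon 1\le i<j\le 3,\,1\le i\le 3\}
\]
with positive real parameters $p_{ij},q_{ij},r_i$. The Gram matrix of $G_{\mathcal A}$ in the basis $(e_1,e_2,e_3)$ has diagonal entries $\lambda_i=r_i^2+\sum_{j\ne i}(p_{ij}^2+q_{ij}^2)$ and off-diagonal entries $G_{ij}=p_{ij}^2-q_{ij}^2$. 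The matroid $M(B_3)$ has seven 2-flats: three four-element $B_2$-subsystems $\Pi_{ij}=\{e_i,e_j,e_i\pm e_j\}$, and four three-element $A_2$-subsystems (the all-minus flat $\{e_i-e_j,e_j-e_k,e_i-e_k\}$ and three sign-mixed analogues such as $\{e_i-e_j,e_j+e_k,e_i+e_k\}$), and the $\vee$-conditions must be verified on each.

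The first key step is to show that $p_{ij}=q_{ij}$ for every pair, i.e.\ that $G_{\mathcal A}$ is diagonal. I would use the proportionality form of the $\vee$-condition on the $B_2$-flats: the restricted form $G^{\Pi_{ij}}_{\mathcal A}$ is supported on the $(e_i,e_j)$-plane, whereas the preimage plane $\Pi_{ij}^\vee=\varphi_{\mathcal A}^{-1}(\Pi_{ij})\subset V$ is tilted away from this plane by precisely the off-diagonal entries of $G_{\mathcal A}$. Comparing the two induced $2\times 2$ forms on $\Pi_{ij}^\vee$ in an explicit basis (together with cross-checks from the mixed $A_2$-flats, which couple $p_{jk}$ with $q_{ik}$) forces $G_{ij}=p_{ij}^2-q_{ij}^2=0$. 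Writing $m_{ij}:=p_{ij}=q_{ij}$, we obtain $G_{\mathcal A}=\mathrm{diag}(\lambda_1,\lambda_2,\lambda_3)$ with $\lambda_i=r_i^2+2(m_{ij}^2+m_{ik}^2)$.

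Once $G_{\mathcal A}$ is diagonal, each remaining $\vee$-condition becomes tractable. Expanding $\sum_{\beta\in\Pi\cap\mathcal A}\beta(\alpha^\vee)\beta^\vee=\nu\alpha^\vee$ for $\alpha=m_{ij}(e_i-e_j)$ inside the all-minus $A_2$-flat and requiring the component transverse to the plane to vanish yields the cyclic identities
\[
\lambda_1 m_{23}^2=\lambda_2 m_{13}^2=\lambda_3 m_{12}^2.
\]
These multiplicative relations allow one to introduce positive parameters $c_i$ with $m_{ij}^2=c_i c_j$ and $\lambda_i=\lambda c_i$ for a common $\lambda>0$. Substituting back into the formula for $\lambda_i$ gives $r_i^2=c_i(\lambda-2(c_j+c_k))=2c_i(c_i+\gamma)$, where $\gamma:=\lambda/2-(c_1+c_2+c_3)$ is independent of $i$; positivity of $r_i^2$ becomes $c_i+\gamma>0$. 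This is exactly the parametrisation (\ref{CV}).

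Finally, a direct check confirms that the family (\ref{CV}) satisfies the remaining $\vee$-conditions on the mixed $A_2$-flats and the $B_2$-flats: each reduces to an identity in $c_i$ and $\gamma$ that follows from the diagonal shape of $G_{\mathcal A}$. The principal difficulty of the argument lies in the diagonality step, because $\Pi_{ij}^\vee$ itself depends on the off-diagonals of $G_{\mathcal A}$ that one is trying to eliminate, so the relations coupling $p_{ij},q_{ij},r_i$ across different flats must be disentangled carefully before the clean multiplicative structure of the third step can be extracted.
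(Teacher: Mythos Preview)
Your overall architecture matches the paper's: start from strong projective rigidity, reduce to scalar parameters, force the canonical form $G_{\mathcal A}$ to be diagonal, then extract the multiplicative relations $\lambda_i m_{jk}^2=\lambda_j m_{ik}^2$ and read off the $(c_i,\gamma)$ parametrisation. The post-diagonality part of your argument is essentially identical to the paper's.

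The difference, and the weak point in your proposal, is the diagonality step. You enumerate only the seven multi-element $2$-flats and try to extract $p_{ij}=q_{ij}$ from the proportionality $G^{\Pi}\sim G$ on the $B_2$- and mixed $A_2$-flats. But the $\vee$-conditions also impose six \emph{orthogonality} constraints $G_{\mathcal A}(e_i^\vee,(e_j\pm e_k)^\vee)=0$ coming from the two-element planes $\langle e_i,\,e_j\pm e_k\rangle$; you never mention these. The paper uses precisely these conditions to get diagonality: after writing them out, the six equations reduce to three relations of the shape
\[
(p_{ij}^2-q_{ij}^2)\cdot\Big(\text{a ratio of principal minors of }G_{\mathcal A}\Big)=0,
\]
and the second factor is strictly positive by positive definiteness of $G_{\mathcal A}$, so $p_{ij}=q_{ij}$ follows immediately. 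This is both conceptually cleaner and avoids the difficulty you yourself flag---that $\Pi_{ij}^\vee$ depends on the very off-diagonals you are trying to kill. Your proposed route through the $B_2$-flat proportionality may well also yield diagonality, but you have only sketched it (``Comparing the two induced $2\times2$ forms \ldots\ forces $G_{ij}=0$''), and it is the step that requires the real work. In short: your plan is sound in outline, but you have bypassed the most efficient tool (the two-element-plane orthogonality conditions) and left the hardest remaining step unproved.
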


\begin{proof}
The proof is by direct computations, but we present here the details
to show the algebraic nature of $\vee$-conditions in this example.

Since $B_3$ matroid is strongly projectively rigid, we can assume that the corresponding $\vee$-system has the form
\[
\mathcal{A}=\begin{cases}
\alpha_{ij}\left(e_{i}+e_{j}\right), & 1 \leq i<j\leq 3\\
\tilde{\alpha}_{ij}\left(e_{i}-e_{j}\right), & 1 \leq j<i\leq 3\\
\beta_{i}e_{i}, & 1 \leq i\leq 3,
\end{cases}
\]
\medskip{}
where all the parameters can be assumed without loss of generality to be positive.

To write down all $\vee$-conditions consider all two-dimensional planes 
containing at least two vectors $v_{1},v_{2}\in\mathcal{A}.$

There are 3 different types of such planes $\Pi$:

\begin{enumerate}
\item $<e_{1},e_{2}\pm e_{3}>, <e_{2},e_{1}\pm e_{3}>, <e_{3},e_{1}\pm e_{2}>,$

\item $<e_{1},e_{2},e_{1}\pm e_{2}>, <e_{1},e_{3},e_{1}\pm e_{3}>, <e_{2},e_{3},e_{2}\pm e_{3}>,$

\item $<e_{1}-e_{2},e_{2}-e_{3},e_{1}-e_{3}>,$ $<e_{1}-e_{2},e_{2}+e_{3},e_{1}+e_{3}>,$ \\
$<e_{2}-e_{3},e_{1}+e_{3},e_{1}+e_{2}>,<e_{1}-e_{3},e_{2}+e_{3},e_{1}+e_{2}>.$ 
\end{enumerate}

The corresponding form $G$ has the matrix
\[
G=\left(\begin{array}{ccc}
\alpha_{13}^{2}+\alpha_{12}^{2}+\tilde{\alpha}_{13}^{2}+\tilde{\alpha}_{12}^{2}+\beta_{1}^{2} & \alpha_{12}^{2}-\tilde{\alpha}_{12}^{2} & \alpha_{13}^{2}-\tilde{\alpha}_{13}^{2}\\
\alpha_{12}^{2}-\tilde{\alpha}_{12}^{2} & \alpha_{23}^{2}+\alpha_{12}^{2}+\tilde{\alpha}_{23}^{2}+\tilde{\alpha}_{12}^{2}+\beta_{2}^{2} & \alpha_{23}^{2}-\tilde{\alpha}_{23}^{2}\\
\alpha_{13}^{2}-\tilde{\alpha}_{13}^{2} & \alpha_{23}^{2}-\tilde{\alpha}_{23}^{2} & \alpha_{23}^{2}+\alpha_{13}^{2}+\tilde{\alpha}_{23}^{2}+\tilde{\alpha}_{13}^{2}+\beta_{3}^{2}
\end{array}\right)
\]

In case 1. the $\vee$-conditions are just the orthogonality conditions $G(\alpha^{\vee}, \beta^{\vee})=0$ 
for the corresponding two covectors $\alpha$ and $\beta$ in the plane $\Pi.$ 
We obtain the system
\[
\begin{cases}
2(\tilde{\alpha}_{23}^{2}\tilde{\alpha}_{13}^{2}+\tilde{\alpha}_{23}^{2}\tilde{\alpha}_{12}^{2}+\tilde{\alpha}_{13}^{2}\tilde{\alpha}_{12}^{2}-\alpha_{13}^{2}\alpha_{12}^{2}-\alpha_{13}^{2}\tilde{\alpha}_{23}^{2}-\alpha_{12}^{2}\tilde{\alpha}_{23}^{2})-\alpha_{13}^{2}\beta_{2}^{2}+\tilde{\alpha}_{13}^{2}\beta_{2}^{2}-\alpha_{12}^{2}\beta_{3}^{2}+\tilde{\alpha}_{12}^{2}\beta_{3}^{2}=0\\
2(\tilde{\alpha}_{23}^{2}\tilde{\alpha}_{13}^{2}+\tilde{\alpha}_{23}^{2}\tilde{\alpha}_{12}^{2}+\tilde{\alpha}_{13}^{2}\tilde{\alpha}_{12}^{2}-\alpha_{23}^{2}\alpha_{12}^{2}-\alpha_{23}^{2}\tilde{\alpha}_{13}^{2}-\alpha_{12}^{2}\tilde{\alpha}_{13}^{2})-\alpha_{23}^{2}\beta_{1}^{2}+\tilde{\alpha}_{23}^{2}\beta_{1}^{2}-\alpha_{12}^{2}\beta_{3}^{2}+\tilde{\alpha}_{12}^{2}\beta_{3}^{2}=0\\
2(\tilde{\alpha}_{23}^{2}\tilde{\alpha}_{13}^{2}-\alpha_{23}^{2}\alpha_{13}^{2}-\alpha_{23}^{2}\tilde{\alpha}_{12}^{2}-\alpha_{13}^{2}\tilde{\alpha}_{12}^{2}+\tilde{\alpha}_{23}^{2}\tilde{\alpha}_{12}^{2}+\tilde{\alpha}_{13}^{2}\tilde{\alpha}_{12}^{2})-\alpha_{23}^{2}\beta_{1}^{2}+\tilde{\alpha}_{23}^{2}\beta_{1}^{2}-\alpha_{13}^{2}\beta_{2}^{2}+\tilde{\alpha}_{13}^{2}\beta_{2}^{2}=0\\
2(\alpha_{23}^{2}\alpha_{12}^{2}+\alpha_{23}^{2}\tilde{\alpha}_{13}^{2}+\alpha_{12}^{2}\tilde{\alpha}_{13}^{2}-\alpha_{23}^{2}\alpha_{13}^{2}-\alpha_{23}^{2}\tilde{\alpha}_{12}^{2}-\alpha_{13}^{2}\tilde{\alpha}_{12}^{2})-\alpha_{13}^{2}\beta_{2}^{2}+\tilde{\alpha}_{13}^{2}\beta_{2}^{2}+\alpha_{12}^{2}\beta_{3}^{2}-\tilde{\alpha}_{12}^{2}\beta_{3}^{2}=0\\
2(\alpha_{13}^{2}\alpha_{12}^{2}+\alpha_{13}^{2}\tilde{\alpha}_{23}^{2}+\alpha_{12}^{2}\tilde{\alpha}_{23}^{2}-\alpha_{23}^{2}\tilde{\alpha}_{12}^{2}-\alpha_{13}^{2}\tilde{\alpha}_{12}^{2}-\alpha_{23}^{2}\alpha_{13}^{2})-\alpha_{23}^{2}\beta_{1}^{2}+\tilde{\alpha}_{23}^{2}\beta_{1}^{2}+\alpha_{12}^{2}\beta_{3}^{2}-\tilde{\alpha}_{12}^{2}\beta_{3}^{2}=0\\
2(\alpha_{13}^{2}\alpha_{12}^{2}+\alpha_{13}^{2}\tilde{\alpha}_{23}^{2}+\alpha_{12}^{2}\tilde{\alpha}_{23}^{2}-\alpha_{23}^{2}\tilde{\alpha}_{13}^{2}-\alpha_{12}^{2}\tilde{\alpha}_{13}^{2}-\alpha_{23}^{2}\alpha_{12}^{2})-\alpha_{23}^{2}\beta_{1}^{2}+\tilde{\alpha}_{23}^{2}\beta_{1}^{2}+\alpha_{13}^{2}\beta_{2}^{2}-\tilde{\alpha}_{13}^{2}\beta_{2}^{2}=0,
\end{cases}
\]
which can be reduced to
\[
\begin{cases}
(-\alpha_{12}^{2}+\tilde{\alpha}_{12}^{2})(\alpha_{23}^{2}+\alpha_{13}^{2}+\tilde{\alpha}_{23}^{2}+\tilde{\alpha}_{13}^{2}+\beta_{3}^{2}-\frac{(\alpha_{13}^{2}-\tilde{\alpha}_{13}^{2})^{2}}{(\alpha_{13}^{2}+\alpha_{12}^{2}+\tilde{\alpha}_{13}^{2}+\tilde{\alpha}_{12}^{2}+\beta_{1}^{2})})=0\\
(-\alpha_{13}^{2}+\tilde{\alpha}_{13}^{2})(\alpha_{23}^{2}+\alpha_{12}^{2}+\tilde{\alpha}_{23}^{2}+\tilde{\alpha}_{12}^{2}+\beta_{2}^{2}-\frac{(\alpha_{12}^{2}-\tilde{\alpha}_{12}^{2})^{2}}{(\alpha_{13}^{2}+\alpha_{12}^{2}+\tilde{\alpha}_{13}^{2}+\tilde{\alpha}_{12}^{2}+\beta_{1}^{2})})=0\\
(-\alpha_{23}^{2}+\tilde{\alpha}_{23}^{2})(\alpha_{12}^{2}+\alpha_{13}^{2}+\tilde{\alpha}_{12}^{2}+\tilde{\alpha}_{13}^{2}+\beta_{1}^{2}-\frac{(\alpha_{13}^{2}-\tilde{\alpha}_{13}^{2})^{2}}{(\alpha_{13}^{2}+\alpha_{23}^{2}+\tilde{\alpha}_{13}^{2}+\tilde{\alpha}_{23}^{2}+\beta_{3}^{2})})=0.
\end{cases}
\]
Note that the second factors in all equations are ratios of principal minors of matrix $G$ and thus must be positive, since the form $G$ is positive definite. This implies that $\alpha_{ij}=\tilde \alpha_{ij}$, which reduces the matrix $G$ to
\[
G=\left(\begin{array}{ccc}
2(\alpha_{13}^{2}+\alpha_{12}^{2})+\beta_{1}^{2} & 0 & 0\\
0 & 2(\alpha_{23}^{2}+\alpha_{12}^{2})+\beta_{2}^{2} & 0\\
0 & 0 & 2(\alpha_{23}^{2}+\alpha_{13}^{2})+\beta_{3}^{2}
\end{array}\right).
\]

In cases 2. and 3. we fix for each plane $\Pi$ a basis $v_{1},v_{2}\in \mathcal A\cap\Pi.$
The corresponding dual plane $\Pi^{\vee}$ is spanned by $v_{1}^{\vee}$
and $v_{2}^{\vee}$ and the $\vee$-condition implies
the proportionality of the restrictions of the forms $G$ and $G_{\Pi}$
onto $\Pi^{\vee}$. In our case this proportionality turns out
to be equivalent to the following system of equations:
$$
\begin{cases}
\frac{2\alpha_{12}^{2}}{2(\alpha_{23}^{2}+\alpha_{12}^{2})+\beta_{2}^{2}}-\frac{2\alpha_{13}^{2}}{2(\alpha_{23}^{2}+\alpha_{13}^{2})+\beta_{3}^{2}}=0\\
\frac{2\alpha_{12}^{2}}{2(\alpha_{13}^{2}+\alpha_{12}^{2})+\beta_{1}^{2}}-\frac{2\alpha_{23}^{2}}{2(\alpha_{23}^{2}+\alpha_{13}^{2})+\beta_{3}^{2}}=0\\
\frac{2\alpha_{13}^{2}}{2(\alpha_{13}^{2}+\alpha_{12}^{2})+\beta_{1}^{2}}-\frac{2\alpha_{23}^{2}}{2(\alpha_{23}^{2}+\alpha_{12}^{2})+\beta_{2}^{2}}=0.
\end{cases}
$$
Introducing new parameters  $c_{i}, \, i=1,2,3$ and $\gamma$
by
$$
c_{i}:=\frac{\alpha_{ij}\alpha_{ik}}{\alpha_{jk}}, \quad 
\gamma:=\frac{\beta_{3}^{2}-2c_{3}^{2}}{2c_{3}}. 
$$
we can see that these relations imply
$$\alpha_{ij}^{2}=c_{i}c_{j}, \quad \beta_{i}^{2}=2c_{i}(c_{i}+\gamma),$$
which leads to the parametrisation (\ref{CV}).
\end{proof}

For larger matroids  the direct analysis of the $\vee$-conditions is very difficult, so we consider a simpler problem about infinitesimal deformations of $\vee$-systems.

\section{Deformations of $\vee$-Systems}

Let $\mathcal A=\{ \alpha\} \subset V^*$ be a $\vee$-system realisation of matroid $M.$
Consider its smooth {\it scaling deformation} $\mathcal A(t)$ of the form
\begin{equation}
\label{def}
\mathcal A(t)=\{\alpha_t\}, \quad \alpha_t=\mu_{\alpha}(t)\alpha, \,\, \mu_{\alpha}(0)=1.
\end{equation}
For projectively rigid matroids $M$ one can always reduce any deformation to such a form.

Let $\xi_{\alpha}=\dot \mu_{\alpha}(0)$.
We are going to derive the conditions on $\xi_{\alpha}$, 
which can be considered as {\it linearised $\vee$-conditions} for such deformations.

Let $$G_t(x,y):=G_{\mathcal A(t)}(x,y)=\sum_{\a\in\mathcal{A}}\a_t(x)\a_t(y)$$ with $G_0=G=G_\mathcal A(t)$
and consider its derivative
$$
\dot G_t(x,y)=\sum_{\a\in\mathcal{A}}\dot\a_t(x)\a_t(y)+\sum_{\a\in\mathcal{A}}\a_t(x)\dot \a_t(y),
$$
which at $t=0$ gives $\dot G_0(x,y)=2X,$ where 
$$X=\sum_{\a\in\mathcal{A}}\xi_{\alpha} \a(x)\a(y).$$

Consider now the $\vee$-conditions.

For any two-dimensional plane containing only two covectors we have
$$G_t(\alpha_t^{\vee}, \beta_t^{\vee})=0.$$
Differentiating it in $t$ we have
\begin{equation}
\label{1}
\dot G_t(\alpha_t^{\vee}, \beta_t^{\vee})+G(\dot\alpha_t^{\vee}, \beta_t^{\vee})+G(\alpha^{\vee}, \dot\beta_t^{\vee})=0,
\end{equation}
where here and below by $\dot\alpha_t^{\vee}$ we mean $\frac{d}{dt}(\alpha_t^{\vee}).$

To find $G(\dot\alpha_t^{\vee}, \beta_t^{\vee})$ note that by definition of $\alpha_t^{\vee}$
$G_t(\alpha_t^{\vee},v)=\alpha_t(v)$ for any fixed vector $v \in V.$
Differentiating this with respect to $t$ we have
$$\dot G_t(\alpha_t^{\vee},v)+G_t(\dot \alpha_t^{\vee},v)=\dot \alpha_t(v)$$
which for $t=0$ gives
$$2X(\alpha_t^{\vee},v)+G(\dot \alpha_t^{\vee},v)=\xi_{\alpha}\alpha(v).$$
Thus we have
$$
G(\dot \alpha_0^{\vee},v)=\xi_{\alpha}\alpha(v)-2X(\alpha^{\vee},v).
$$
and thus
$$
G(\dot \alpha_0^{\vee},\beta^{\vee})=\xi_{\alpha}\alpha(\beta^{\vee})-2X(\alpha^{\vee},\beta^{\vee})=-2X(\alpha^{\vee},\beta^{\vee})
$$
since $\alpha(\beta^{\vee})=G(\alpha^{\vee}, \beta^{\vee})=0$ by the $\vee$-conditions.

Substituting this into (\ref{1}) we have the first linearised $\vee$-condition: 
for $\alpha,\beta$ being the only two covectors in a plane $\Pi$ we have
\begin{equation}
\label{1X}
X(\alpha^{\vee}, \beta^{\vee})=0.
\end{equation}

Let now $\Pi$$ $ be a two-dimensional plane containing
more than two covectors from $\mathcal A$ (and hence from $\mathcal A_t.$ Then from the $\vee$-conditions 
there exists $\nu = \nu(\Pi) \in \mathbb R$ such that for any $\alpha \in \Pi \cap \mathcal A, v \in V$ 
we have
\begin{equation}
\label{vee1}
G^{\Pi}(\alpha^{\vee}, v)=\nu G(\alpha^{\vee}, v),
\end{equation}
where
$G^{\Pi}(x,y)=G^{\Pi}_{\mathcal A}(x,y)=\sum_{\a\in \Pi\cap\mathcal{A}}\a(x)\a(y)$ (see \cite{FV2}).
Now assuming that $\mathcal A$ depends on $t$ as above and differentiating with respect to $t$ at $t=0$ we have as before for any $\alpha, \beta \in \mathcal A \cap \Pi$
$$
\dot G^{\Pi}(\alpha^{\vee}, \beta^{\vee})+G^{\Pi}(\dot\alpha^{\vee}, \beta^{\vee})+G^{\Pi}(\alpha^{\vee}, \dot\beta^{\vee})=\dot\nu G(\alpha^{\vee}, \beta^{\vee})+\nu \dot G(\alpha^{\vee}, \beta^{\vee})\\
+\nu G(\dot\alpha^{\vee}, \beta^{\vee})+\nu G(\alpha^{\vee}, \dot\beta^{\vee}).
$$
But from (\ref{vee1}) we have $G^{\Pi}(\dot\alpha^{\vee}, \beta^{\vee})=\nu G(\dot\alpha^{\vee}, \beta^{\vee})$ and 
$G^{\Pi}(\alpha^{\vee}, \dot\beta^{\vee})=\nu G(\alpha^{\vee}, \dot\beta^{\vee})$. Since $\dot G^{\Pi}=2X^{\Pi},$
where
$$
X^{\Pi}(x,y)=\sum_{\a\in \Pi\cap\mathcal{A}}\xi_{\alpha}\a(x)\a(y),
$$
we have
$$2X^{\Pi}(\alpha^{\vee}, \beta^{\vee})=\dot \nu G(\alpha^{\vee}, \beta^{\vee}) +2\nu X(\alpha^{\vee}, \beta^{\vee}),$$
or, eventually
\begin{equation}
\label{2}
2(X^{\Pi}-\nu X)((\alpha^{\vee}, \beta^{\vee})=\dot \nu G(\alpha^{\vee}, \beta^{\vee}).
\end{equation}
Since this is true for all $\alpha,\beta \in \Pi \cap \mathcal A$ we have the second linearised $\vee$-condition:
for any plane $\Pi$ containing more than two covectors from $\mathcal A$ we have
\begin{equation}
\label{2X}
X^{\Pi}-\nu X \sim G\mid_{\Pi^{\vee}},
\end{equation}
where the sign $\sim$ means proportionality.

Thus we have proved
\begin{thm}
The deformations of $\vee$-systems of the form (\ref{def}) are described by the linear $\vee$-conditions (\ref{1X}), (\ref{2X}).
For projectively rigid matroidal types this describes all infinitesimal deformations of a given $\vee$-system.
\end{thm}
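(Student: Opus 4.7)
The plan is to observe that the forward direction of the first assertion has essentially been carried out in the paragraphs preceding the statement: differentiating each of the two types of $\vee$-conditions at $t=0$ yields precisely (\ref{1X}) in the case of a plane $\Pi$ containing exactly two covectors and (\ref{2X}) in the case of a plane containing more than two. To upgrade this to a characterisation I would verify the converse by reading the derivation backwards. If a collection $(\xi_\alpha)_{\alpha\in\mathcal A}$ satisfies (\ref{1X}) and (\ref{2X}), then $\mu_\alpha(t)=1+t\xi_\alpha$ defines a scaling family (\ref{def}) whose $\vee$-conditions hold to first order in $t$: the expressions $G_t(\alpha_t^\vee,\beta_t^\vee)$ and $G_t^\Pi(\alpha_t^\vee,\cdot) - \nu(t)\,G_t(\alpha_t^\vee,\cdot)$ vanish at $t=0$, and by the computations above their derivatives at $t=0$ are exactly the quantities appearing in (\ref{1X}), (\ref{2X}). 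Hence the linearised conditions precisely cut out the tangent space at $\mathcal A$ of the algebraic variety $\mathcal R^\vee(M)$ intersected with the scaling locus.

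For the second assertion, let $\mathcal A(t)=\{\alpha(t)\}$ be an arbitrary smooth deformation of $\mathcal A$ through $\vee$-systems of the same matroidal type $M$. Projective rigidity of $M$ means that $\mathcal R(M)$ modulo the action of $GL(V^*)$ combined with per-vector rescalings is discrete, so the connected curve represented by $\mathcal A(t)$ in this quotient must be constant. Consequently there exist smooth families $T(t)\in GL(V^*)$ with $T(0)=\mathrm{id}$ and non-vanishing scalars $\mu_\alpha(t)$ with $\mu_\alpha(0)=1$ such that $T(t)\alpha(t)=\mu_\alpha(t)\alpha$. Because the $\vee$-system property is $GL(V^*)$-invariant, the family $T(t)\mathcal A(t)$ is again a deformation through $\vee$-systems and is of the scaling form (\ref{def}). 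Hence every infinitesimal deformation is equivalent, modulo a trivial deformation coming from $\mathfrak{gl}(V^*)$, to a scaling one, and is therefore described by (\ref{1X}) and (\ref{2X}).

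The step I expect to be the main obstacle is making the smooth lift $T(t)$ rigorous: discreteness of the space of projective realisations is a condition on the reduced set of points, whereas one needs a smooth local section of the $GL(V^*)\times (\mathbb R^*)^{|\mathcal A|}$-action to produce $T(t)$ and $\mu_\alpha(t)$ smoothly from a smooth $\mathcal A(t)$. In practice this follows from a standard implicit-function argument applied to the orbit map at the given rigid realisation, but it is the place where the hypothesis of projective rigidity genuinely enters. Once the smooth lift is secured, the remainder of the argument is pure linear algebra and reduces the general deformation problem to the scaling situation already handled.
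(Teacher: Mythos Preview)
Your proposal is correct and follows the same approach as the paper. The paper's proof consists entirely of the derivation immediately preceding the theorem (concluded with ``Thus we have proved''), which establishes only the forward direction of the first assertion by differentiating the $\vee$-conditions; the converse and the reduction via projective rigidity are not argued beyond the earlier one-line remark that ``for projectively rigid matroids $M$ one can always reduce any deformation to such a form.'' Your write-up supplies the missing converse (via $\mu_\alpha(t)=1+t\xi_\alpha$) and expands the rigidity reduction, and your caveat about the smooth lift of $T(t)$ is well taken---the paper simply does not address it.
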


Case by case check of the $\vee$-systems from the Appendix leads to the following
\begin{thm}
All rank three vector matroids corresponding to known irreducible 3D $\vee$-systems are projectively rigid.
The $H_3$ matroid is the only one, which is not strongly projectively rigid.
\end{thm}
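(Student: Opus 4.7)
The plan is to proceed by a case-by-case verification, running through every matroid appearing in the Appendix's catalogue. For the Coxeter types $A_3$, $B_3$, $H_3$ the statement is already contained in Theorem \ref{rigid}, so the work lies in checking each remaining matroidal type (those arising from the deformed root systems and exceptional 3D $\vee$-systems of \cite{FV, FV2}) and showing that, like $A_3$ and $B_3$, it admits a unique projective realisation.

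The method will exactly mimic the proof of Theorem \ref{rigid}. For each matroid $M$ in the list I would proceed as follows. First, select four points $a_{i_1}, a_{i_2}, a_{i_3}, a_{i_4} \in X$ no three of which are collinear in $M$; by the fundamental theorem of projective geometry these may be placed on any prescribed projective basis in $\mathbb{R}P^2$. Then consult the list of 2-flats of $M$ and reconstruct the remaining points sequentially, each as a meet $a_k = (a_i a_j) \wedge (a_p a_q)$ of two already-constructed lines from the flat data. If this synthesis succeeds for every remaining point, the realisation is unique modulo $PGL(3,\mathbb{R})$ and $M$ is strongly projectively rigid. The only way the procedure can fail is if at some stage no pair of fully constructed lines determines the next point via a 2-flat; in that case a projective invariant (a cross-ratio on some collinear quadruple) remains free and one must use additional incidence relations among the flats to pin it down, as was done for $H_3$.

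At that point the question becomes purely algebraic: the remaining cross-ratio $x$ will satisfy a polynomial equation coming from equating the various projections that the matroid forces to coincide. For all the non-Coxeter matroids in the Appendix the expectation (to be verified case by case) is that this equation has a unique admissible real root, giving strong rigidity, while for $H_3$ the equation is $x^2 - x - 1 = 0$ with two admissible roots, giving projective rigidity but not strong rigidity. To dispense with the possibility of a non-discrete component of $\mathcal R(M)$ in the remaining cases, one checks that the number of independent incidence relations coming from the 2- and 3-flats always exceeds the number of free cross-ratio parameters, so no continuous family of realisations can exist.

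The main obstacle is purely bookkeeping: the catalogue is sizeable, and for each matroid one must carefully exhibit a reconstruction recipe and verify that the incidence constraints are compatible and sufficiently constraining. For the larger exceptional matroids, the choice of initial projective basis and of the order in which to perform the meets matters for keeping the argument transparent, and some matroids may require tracking several auxiliary cross-ratios simultaneously before a polynomial condition forces them all. Once the projective reconstruction is carried out for every entry in the Appendix, both assertions of the theorem follow.
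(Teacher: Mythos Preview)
Your proposal is correct and follows essentially the same approach as the paper: a case-by-case projective reconstruction of each matroid in the Appendix using the method of Theorem~\ref{rigid}. The paper is in fact terser than you are---it simply asserts the case-by-case check and illustrates it only for the largest system $(H_4,A_1)$, exhibiting an explicit sequence of meets from a chosen projective basis; in practice the direct reconstruction succeeds for every non-$H_3$ entry, so the cross-ratio contingency you describe never actually arises outside the $H_3$ case already handled in Theorem~\ref{rigid}.
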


Let us show that the largest known case $(H_4, A_1)$ is strongly projectively rigid. 
We will use the labelling of the points shown at the last figure of the paper.
Fix the positions of the four points $6, 25, 27, 30$ forming a projective basis in $\mathbb{R}P^{2}.$ 
After this all the remaining points can be reconstructed uniquely as follows:
$$31=(25,30)\wedge(6,27), \quad 29=(25,27)\wedge(6,30),\quad 12=(6,25)\wedge(27,30),$$
$$9=(30,27)\wedge(29,31), \quad 17=(25,21)\wedge(6,30), \quad 28=(17,27)\wedge(12,25),$$
$$4=(28,31)\wedge(12,30), \quad 24=(17,27)\wedge(25,30), \quad 3=(25,30)\wedge(28,29),$$
$$23=(3,28)\wedge(27,31), \quad 11=(27,31)\wedge(28,30), \quad 19=(3,9)\wedge(4,8),$$
$$1=(25,30)\wedge(11,12), \quad 16=(7,11)\wedge(1,19), \quad 20=(16,25)\wedge(4,8),$$
$$7=(1,4)\wedge(25,27), \quad 21=(25,17)\wedge(7,11), \quad 10=(7,31)\wedge(4,25),$$
$$26=(7,31)\wedge(4,16), \quad 14=(4,16)\wedge(11,31), \quad18=(21,31)\wedge(28,25),$$
$$22=(7,31)\wedge(3,28), \quad 15=(7,31)\wedge(24,28), \quad 2=(15,19)\wedge(10,11),$$
$$5=(24,28)\wedge(21,31), \quad 13=(7,25)\wedge(12,24), \quad 8=(1,12)\wedge(10,16).$$

A direct computation shows that in case of the classical systems
$A_{3}$ and $B_{3}$ the linear system (\ref{1}),(\ref{2}) has corank four
in agreement with the results of the previous section.

The analysis of the linearised $\vee$-conditions
for the families $D_{3}(t,s)$, $F_{3}(t)$, $G_{3}(t)$ and $(AB_{4}(t),A_{1})_{1,2}$
shows that these families of $\vee$-systems can not be extended.

Consider, for example, the family of $\vee$-systems $D_3(t,s)$ from \cite{FV2} with
\[A=
\left(\begin{array}{ccccccc}
1 & 1 & 1 & 1 & 0 & 0 & \sqrt{2}\sqrt{s+t-1}\\
1 & -1 & -1 & 1 & \sqrt{2}\sqrt{\frac{s-t+1}{t}} & 0 & 0\\
1 & -1 & 1 & -1 & 0 & \sqrt{2}\sqrt{\frac{-s+t+1}{s}} & 0
\end{array}\right)
\]
with real parameters $s,t$ such that $|s-t|<1, \, s+t>1.$
Matrices $G$ and $X$ have the form
\[
G=\left(\begin{array}{ccc}
2(s+t+1) & 0 & 0\\
0 & \frac{2(s+t+1)}{t} & 0\\
0 & 0 & \frac{2(s+t+1)}{s}
\end{array}\right)
\]
$$
\resizebox{1.11\hsize}{!}
{$
X=\left(\begin{array}{ccc}
\xi_{1}+\xi_{2}+\xi_{3}+\xi_{4}+2\xi_{7}(s+t-1) & \xi_{1}-\xi_{2}-\xi_{3}+\xi_{4} & \xi_{1}-\xi_{2}+\xi_{3}-\xi_{4}\\
\xi_{1}-\xi_{2}-\xi_{3}+\xi_{4} & \xi_{1}+\xi_{2}+\xi_{3}+\xi_{4}+\frac{2(s+1)}{t}\xi_{5}-2\xi_{5} & \xi_{1}+\xi_{2}-\xi_{3}-\xi_{4}\\
\xi_{1}-\xi_{2}+\xi_{3}-\xi_{4} & \xi_{1}+\xi_{2}-\xi_{3}-\xi_{4} & \xi_{1}+\xi_{2}+\xi_{3}+\xi_{4}+\frac{2(-s+{t}+1)}{s}\xi_{6}
\end{array}\right)$}.
$$

For the three covectors $\alpha_{5},\alpha_{6},\alpha_{7}$  the first linearised $\vee$-conditions $X(\alpha_{i}^{\vee}, \alpha_{j}^{\vee})=0$, $i,j=5,6,7$ are equivalent to
\[
\xi_{1}
+\xi_{2}
-\xi_{3}
-\xi_{4}
=0,
\]
\[
\xi_{1}
-\xi_{2}
-\xi_{3}
+\xi_{4}
=0,
\]
\[
\xi_{1}
-\xi_{2}
+\xi_{3}
-\xi_{4}
=0,
\]
which imply that $\xi_1=\xi_2=\xi_3=\xi_4.$

For the planes with more than two covectors we have the linear system
\begin{multline*}
  (s+t) (\xi_{1} (s+t+1)+\xi_{2} (s+t-3)+t (\xi_{3} (2 s+3)+\xi_{4} (2 s-1)-2 (s (\xi_{5}+\xi_{6})+\xi_{5}+\xi_{7})) \\  
  +t^2 (-(\xi_{3}+\xi_{4}-2 \xi_{5}))-s (s (\xi_{3}+\xi_{4}-2 \xi_{6})+\xi_{3}-3 \xi_{4}+2 (\xi_{6}+\xi_{7}))+2 \xi_{7})=0,
\end{multline*}
\begin{multline*}
(s + t) ((s-1) (\xi_{2} -\xi_{7} + s(\xi_{6} - \xi_{4})) + t(\xi_{2} +\xi_{3} - \xi_{5} - \xi_{7} +s (\xi_{3} + \xi_{4} - \xi_{5} - \xi_{6})) \\ +  t^2(-\xi_{3} + \xi_{5}))=0,
\end{multline*}      
\begin{multline*}
(s+1) t (s (\xi_{1}-\xi_{2}+\xi_{3}+3 \xi_{4}-2 (\xi_{5}+\xi_{6}))+\xi_{1}+3 \xi_{2}+\xi_{3}-\xi_{4}-2 (\xi_{5}+\xi_{7}))+(s+1) \\ 
	 t^2 (\xi_{1}-3 \xi_{3}+2 \xi_{5}) +(s^2-1) (-\xi_{2} (s-1)-\xi_{4} (s-1)+2 \xi_{6} s-2 \xi_{7})=0,
\end{multline*}
\begin{multline*}
(s+1) t (\xi_{2}+s (\xi_{3}+\xi_{4}-\xi_{5}-\xi_{6})+\xi_{3}-\xi_{5}-\xi_{7})+(s^2-1) (\xi_{2}+s (\xi_{6}-\xi_{4})-\xi_{7}) \\
	+(s+1) t^2   (-(\xi_{3}-\xi_{5}))=0,
\end{multline*}
\begin{multline*}
(t+1) (-t (s (\xi_{1}+\xi_{4}-2 (\xi_{5}+\xi_{6}))+\xi_{3} (3 s+2)-2 (\xi_{5}+\xi_{7}))+s (-s (\xi_{1}+2 \xi_{6})+\xi_{3} \\
+\xi_{4} (3 s-1)+2   (\xi_{6}+\xi_{7}))-\xi_{1} s+\xi_{2} s (t-3)+(t-1)^2)+t^2 (\xi_{3}-2 \xi_{5})+\xi_{3}-2 \xi_{7}=0,
\end{multline*}
   \begin{multline*}
(t+1)(t (\xi_{2}+s (\xi_{3}+\xi_{4}-\xi_{5}-\xi_{6})+\xi_{3}-\xi_{5}-\xi_{7})+(s-1) (\xi_{2}+s (\xi_{6}-\xi_{4}) \\
-\xi_{7})+t^2   (\xi_{5}-\xi_{3})=0,
\end{multline*}
\begin{multline*}
(t+1)(\xi_{1} (s (t-3)+(t-1)^2)-t (s (\xi_{2}+\xi_{3}-2 (\xi_{5}+\xi_{6}))+\xi_{4} (3 s+2)-2 (\xi_{5}+\xi_{7})) \\
	-s (s (\xi_{2}-3   \xi_{3}+2 \xi_{6})+\xi_{2}+\xi_{3}-\xi_{4}-2 (\xi_{6}+\xi_{7}))+t^2 (\xi_{4}-2 \xi_{5})+\xi_{4}-2 \xi_{7})=0,
\end{multline*}
 \begin{multline*}
(t+1) (t (\xi_{1}+s (\xi_{3}+\xi_{4}-\xi_{5}-\xi_{6})+\xi_{4}-\xi_{5}-\xi_{7})+(s-1) (\xi_{1}+s (\xi_{6}-\xi_{3}) \\
-\xi_{7})+t^2   (\xi_{5}-\xi_{4}))=0,
\end{multline*}
\begin{multline*}
(s+1) t (\xi_{1} (s-3)-s (\xi_{2}+3 \xi_{3}+\xi_{4}-2 (\xi_{5}+\xi_{6}))-\xi_{2}+\xi_{3}-\xi_{4}+2 (\xi_{5}+\xi_{7})) \\
+\left(s^2-1\right) (\xi_{1}(s-1)+\xi_{3} (s-1)-2 \xi_{6} s+2 \xi_{7})+(s+1) t^2 (-(\xi_{2}-3 \xi_{4}+2 \xi_{5}))=0,
\end{multline*}
\begin{multline*}
(s+1) t (\xi_{1}+s (\xi_{3}+\xi_{4}-\xi_{5}-\xi_{6})+\xi_{4}-\xi_{5}-\xi_{7})+(s^2-1) (\xi_{1}+s (\xi_{6}-\xi_{3}) \\
-\xi_{7})+(s+1) t^2   (-(\xi_{4}-\xi_{5}))=0,
\end{multline*}
\begin{multline*}
(s+t) (t (\xi_{1} (2 s+3)+\xi_{2} (2 s-1)+\xi_{4}-2 (s (\xi_{5}+\xi_{6})+\xi_{5}+\xi_{7}))+s (-s (\xi_{1}+\xi_{2}-2 \xi_{6}) \\
+\xi_{4}-2   (\xi_{6}+\xi_{7}))+t^2 (-(\xi_{1}+\xi_{2}-2 \xi_{5}))-\xi_{1} s+3 \xi_{2} s+\xi_{3} (s+t+1)-3 \xi_{4}+2 \xi_{7})=0,
\end{multline*}
 \begin{multline*}
  (s+t) (t (s (\xi_{1}+\xi_{2}-\xi_{5}-\xi_{6})+\xi_{1}+\xi_{4}-\xi_{5}-\xi_{7})+t^2 (\xi_{5}-\xi_{1})+(s-1) (s   (\xi_{6}-\xi_{2}) \\
  +\xi_{4}-\xi_{7}))=0.
  \end{multline*}

A check with Mathematica shows that the co-rank of the total system is three for every admissible values of $s$ and $t$.  The free parameters correspond to two deformation parameters $s$ and $t$ and the uniform scaling of the system.

This approach with the use of Mathematica (see the programme in Appendix B to \cite{Ver}) allows us to prove that the isolated examples of $\vee$-systems 
from the list \cite{FV2} are indeed isolated.

\begin{thm}
\label{(Rigid systems)} There are no non-trivial deformations of the $\vee$-systems
$(E_{7},A_{1}^{2}\times A_{2})$, $(E_{8},A_{2}\times A_{3})$, $(E_{8},A_{2}^{2}\times A_{1})$,
$(E_{8},A_{1}^{3}\times A_{2})$, $(E_{8},A_{1}^{2}\times A_{3})$,
$(E_{8},A_{1}\times A_{4})$, $(H_{4},A_{1})$ and $H_{3}.$
\end{thm}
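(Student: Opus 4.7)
The plan is to reduce the statement to a finite, purely linear–algebraic co-rank computation carried out case by case. By the preceding theorem, each of the matroids in the list is projectively rigid, so every infinitesimal deformation of the corresponding $\vee$-system realisation can be put in the scaling form (\ref{def}); it therefore suffices to show that for each listed system the linearised $\vee$-conditions (\ref{1X}) and (\ref{2X}) admit only the trivial one-parameter family $\xi_\alpha \equiv c$, which corresponds to uniform scaling and is always a solution.

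First, for each system I would fix the explicit coordinate realisation $\mathcal A \subset V^*$ taken from the catalogue in the Appendix and enumerate the rank-$2$ flats of the associated matroid $M(\mathcal A)$, that is, the two-dimensional planes $\Pi \subset V^*$ containing at least two covectors from $\mathcal A$. Second, for each such plane I would translate the linearised $\vee$-conditions into explicit scalar equations in the unknowns $\{\xi_\alpha\}_{\alpha \in \mathcal A}$: planes with exactly two covectors contribute one equation of the form $X(\alpha^\vee,\beta^\vee)=0$, while planes with $k\geq 3$ covectors contribute three independent equations, obtained by writing the proportionality $X^\Pi-\nu X \sim G|_{\Pi^\vee}$ in a basis of $\Pi^\vee$ (for example, the vanishing of the two traceless components and of one off-diagonal entry).

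Third, I would assemble the resulting homogeneous linear system, whose coefficient matrix has size roughly (number of planes times a small constant) $\times$ $|\mathcal A|$, and compute its co-rank with Mathematica, using the same routine that was applied to the family $D_3(t,s)$ in the preceding discussion (cf.\ Appendix B to \cite{Ver}). In each of the eight listed cases the goal is to verify that this co-rank equals exactly $1$, the single remaining parameter being the overall scaling. This reproduces the outcome already demonstrated for $D_3(t,s)$, where the co-rank was $3$, accounting correctly for the two genuine deformation parameters $s,t$ together with the scaling; for the rigid systems, with no free parameter left, the co-rank drops to $1$.

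The main obstacle is not conceptual but combinatorial and computational: the largest system $(H_4,A_1)$ contains $31$ covectors and many rank-$2$ flats of various sizes, so the enumeration of planes and the bookkeeping of which covectors lie in each plane must be generated from the projective-basis reconstruction described in the theorem statement and handled symbolically. Care is needed in writing (\ref{2X}) in coordinates, because the proportionality constant $\nu$ and its linearised counterpart $\dot\nu$ must be eliminated before feeding the conditions into Mathematica; once this is done correctly, the final step is a routine rank computation over $\mathbb Q$ (or, where irrationalities occur as in $H_3$ and $(H_4,A_1)$, over $\mathbb Q(\sqrt 5)$).
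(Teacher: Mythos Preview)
Your proposal is correct and coincides with the paper's own argument: the paper likewise appeals to projective rigidity to reduce to scaling deformations, writes out the linearised $\vee$-conditions (\ref{1X}),(\ref{2X}) for each system, and then verifies with a Mathematica routine (Appendix~B of \cite{Ver}) that the resulting homogeneous system has co-rank~$1$ in each of the eight cases. One small bookkeeping correction: for a plane $\Pi$ with $k\ge 3$ covectors the proportionality $X^\Pi-\nu X \sim G|_{\Pi^\vee}$ imposes only two independent linear conditions (the traceless part of a $2\times 2$ symmetric matrix is two-dimensional, and the off-diagonal entry is already one of those two components), not three; this does not affect the method or the outcome.
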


\section{Matroidal Structure of $\vee$-Systems and Projective Geometry} 

The main part of the classification problem is to characterise the corresponding class of possible matroids. 
This question was addressed by Lechtenfeld et al in \cite{LST}. They
developed a Mathematica program, which generates simple and connected matroids of a given size of the ground set $X$. If a generated matroid has a vector representation, they have checked first if the orthogonality $\vee$-conditions are possible to satisfy before verification of the $\vee$-conditions for the non-trivial planes (all $2$-flats). 
For matroids with $n<10$ elements the orthogonality conditions are strong enough to identify all matroids corresponding to $\vee$-systems in dimensions three. All the identified $\vee$-systems turned out to be part of the list in \cite{FV}. 

For larger matroids this approach seems unworkable because of the unreasonably large computer time required.
This means that we need a more conceptual approach, which is still missing. 

In this section we collect some partial observations 
based on the analysis of the known 3D $\vee$-systems and projective geometry.

We start with the notion of {\it extension and degeneration} for $\vee$-systems.

Let $\mathcal{A_{\mathrm{1}}\mathrm{,}\mathbf{\mathcal{A}_{\mathrm{2}}\subset V^{*}}}$
be two $\vee$-systems. If $\mathcal{A}_{2}  \subset \mathcal{A}_{1}$
we call $\mathcal{A}_{1}$ an {\it extension} of $\mathcal{A}_{2}$.

Let $\vee$-system $\mathcal{A} = \mathcal{A}_{t}$ depend on the parameter $t.$ Assume that for some $t=t_{0}$ one or more of the covectors $\alpha \in \mathcal{A}_{t_{0}}$ vanishes.
In that case the system   $\mathcal{\widetilde{A}} = \underset{t\rightarrow{t_0}}{lim}\mathcal{A}(t)$ is called {\it degeneration} of $\mathcal{A}(t).$
A reverse process we will call {\it regeneration}.

In the tables below we give the list of all extensions and degenerations for known three-dimensional $\vee$-systems from the catalogue in the Appendix.

\begin{table}[H]
	\centering
	\caption{Extensions of known 3D $\vee$-systems.}
\begin{tabular}{|c|c|c|}
\hline 
$\vee$-system  & Extension & The added covectors\tabularnewline
\hline 
\hline 
 $A_{3}$ & $F_{3}(t)$  & $\{1,2,3,10,11,12,13\}$\tabularnewline
\hline 
$A_{3}$ & $(AB_{4}(t),A_{1})_{2}$  & $\{1,2,3,10\}$\tabularnewline
\hline 
$(E_{6},A_{1}^{3})$ & $(E_{8},A_{1}^{3}\times A_{2})$  & $\{3,4,5,6,9,14,15,16,17\}$\tabularnewline
\hline 
 $G_{3}(\frac{3}{2})$ & $(E_{8},A_{1}^{3}\times A_{2})$ &$\{1,2,10,11,12,13\}$\tabularnewline
\hline 
 $H_{3}$ &$(H_{4},A_{1})$& $\{4,5,6,7,20,21,22,23,24,25,26,27,28,29,30\}$\tabularnewline
\hline 
\end{tabular}
\end{table}

\begin{table}[H]
	\centering
	\caption{Degenerations of known 3D $\vee$-systems.}
\begin{tabular}{|c|c|c|}
\hline 
$\vee$-system & Degeneration & The vanishing covectors\tabularnewline
\hline 
\hline 
$F_{3}(t)$ & $\underset{t\rightarrow 0}{lim}\, F_{3}(t)\sim B_{3}(\sqrt{2})$ & $\{11,12,13,14\}$\tabularnewline
\hline 
$F_{3}(t)$ & $\underset{t\rightarrow \infty}{lim} F_{3}(t) \sim D_3(1,1)$ & $\{8,9,10,11,12,13\}$\tabularnewline
\hline 
$B_{3}(c,c,c;\gamma)$ & $\underset{\gamma\rightarrow c}{lim} \,B_{3}(c,c,c;\gamma)\sim A_{3}$ & $\{1,2,3\}$\tabularnewline
\hline 
$(AB_{4}(t),A_{1})_{2}$ & $\underset{t\rightarrow\infty}{lim}\,(AB_{4}(t),A_{1})_{2}\sim D_3(1,1)$ & $\{5,7,9\}$\tabularnewline
\hline 
$(AB_{4}(t),A_{1})_{2}$ & $\underset{t\rightarrow0}{lim}\,(AB_{4}(t),A_{1})_{2}\sim B_{3}(\sqrt{2})$ & $\{10\}$\tabularnewline
\hline 
$(AB_{4}(t),A_{1})_{1}$ & $\underset{t\rightarrow\frac{1}{\sqrt{2}}}{lim}\,(AB_{4}(t),A_{1})_{1}\sim(E_{6},A_{1}^{3})$ & $\{3\}$\tabularnewline
\hline 
$\frac{1}{t}(AB_{4}(t),A_{1})_{1}$ & $\underset{t\rightarrow\infty}{lim}\,(AB_{4}(t),A_{1})_{1}\sim B_{3}(\sqrt{2})$ & $\{5,6\}$\tabularnewline
\hline 
$G_{3}(t)$ & $\underset{t\rightarrow\frac{1}{2}}{lim}\,G_{3}(t)\sim(E_{6},A_{1}^{3})$ & $\{4,5,6\}$\tabularnewline
\hline 
$D_{3}(t,s)$ & $\underset{t\rightarrow\ (s+1)}{lim}\,D_{3}(t,s)\sim A_{3}$ & $\{5\}$\tabularnewline
\hline 
$B_{3}(c_1,c_2,c_3,\gamma)$ & $\underset{c_1\rightarrow\ -\gamma}{lim}\,B_{3}(c_1,c_2,c_3,\gamma)\sim (E_6,A_3)$ & $\{1\}$\tabularnewline
\hline 
$B_{3}(c_1,c_2,c_3,\gamma)$ & $\underset{c_1,c_2\rightarrow\ -\gamma}{lim}\,B_{3}(c_1,c_2,c_3,\gamma)\sim D_3(1,1) $ & $\{1,2\}$\tabularnewline
\hline
\end{tabular}
\end{table}

More relations between $\vee$-realisable matroids can be seen using projective geometry, in particular projective duality.
We will demonstrate this on few examples.

We start with the matroid of the $\vee$-system of type $A_{3}.$ In projective geometry (see e.g. \cite{HCV}) it is known as the simplest configuration  $(6_{2}4_{3})$ consisting of four lines with three points on each line and two lines passing through every point. Its projective dual is a complete quadrangle $(4_{3}6_{2})$ consisting of four points, no three of which are collinear and six lines connecting each pair of points (see figure \ref{D1}). If we extend the dual configuration  by adding the remaining three points of intersections of lines (the points marked white in the graphic), we come to the projective configuration of seven points and six lines, corresponding to the matroid of the $\vee$-system of type $D_3$. 

\begin{figure} [H]
\begin{center}
\includegraphics[scale=0.25]{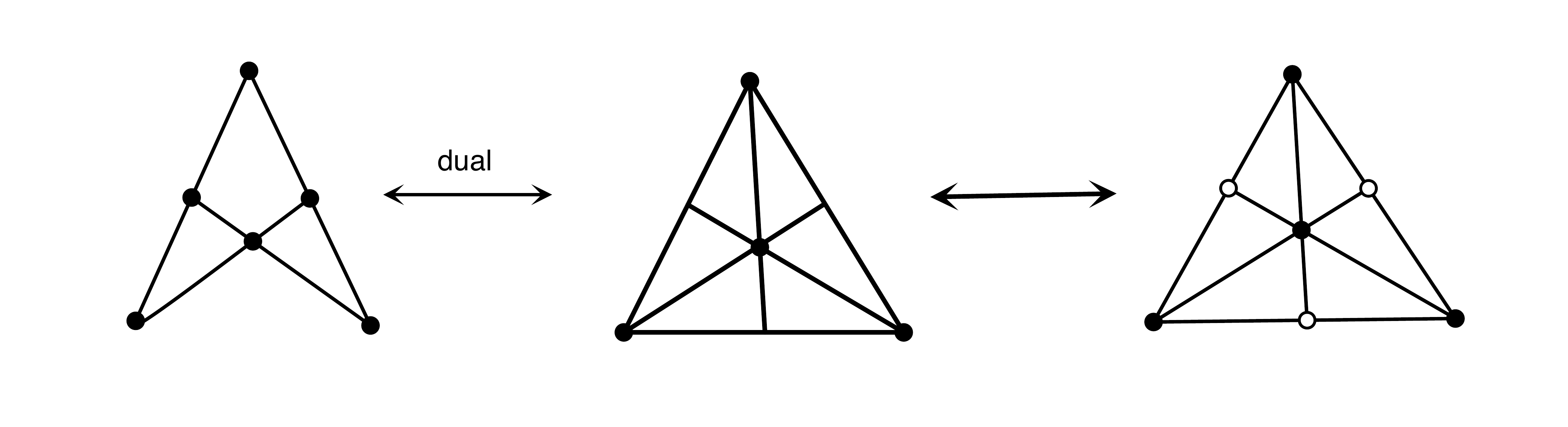}
\caption{The projective configuration of $A_3$ type, its dual and the extended configuration corresponding to $\vee$-system of type $D_3$.} 
\label{D1}
\end{center}
\end{figure}

We can proceed the construction by taking the dual of the new obtained configuration and extending it by adding the missing points of intersections of lines. The result is the configuration of nine points and seven lines realisable as $B_{3}$-type $\vee$-system (see figure \ref{D2}).

\begin{figure} [H]
\begin{center}
\includegraphics[scale=0.23]{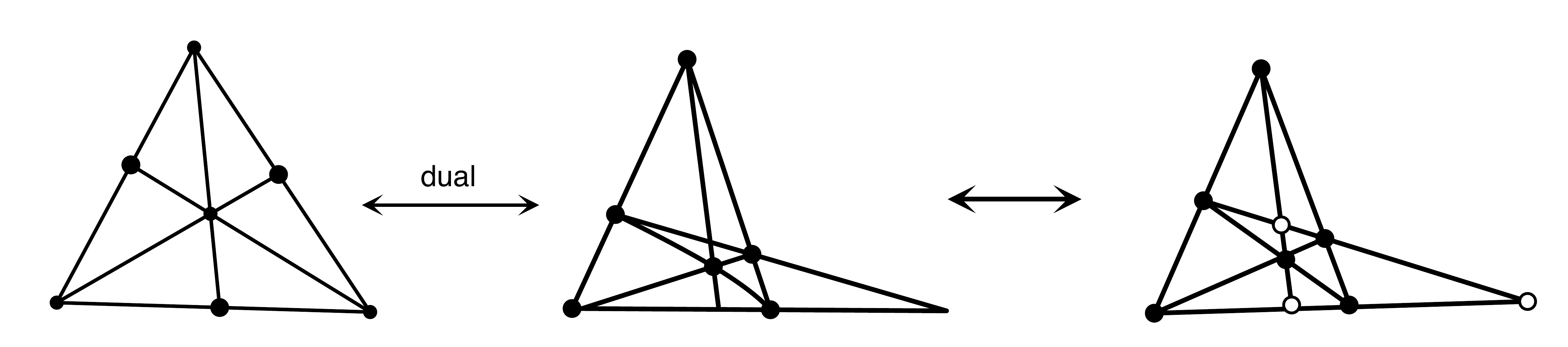}
\caption{$D_{3}$ configuration, its projective dual and the extended configuration of matroidal type $B_3.$} 
\label{D2}
\end{center}
\end{figure}

The next step of the construction is demonstrated in figure \ref{D3}. The dual configuration was obtained from the configuration $D_{3}$ by adding all missing lines passing through any pair of points. Applying Desargue's theorem to two marked triangles we see that the white marked points of the extended configuration are collinear. The new configuration of $10$ points and $10$ lines is self-dual and corresponds to the $\vee$-system of type $(AB_{4} (t),A_{1})_{2}$ (see system 9.6 in the Appendix).

\begin{figure} [H]
\begin{center}
\centerline{ \includegraphics[scale=0.23]{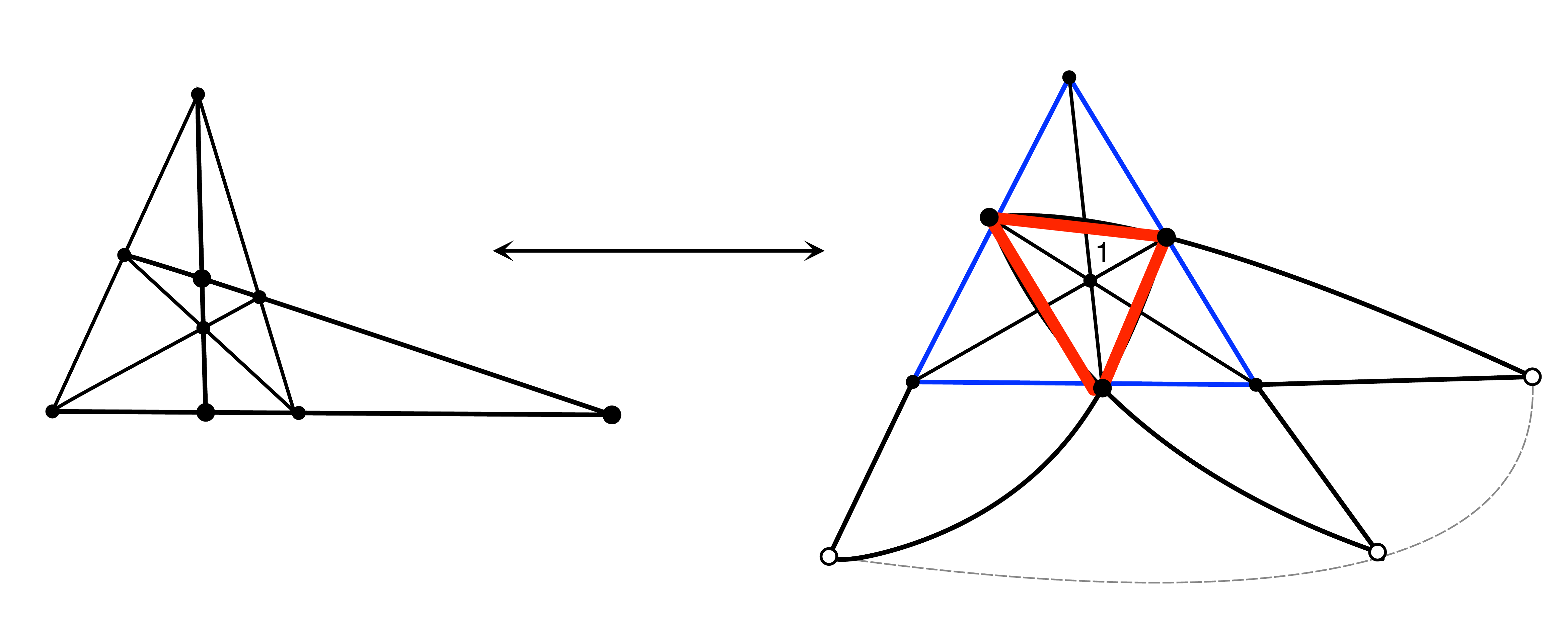}}
\caption{$B_{3}$-configuration and its schematic extended projective dual, corresponding to $\vee$-system of type $(AB_{4} (t),A_{1})_{2}$. Due to Desargue's theorem the three added (white) points are collinear.} 
\label{D3}
\end{center}
\end{figure}

One can check that the adding of three intersection points with red lines and three lines connecting them pairwise leads to the configuration of $F_3$ type (see system 9.10). However, if we add also the three intersection points with dotted line then we come to the configuration which can be shown to be not $\vee$-realisable.

Although this relation with projective configurations and theorems in projective geometry looks quite promising, we see that the extension procedure is not straightforward and does not guarantee the $\vee$-realisability of the resulting configuration. 

We conclude with the following conjecture about 2-flats with precisely four points.
Recall that four points $A,B,C,D$ on a projective line form a {\it harmonic range} if the cross-ratio
$(A,B;C,D)=-1.$ The corresponding pencil of four lines on a plane is called {\it harmonic bundle.}
The $B_3$ configuration provides a geometric way to construct harmonic ranges: 
on Fig. 1 the points 3,4,9,8 always form a harmonic range.
Note that the covectors 8 and 9 are orthogonal and determine 
the bisectors for the lines corresponding to covectors 3 and 4.
Case by case check of the known 3D $\vee$-systems suggests 
that the same is true in general.

\begin{conj}
Let $\mathcal{A}$ be a $\vee$-system and $\Pi_{\mathcal{A}}\subset V^{*}$
be two-dimensional plane containing exactly four covectors $\alpha_{i}\in\mathcal{A}$,
$i=1,\ldots,4$, then the corresponding four lines form a harmonic
bundle with two orthogonals. 
\end{conj}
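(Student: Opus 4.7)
The plan is to first reduce the problem to two dimensions inside $\Pi^{\vee}\subset V$ and then to use the ambient rank-three $\vee$-system structure to extract the harmonic relation. Choose an orthonormal basis of $\Pi^{\vee}$ with respect to $G|_{\Pi^{\vee}}$, so that each dual vector takes the form $\alpha_{i}^{\vee}=r_{i}(\cos\theta_{i},\sin\theta_{i})$ for $i=1,2,3,4$. Since $\Pi$ carries more than two covectors, the $\vee$-condition on $\Pi$ reduces to the proportionality $G^{\Pi}|_{\Pi^{\vee}}=\nu\,G|_{\Pi^{\vee}}$, which is equivalent to the tight-frame identity
\begin{equation*}
\sum_{i=1}^{4} r_{i}^{2}\,e^{2i\theta_{i}} \;=\; 0.
\end{equation*}

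This one complex equation is \emph{a priori} not strong enough: unit-weight quadruples such as $\theta_{i}\in\{0,\pi/3,\pi/2,5\pi/6\}$ satisfy it and even contain an orthogonal pair, yet have cross-ratio $-3$ rather than $-1$. Hence the conjecture must draw on the embedding of $\Pi\cap\mathcal{A}$ inside a full three-dimensional $\vee$-system. The plan is then twofold: first extract an orthogonal pair inside the quadruple, and then upgrade this to harmonicity via a reflection-symmetry argument. For the first step, for each $\alpha_{i}\in\Pi\cap\mathcal{A}$ and each $\beta\in\mathcal{A}\setminus\Pi$ the auxiliary plane $\operatorname{span}(\alpha_{i},\beta)$ carries a $\vee$-condition—either the orthogonality $G(\alpha_{i}^{\vee},\beta^{\vee})=0$ for a two-point flat, or a proportionality for a larger flat. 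Combining these with the global identity $\sum_{\beta\in\mathcal{A}}\beta(\alpha_{i}^{\vee})\beta^{\vee}=\alpha_{i}^{\vee}$ and propagating orthogonalities through the matroid lattice, one aims to isolate a pair $\alpha_{k},\alpha_{l}\in\Pi\cap\mathcal{A}$ with $G(\alpha_{k}^{\vee},\alpha_{l}^{\vee})=0$, i.e.\ $\theta_{k}-\theta_{l}\equiv\pi/2\pmod{\pi}$. For the second step, rotate coordinates so that $\theta_{k}=0$ and $\theta_{l}=\pi/2$; the tight-frame identity becomes $r_{1}^{2}e^{2i\theta_{1}}+r_{2}^{2}e^{2i\theta_{2}}=r_{l}^{2}-r_{k}^{2}$, whose imaginary part reads $r_{1}^{2}\sin 2\theta_{1}+r_{2}^{2}\sin 2\theta_{2}=0$. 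Then invoke the reflection $\sigma$ across the line $\mathbb{R}\alpha_{k}^{\vee}$: it is an isometry of $G|_{\Pi^{\vee}}$ preserving the lines spanned by $\alpha_{k}^{\vee}$ and $\alpha_{l}^{\vee}$, and one expects, by projective rigidity of the matroid realisation (in the spirit of Theorem~\ref{rigid}), that $\sigma$ also swaps $\mathbb{R}\alpha_{1}^{\vee}$ and $\mathbb{R}\alpha_{2}^{\vee}$, forcing $r_{1}=r_{2}$ and $\theta_{2}=-\theta_{1}$, which is precisely the harmonic condition with $\alpha_{k},\alpha_{l}$ as orthogonal bisectors of $\alpha_{1},\alpha_{2}$.

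The main obstacle is the first step: producing an orthogonal pair inside $\Pi$ by a conceptual argument rather than by case-by-case inspection of the catalogue. This seems essentially equivalent to a structural characterisation of $\vee$-realisable matroids, itself one of the principal open problems raised in Section~5 of the paper; my counter-example above shows that without such external input the tight-frame relation alone cannot suffice. An alternative route would be to derive higher-moment identities $\sum_{i} r_{i}^{2}\,\alpha_{i}(x)^{2k}$ from the flatness of the $\vee$-connection or from the linearised $\vee$-conditions (\ref{1X})--(\ref{2X}) of Section~4, and to show that compatibility across $k\geq 2$ forces the desired orthogonality; but linking these global identities to the internal geometry of a single four-point two-flat also appears to require genuinely new input beyond the tight-frame relation.
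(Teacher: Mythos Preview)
The statement you are attempting to prove is a \emph{conjecture} in the paper, not a theorem; the paper offers no proof at all, only the remark that a case-by-case check of the known three-dimensional $\vee$-systems from the Appendix suggests the claim. There is therefore no ``paper's own proof'' to compare your attempt against.

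Your proposal is honest about this situation: you correctly identify that the tight-frame identity $\sum_i r_i^2 e^{2i\theta_i}=0$ coming from the $\vee$-condition on $\Pi$ alone is insufficient (your counter-example with angles $0,\pi/3,\pi/2,5\pi/6$ is apt), and you correctly locate the genuine obstruction in the first step --- producing an orthogonal pair inside the four-point flat from the ambient matroidal structure. You also correctly observe that this step appears to require precisely the kind of structural characterisation of $\vee$-realisable matroids that the paper flags as open. Your second step, the reflection-symmetry argument, is plausible but also not complete: the appeal to projective rigidity to force $\sigma$ to swap the remaining two lines is not justified, since rigidity of the full matroid realisation does not obviously descend to a symmetry statement about a single four-point flat.

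In short, what you have written is not a proof but a thoughtful diagnosis of why the conjecture resists a direct attack --- which is consistent with the paper's decision to leave it as a conjecture.
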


\section{$\nu$-Function, Uniqueness and Rigidity Conjectures}

Let $M$ be a matroid and $\mathcal A$ be its $\vee$-system realisation.
Such a realisation defines the {\it $\nu$-function} on the 2-flats of $M$, 
where $\nu$ is the coefficient in the $\vee$-conditions (\ref{vee}) 
corresponding to the plane $\Pi$ representing the flat.

\begin{conj}
\label{(Uniqueness conj)} {\bf (Uniqueness Conjecture)}
An irreducible $\vee$-system $\mathcal A$ is uniquely determined modulo linear group $GL(V^*)$ by its matroid $M$ 
and the corresponding $\nu$-function on its flats.
\end{conj}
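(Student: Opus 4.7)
The plan is to prove the conjecture by combining the infinitesimal deformation theory of Section 4 with a passage from infinitesimal to global rigidity, exploiting the algebraic nature of the $\vee$-conditions. After fixing a matroid $M$ and a $\nu$-function on its $2$-flats, the goal is to show that the set $\mathcal{R}^\vee(M,\nu)$ of $\vee$-system realisations modulo $GL(V^*)$ is a single point. A first reduction uses projective rigidity of $M$ (verified case by case for all known irreducible examples by Theorem~3 in Section~4): after applying a suitable element of $GL(V^*)$, two realisations $\mathcal{A}$ and $\mathcal{A}'$ of $M$ must have the form $\alpha' = \mu_\alpha \alpha$ for positive scalars $\mu_\alpha$, so the problem becomes to show that the $\mu_\alpha$ are proportional to a single constant.

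The second step is infinitesimal rigidity. I would consider a scaling deformation $\mathcal A(t)$ of the form (\ref{def}) that preserves $\nu$, so that $\dot\nu(\Pi) \equiv 0$ on every $2$-flat $\Pi$. Under this constraint, the linearised $\vee$-conditions (\ref{1X}) and (\ref{2X}) specialise to
\[
X(\alpha^\vee,\beta^\vee)=0 \quad \text{on every $2$-element plane,} \qquad X^{\Pi}=\nu X \text{ on } \Pi^\vee \text{ for every richer plane.}
\]
A direct check shows that the uniform tangent $\xi_\alpha \equiv c$ corresponds to the global scaling $\alpha \mapsto e^{ct}\alpha$, which lies in the $GL(V^*)$ orbit and preserves $\nu$. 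The key claim to establish is that this is the only solution: the homogeneous linear system above on $\{\xi_\alpha\}$ has a one-dimensional solution space generated by the uniform vector.

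The third step is to pass from the infinitesimal statement to the global one. The locus $\mathcal R^\vee(M,\nu)$ is cut out from the semi-algebraic variety $\mathcal R^\vee(M)/GL$ by polynomial equations fixing $\nu$ on each $2$-flat. Infinitesimal rigidity implies that this locus is zero-dimensional, hence finite. To upgrade finiteness to uniqueness, one could exhibit for each known family in the Appendix an explicit bijection between its parameter space and the tuple of values that $\nu$ assumes on the flats (this amounts to inverting the map from scaling parameters to $\nu$-values in each family, which is a routine, if lengthy, computation), and then use the restriction operation of \cite{FV2} as an inductive tool to reduce higher-dimensional cases to lower ones.

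The hardest part is establishing infinitesimal rigidity in full generality, without a case-by-case verification: the linearised system above is a complicated combinatorial object depending subtly on the flat structure of $M$ and the numerical values of $\nu$, and one would need a conceptual reason why the only kernel direction is the uniform scaling. A promising route is to use the Knizhnik-Zamolodchikov-type connection $\nabla_a$: its flatness is equivalent to the $\vee$-conditions, and the $\nu$-function essentially controls its residues along the hyperplanes $\langle\alpha,x\rangle=0$, so a rigidity statement for flat connections with prescribed residue data would translate directly into the desired uniqueness. A secondary obstacle is handling possibly non-connected components of $\mathcal R^\vee(M)/GL$ that accidentally share a $\nu$-value; ruling this out might require global invariants (e.g.\ signature data of the Gram matrix $G$, or monodromy of $\nabla_a$) that refine $\nu$ in a canonical way.
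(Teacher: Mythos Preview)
The statement you are attempting to prove is a \emph{conjecture} in the paper, not a theorem: the authors do not claim a proof. The only result in this direction is Theorem~\ref{conj}, which verifies both the Uniqueness and the Rigidity Conjectures by direct case-by-case computation for the known three-dimensional $\vee$-systems listed in the Appendix.

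Your proposal is not a proof but an outline of a strategy, and you acknowledge this yourself in the final paragraphs. The central gap is Step~2: you assert that the linearised $\vee$-conditions (\ref{1XX}), (\ref{2XX}) with $\dot\nu=0$ admit only the uniform solution $\xi_\alpha\equiv c$, but you do not establish this. That assertion is precisely the content of the Rigidity Conjecture~\ref{(Rigidity conj)}, which the paper also leaves open in general. Your first reduction (projective rigidity of the matroid $M$) is likewise only known for the catalogued three-dimensional examples, not for arbitrary $\vee$-realisable matroids. And in Step~3 you propose to upgrade finiteness to uniqueness by exhibiting, ``for each known family in the Appendix, an explicit bijection'' between deformation parameters and $\nu$-values --- this collapses the argument back to a case-by-case check of the known list, which is exactly what the paper already does, and proves nothing beyond Theorem~\ref{conj}.

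In short: if the infinitesimal-rigidity step could be established conceptually (for instance via the flat-connection reformulation you sketch at the end), your outline would indeed give a path toward the conjecture; but as written it does not go beyond the paper's own case-by-case verification, and the conjecture remains open.
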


A weaker version of the conjecture is

\begin{conj}
\label{(Rigidity conj)} {\bf (Rigidity Conjecture)}
An irreducible $\vee$-system $\mathcal A$ is locally uniquely determined by its matroid $M$ 
and the corresponding $\nu$-function on its flats.
\end{conj}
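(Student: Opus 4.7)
The plan is to recast the conjecture as a vanishing statement for a linear system on infinitesimal deformations. If the matroid $M=M(\mathcal A)$ is projectively rigid (which by the results of Section 4 holds for every known irreducible 3D $\vee$-system), the main deformation theorem of Section 4 says every infinitesimal deformation of $\mathcal A$ modulo $GL(V^*)$ has the scaling form (\ref{def}), with the scalars $\xi_\alpha=\dot\mu_\alpha(0)$ constrained by (\ref{1X}) and (\ref{2X}). Requiring the $\nu$-function to be preserved to first order amounts to $\dot\nu_\Pi=0$ for every 2-flat $\Pi$ with $|\Pi\cap\mathcal A|\ge 3$, turning (\ref{2X}) into the strengthened equation
\[
X^\Pi|_{\Pi^\vee}=\nu(\Pi)\,X|_{\Pi^\vee}.
\]
Proving the Rigidity Conjecture is thus equivalent to showing that the only solutions of the combined linear system, i.e.\ (\ref{1X}) together with this strengthened form of (\ref{2X}), are the constants $\xi_\alpha\equiv c$.

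Constants are always solutions: setting $\xi_\alpha\equiv c$ yields $X=cG$ and $X^\Pi=cG^\Pi$, after which (\ref{vee1}) reduces the strengthened (\ref{2X}) to an identity, while the 2-covector $\vee$-relation $G(\alpha^\vee,\beta^\vee)=0$ reduces (\ref{1X}) to an identity. These constants are precisely the infinitesimal orbit of the homothety subgroup $\R^\times\cdot\mathrm{Id}\subset GL(V^*)$ and so are trivial after quotienting.

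To exclude non-constant solutions I would argue by propagation along the matroid. Fix a reference covector $\alpha_0$, replace $\xi_\alpha$ by $\xi_\alpha-\xi_{\alpha_0}$, and show the normalised deformation vanishes identically. On any 2-flat $\Pi\ni\alpha_0$ with $|\Pi\cap\mathcal A|\ge 3$ the strengthened (\ref{2X}) is precisely the linearisation of the original $\vee$-identity with $\nu$ held fixed; positive definiteness of $G|_{\Pi^\vee}$ should then force $\xi_\alpha=0$ on every covector of $\Pi$. The orthogonality conditions (\ref{1X}) transmit the vanishing across 2-covector flats, and irreducibility of $\mathcal A$ (which means the matroid is connected) guarantees that the incidence graph whose vertices are covectors and whose edges are shared 2-flats is connected, so $\xi\equiv 0$ throughout $\mathcal A$.

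The main obstacle is the per-flat step: on a multi-covector 2-flat with $k$ covectors, the strengthened (\ref{2X}) represents at most three scalar equations in $k$ unknowns, so it does not by itself pin down each $\xi_\alpha$. Overcoming this will probably require combining (\ref{1X}) and (\ref{2X}) with an auxiliary global identity, for example taking the $G^{-1}$-trace of the linearised equation and summing over all 2-flats through a fixed $\alpha$ to isolate $\xi_\alpha$ in terms of its neighbours. If a uniform argument resists, the conjecture can at least be verified case by case for each irreducible system in the Appendix by extending the Mathematica computation used in Section 4: in every listed example one already observes that the kernel of (\ref{1X}),(\ref{2X}) has dimension equal to the number of continuous $\nu$-moduli plus one, the extra direction being uniform scaling, exactly matching the conjecture. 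For matroids that are not projectively rigid one would first need to enlarge the class of admissible deformations beyond (\ref{def}) before running the same propagation.
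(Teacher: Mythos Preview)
The statement you are trying to prove is a \emph{conjecture}; the paper does not prove it in general. What the paper establishes (Theorem~\ref{conj}) is only that the conjecture holds for each of the known irreducible three-dimensional $\vee$-systems, and the proof offered is a direct case-by-case computer check from the Appendix list, using the linearised equations (\ref{1XX}),(\ref{2XX}). Your fallback paragraph (``the conjecture can at least be verified case by case for each irreducible system in the Appendix by extending the Mathematica computation'') is exactly this, so on that level your proposal and the paper coincide.

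Where you go beyond the paper is the attempted general propagation argument, and that part has a genuine gap. Your sentence ``positive definiteness of $G|_{\Pi^\vee}$ should then force $\xi_\alpha=0$ on every covector of $\Pi$'' is unsupported, and you effectively retract it two paragraphs later when you observe that on a $k$-covector flat the strengthened (\ref{2X}) gives at most three scalar equations in $k$ unknowns. Positive definiteness of $G|_{\Pi^\vee}$ tells you nothing about the individual $\xi_\alpha$: the restriction $X^\Pi|_{\Pi^\vee}$ only sees the quadratic combination $\sum_{\alpha\in\Pi}\xi_\alpha\,\alpha\otimes\alpha$, and many non-constant tuples $(\xi_\alpha)_{\alpha\in\Pi}$ give the same restriction. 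So the propagation step, as stated, does not go through, and the orthogonality conditions (\ref{1X}) alone cannot repair it since they only couple pairs in two-element flats. The ``auxiliary global identity'' you sketch (tracing and summing over flats through a fixed $\alpha$) is plausible but is not carried out; without it the argument is a heuristic, not a proof. In short: your reformulation as a linear vanishing problem is correct and matches the paper's (\ref{1XX}),(\ref{2XX}); your constant-solution check is correct; but the non-constant exclusion remains open, which is precisely why the paper leaves it as a conjecture and retreats to case-by-case verification in dimension three.
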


If the function $\nu$ is fixed under deformation then $\dot \nu=0$ and the corresponding 
$\vee$-conditions are 
\begin{equation}
\label{1XX}
X(\alpha^{\vee}, \beta^{\vee})=0
\end{equation}
for $\alpha,\beta$ be the only two covectors in the plane, and
\begin{equation}
\label{2XX}
X^{\Pi}-\nu X=0\mid_{\Pi^{\vee}}
\end{equation}
for any plane $\Pi$ containing more than two covectors from $\mathcal A.$

Conjecturally this should imply that $X=cG$ corresponding 
to the global scaling of the system.

Case by case check from the list in the Appendix leads to the following 

\begin{thm}
\label{conj}
Both conjectures are true for all known $\vee$-systems in dimension three.
\end{thm}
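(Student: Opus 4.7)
The plan is a computer-assisted case-by-case check that uses the infinitesimal machinery of Section 4 together with the projective rigidity that, by the earlier theorem, holds for every matroid of a known irreducible 3D $\vee$-system. Since every such matroid is projectively rigid, every deformation reduces to a scaling deformation of the form (\ref{def}), so the infinitesimal version of Conjecture~\ref{(Rigidity conj)} becomes the purely linear-algebraic statement that the system (\ref{1XX})--(\ref{2XX}) in the unknowns $\xi_\alpha$ admits only the one-dimensional solution space $\xi_\alpha = c$, which corresponds to the global rescaling $X = cG$.

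For the Rigidity Conjecture I would go through the Appendix catalogue entry by entry. For each system, substitute the explicit covectors into $G$ and $X$, enumerate the 2-flats, and assemble the equations (\ref{1XX}) from two-covector planes together with (\ref{2XX}) with $\dot\nu = 0$ from the multi-covector planes, exactly as was done in the main text for $D_3(t,s)$ with $\dot\nu$ left free. One must verify that the corank of the resulting system is one in every case. For the isolated systems of Theorem~\ref{(Rigid systems)} this is automatic, since the free system already has corank one. For the continuous families $B_3(c,\gamma)$, $D_3(s,t)$, $F_3(t)$, $G_3(t)$ and $(AB_4(t),A_1)_{1,2}$ the extra constraints $\dot\nu(\Pi)=0$, one per multi-covector 2-flat, should kill exactly those directions in the free infinitesimal deformation space which correspond to motion along the parameters, leaving only the overall scaling.

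The Uniqueness Conjecture is then obtained from Rigidity by a global argument. For each of the isolated systems the matroid alone already determines the $\vee$-system up to $GL(V^*)$, and hence so do $(M,\nu)$ trivially. For each continuous family one consults the explicit parametrisation (such as (\ref{CV}) for $B_3$, Theorem~\ref{defA3} for $A_3$ and the analogues in \cite{FV, FV2}) and verifies that the induced map from parameters to the tuple of $\nu$-values on multi-covector 2-flats is injective; prescribing $\nu$ then pins down the parameters, and Rigidity guarantees that the fibre over $(M,\nu)$ is discrete. Combining these produces the single-point fibre required by the Uniqueness Conjecture.

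The main obstacle is the sheer size of the linear algebra for the largest entries in the catalogue, namely $(H_4,A_1)$ and the $E_8$-derived systems, where the matroid has many 2-flats and the system (\ref{1XX})--(\ref{2XX}) becomes very large; the Mathematica implementation of \cite{Ver} is essential here. A finer but elementary issue is confirming that the parameter-to-$\nu$ map for the continuous families is injective globally rather than merely generically, which requires picking enough multi-covector flats whose $\nu$-values jointly resolve every parameter—a case-specific but routine verification once the explicit parametrisations are in hand.
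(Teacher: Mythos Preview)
Your plan matches the paper's own argument, which is literally the one line ``Case by case check from the list in the Appendix''; the infinitesimal system (\ref{1XX})--(\ref{2XX}) with $\dot\nu=0$ is exactly the right tool for Rigidity, and the Mathematica routine in \cite{Ver} is what carries the load.

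One point in your Uniqueness argument deserves tightening. You deduce a single-point fibre from (i) injectivity of the parameter-to-$\nu$ map on the known family and (ii) discreteness of the full fibre coming from Rigidity. But (i) only gives uniqueness \emph{within the parametrised family}, and (ii) does not by itself rule out a second isolated $\vee$-system outside that family sharing the same $(M,\nu)$. What closes the gap is that for a (strongly) projectively rigid matroid every realisation is $AD$ with diagonal $D$, so $\mathcal R^\vee(M)$ is cut out by explicit polynomial equations in the $d_i$; the genuine case-by-case check is to solve those equations together with the $\nu$-constraints (via Theorem~\ref{geome}) in Mathematica and observe that the solution set, modulo the $GL$ action, is a single point. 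For $A_3$ and $B_3$ this is exactly what Theorems~\ref{defA3} and 3 already do; for $D_3(t,s)$, $F_3(t)$, $G_3(t)$ and the $(AB_4(t),A_1)$ families you need the same direct computation rather than relying on the parametrisations being exhaustive, which the paper has only established infinitesimally. For $H_3$ remember that the matroid has two projectively inequivalent realisations, so one must also invoke the Remark after Theorem~\ref{rigid} that these are related by reordering and hence define the same $\vee$-system.
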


We have also the following conjecture based on the analysis of the list of extensions of $\vee$-systems from the previous section.

\begin{conj}
\label{(Extension conj)} {\bf (Extension Conjecture)}
For any irreducible $\vee$-system and its extension the values of the $\nu$-functions on the corresponding flats are proportional.
\end{conj}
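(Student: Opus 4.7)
The plan is to reduce the conjecture to the following stronger structural statement: for any irreducible $\vee$-system $\mathcal{A}_2$ and its $\vee$-extension $\mathcal{A}_1\supset\mathcal{A}_2$, the two canonical forms are proportional, $G_{\mathcal{A}_1}=\lambda\,G_{\mathcal{A}_2}$ for some $\lambda>0$. Granting this, one has $\alpha^\vee_1=\lambda^{-1}\alpha^\vee_2$ for every $\alpha\in\mathcal{A}_2$, and in particular $\Pi^\vee_1=\Pi^\vee_2$ for any 2-plane $\Pi\subset V^*$. For a ``corresponding'' flat $\Pi$ (by which I mean a 2-plane with $|\Pi\cap\mathcal{A}_2|\geq 3$, which automatically forces $|\Pi\cap\mathcal{A}_1|\geq 3$) the defining identities $G^\Pi_{\mathcal{A}_i}=\nu_i(\Pi)\,G_{\mathcal{A}_i}$ on $\Pi^\vee_i$, combined with
\[
G^\Pi_{\mathcal{A}_1}=G^\Pi_{\mathcal{A}_2}+\sum_{\gamma\in(\mathcal{A}_1\setminus\mathcal{A}_2)\cap\Pi}\gamma\otimes\gamma,
\]
immediately yield $\nu_1(\Pi)=\lambda^{-1}\nu_2(\Pi)$ whenever no new covector lies in $\Pi$, and a parallel computation (comparing the extra rank-one contributions against $\lambda G_{\mathcal{A}_2}$) handles the flats enriched by covectors from $\mathcal{A}_1\setminus\mathcal{A}_2$.

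The heart of the argument is therefore to establish $G_{\mathcal{A}_1}\propto G_{\mathcal{A}_2}$. The natural tool is the difference form
\[
D := G_{\mathcal{A}_1}-G_{\mathcal{A}_2}=\sum_{\gamma\in\mathcal{A}_1\setminus\mathcal{A}_2}\gamma\otimes\gamma,
\]
and one wants to show $D=(\lambda-1)G_{\mathcal{A}_2}$. Writing the $\vee$-condition of $\mathcal{A}_1$ at a flat $\Pi$ disjoint from $\mathcal{A}_1\setminus\mathcal{A}_2$ and subtracting the corresponding $\vee$-condition for $\mathcal{A}_2$ gives a system of linear relations on $D$ whose coefficients are determined entirely by the configuration $\mathcal{A}_2$. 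The hope is to invoke irreducibility of $\mathcal{A}_2$ in a Schur-type form: the space of symmetric bilinear forms on $V$ compatible with all $\vee$-relations of $\mathcal{A}_2$ is one-dimensional, spanned by $G_{\mathcal{A}_2}$.

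The main obstacle lies precisely in this last step. For Coxeter systems, irreducibility of the reflection representation plus invariance of the canonical form forces uniqueness by the classical Schur lemma, but a general irreducible $\vee$-system has no natural group action to exploit. Making the Schur-type statement rigorous seems to require first proving a version of the Uniqueness Conjecture (Conjecture~\ref{(Uniqueness conj)}) for $\mathcal{A}_2$, so the Extension Conjecture is probably best attacked in tandem with, or after, Conjecture~\ref{(Uniqueness conj)}.

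As a practical back-up, and in the spirit of the case-by-case verifications used elsewhere in the paper, one can simply check the Extension Conjecture directly for every entry of Table~1 by computing the $\nu$-values on common flats for both $\mathcal{A}_2$ and $\mathcal{A}_1$ using Mathematica. This finite verification would at least secure the statement for the current catalogue of known three-dimensional $\vee$-systems, and the constant ratio $\lambda^{-1}$ produced in each case can be compared against the predicted proportionality $G_{\mathcal{A}_1}/G_{\mathcal{A}_2}$, giving strong evidence for the underlying structural mechanism above.
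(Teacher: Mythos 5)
The statement you are addressing is presented in the paper as a \emph{conjecture}: the authors offer no proof, only the remark that it can be checked for all known cases, illustrated by the single example $H_3\subset (H_4,A_1)$ with $\{3/10,1/2\}=3\times\{1/10,1/6\}$. So your fallback position --- finite verification over the extension table --- is in fact all the paper itself does, and to that extent your proposal lands where the source does. As a proof, however, your structural route has two genuine gaps beyond the one you yourself flag. The first is the unproven premise $G_{\mathcal{A}_1}=\lambda G_{\mathcal{A}_2}$. It does hold in the examples one can check by hand --- e.g.\ for $A_3\subset (AB_4(t),A_1)_2$ the six covectors labelled $4,\dots,9$ give a canonical form with diagonal entries $\tfrac{4+8t^2}{1+4t^2}$ and off-diagonal entries $\tfrac{4t^2}{1+4t^2}$, which is indeed proportional to the displayed $G$ of the full system --- but the Schur-type uniqueness you would need to derive it is, as you note, essentially the content of Conjecture~\ref{(Uniqueness conj)}, which is itself open; so the reduction trades one open conjecture for a stronger one rather than proving anything.

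The second gap is the dismissal of flats enriched by new covectors. By Theorem~\ref{geome}, $\nu_{\mathcal{A}_1}(\Pi)=\tfrac12\sum_{\alpha\in\Pi\cap\mathcal{A}_1}|\alpha|^2$, so if a common flat $\Pi$ acquires covectors from $\mathcal{A}_1\setminus\mathcal{A}_2$ then $\nu_{\mathcal{A}_1}(\Pi)$ picks up extra strictly positive terms, and the ratio $\nu_{\mathcal{A}_2}(\Pi)/\nu_{\mathcal{A}_1}(\Pi)$ will in general differ from the ratio on unenriched flats; a single constant of proportionality is therefore not automatic even granting $G_{\mathcal{A}_1}\propto G_{\mathcal{A}_2}$. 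In the paper's worked example this issue happens not to arise: every $\nu_{32}$-flat and every $\mathcal{I}_5$-flat of $(H_4,A_1)$ lies entirely inside the $H_3$ subsystem $\{1,2,3,8,\dots,19\}$, so no corresponding flat gains covectors. But your ``parallel computation'' clause asserts, without argument, that the enriched case works out; either you must show that corresponding flats (those containing at least three covectors of $\mathcal{A}_2$) never gain covectors under extension, or you must explain why the added $|\gamma|^2$ contributions preserve the constant. Until one of these is supplied, the only secure content of your proposal is the case-by-case check, which is precisely the evidence the paper already cites.
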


One can check that this is true for all known cases. For example, for the extension $H_3 \subset (H_{4},A_{1})$ we have the set of values
$$ \{3/10, 1/2\}=3 \times \{1/10 , 1/6\}.$$

Now we present some results about $\nu$-functions for $\vee$-systems.

First we give the following, more direct geometric way to compute $\nu(\Pi).$
The form $G_{\mathcal A}$ on $V$  
defines the scalar product on $V^*$ and thus the norm $|\alpha|, \, \alpha \in V^*.$

\begin{thm}
\label{geome}
For every plane $\Pi \subset V^*$ containing more than two covectors $\alpha$ from a $\vee$-system $\mathcal A$
\begin{equation}
\nu (\Pi)=\frac{1}{2}\sum_{\alpha\in\Pi\cap \mathcal A}|\alpha|^2. 
\label{UC2}
\end{equation}
\end{thm}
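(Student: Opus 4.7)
The plan is to derive \eqref{UC2} by taking a suitable trace of the defining proportionality that characterises $\nu(\Pi)$. When $\Pi$ contains more than two covectors, the $\vee$-condition is equivalent to the identity of symmetric bilinear forms $G^{\Pi}_{\mathcal A}\big|_{\Pi^{\vee}} = \nu\, G_{\mathcal A}\big|_{\Pi^{\vee}}$ on the two-dimensional subspace $\Pi^{\vee}=\varphi_{\mathcal A}^{-1}(\Pi)\subset V$, as recalled in the introduction.

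First I would pick a basis $\{e_1,e_2\}$ of $\Pi^{\vee}$ that is orthonormal for $G_{\mathcal A}$, which exists because $G_{\mathcal A}$ is positive definite. The proportionality then reads $G^{\Pi}_{\mathcal A}(e_i,e_j)=\nu\,\delta_{ij}$, so taking the trace in this basis and using the definition of $G^{\Pi}_{\mathcal A}$ gives
\[
2\nu \;=\; \sum_{i=1}^{2} G^{\Pi}_{\mathcal A}(e_i,e_i) \;=\; \sum_{\alpha\in\Pi\cap\mathcal A}\,\sum_{i=1}^{2} \alpha(e_i)^{2}.
\]

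Next, for each $\alpha\in\Pi\cap\mathcal A$ we have $\alpha^{\vee}\in\Pi^{\vee}$, so expanding $\alpha^{\vee}=c_1 e_1+c_2 e_2$ in the orthonormal basis yields $c_i=G_{\mathcal A}(\alpha^{\vee},e_i)=\alpha(e_i)$, and hence
\[
|\alpha|^{2}=G_{\mathcal A}(\alpha^{\vee},\alpha^{\vee})=c_1^{2}+c_2^{2}=\sum_{i=1}^{2}\alpha(e_i)^{2}.
\]
Substituting this into the previous display immediately gives $2\nu=\sum_{\alpha\in\Pi\cap\mathcal A}|\alpha|^{2}$, which is \eqref{UC2}.

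There is no real obstacle in this argument; the only points needing care are (i) identifying the norm on $V^{*}$ with $|\alpha|^{2}=\alpha(\alpha^{\vee})=G_{\mathcal A}(\alpha^{\vee},\alpha^{\vee})$, as is implicit in the formulation of the theorem, and (ii) exploiting the fact that $\alpha\in\Pi$ implies $\alpha^{\vee}\in\Pi^{\vee}$, so that the expansion of $\alpha^{\vee}$ in the chosen orthonormal basis of $\Pi^{\vee}$ recovers the full norm of $\alpha$.
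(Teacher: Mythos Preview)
Your argument is correct and is essentially the same as the paper's: the paper writes the $\vee$-condition as $\sum_{\alpha\in\Pi\cap\mathcal A}\alpha^{\vee}\otimes\alpha\big|_{\Pi^{\vee}}=\nu(\Pi)\,I\big|_{\Pi^{\vee}}$ and takes the trace, which is exactly what you do in the orthonormal basis $\{e_1,e_2\}$. Your version is simply a more explicit unpacking of that trace computation, including the verification that the trace of $\alpha^{\vee}\otimes\alpha$ on $\Pi^{\vee}$ equals $|\alpha|^{2}$ because $\alpha^{\vee}\in\Pi^{\vee}$.
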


\begin{proof}
From the $\vee$-conditions (1.1) we have
\[
\sum_{\alpha\in\Pi\cap \mathcal A}\alpha^{\vee}\otimes\alpha \mid_{\Pi^{\vee}}=\nu(\Pi) I \mid_{\Pi^{\vee}}.
\]
Taking the trace of both sides gives (\ref{UC2}).
\end{proof}

Let $\mathcal{A}\subset V^{*}$ be a $\vee$-system generating $V^*$ and consider the set $\mathcal F_{\mathcal A}$ of 2-flats in the corresponding matroid, which the same as the set of 2D planes $\Pi \subset V^*$ containing more than two covectors from $\mathcal A$.

We say that the set of weights $x_{\Pi}, \, \Pi \in \mathcal F_{\mathcal A}$ is {\it admissible} if for each $\alpha\in\mathcal{A}$ 
\begin{equation}
\sum_{\Pi \in \mathcal F_{\mathcal A}: \alpha \in \Pi}x_{\Pi}=1.\label{eq:Summ_x}
\end{equation}

\begin{thm}
\label{thm:Universal Relation 1} For every admissible set of weights we have
\begin{equation}
\sum_{\Pi \in \mathcal F_{\mathcal A}}x_{\Pi}\nu(\Pi)=\frac{n}{2},
\label{UR1}
\end{equation}
where $n$ is the dimension of $V.$
\end{thm}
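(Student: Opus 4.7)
The plan is to exchange the order of summation, apply Theorem 6 (the geometric formula for $\nu(\Pi)$), and reduce everything to a trace identity that is in fact valid for any spanning family of covectors, not just for $\vee$-systems.

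Starting from Theorem 6, I would rewrite
\begin{equation*}
\sum_{\Pi \in \mathcal F_{\mathcal A}} x_\Pi \nu(\Pi) \;=\; \frac{1}{2}\sum_{\Pi \in \mathcal F_{\mathcal A}} x_\Pi \sum_{\alpha \in \Pi \cap \mathcal A} |\alpha|^2 \;=\; \frac{1}{2}\sum_{\alpha \in \mathcal A} |\alpha|^2 \sum_{\substack{\Pi \in \mathcal F_{\mathcal A} \\ \alpha \in \Pi}} x_\Pi,
\end{equation*}
and then apply the admissibility condition (\ref{eq:Summ_x}) to collapse the inner sum to $1$. This leaves $\frac{1}{2}\sum_{\alpha\in\mathcal A}|\alpha|^2$, so the theorem is reduced to the single identity $\sum_{\alpha\in\mathcal A}|\alpha|^2 = n$.

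The remaining identity follows from unpacking the definition of the canonical form. The map $\varphi_{\mathcal A}:V\to V^{*}$ sends $v$ to $\sum_{\alpha}\alpha(v)\alpha$, and by construction $\varphi_{\mathcal A}(\alpha^{\vee})=\alpha$. Hence, for any $v\in V$,
\begin{equation*}
\varphi_{\mathcal A}\Bigl(\sum_{\alpha\in\mathcal A}\alpha(v)\alpha^{\vee}\Bigr) \;=\; \sum_{\alpha\in\mathcal A}\alpha(v)\alpha \;=\; \varphi_{\mathcal A}(v),
\end{equation*}
so, since $\varphi_{\mathcal A}$ is an isomorphism, $\sum_{\alpha}\alpha^{\vee}\otimes\alpha=\mathrm{Id}_{V}$ as an operator on $V$. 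Taking the trace and observing that $|\alpha|^{2}=G_{\mathcal A}(\alpha^{\vee},\alpha^{\vee})=\alpha(\alpha^{\vee})$ by definition of the norm on $V^{*}$ induced by $G_{\mathcal A}$, one gets $\sum_{\alpha\in\mathcal A}|\alpha|^{2}=\mathrm{tr}\,\mathrm{Id}_{V}=n$, which completes the proof.

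There is no real obstacle: both the double-sum swap and the trace identity are one-line arguments once the correct objects are in place. The substance of the theorem lies in the fact that the admissibility condition (\ref{eq:Summ_x}) is precisely engineered so that the weights $x_{\Pi}$ disappear upon interchanging summations, leaving a universal quantity that depends only on the canonical form $G_{\mathcal A}$ and the dimension $n$.
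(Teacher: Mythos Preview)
Your proof is correct and is essentially the paper's argument, only with the trace taken in a different place: the paper keeps the operator identity $\sum_{\beta\in\mathcal A}\beta^{\vee}\otimes\beta=\mathrm{Id}$, decomposes it via admissibility into $\sum_{\Pi}x_{\Pi}\sum_{\beta\in\Pi}\beta^{\vee}\otimes\beta$, replaces each inner sum by $\nu(\Pi)P_{\Pi}$ using the $\vee$-condition, and then traces once at the end (using $\mathrm{tr}\,P_{\Pi}=2$). You instead first invoke Theorem~\ref{geome}, which is exactly the trace of that $\vee$-condition restricted to $\Pi^{\vee}$, and then trace $\sum_{\alpha}\alpha^{\vee}\otimes\alpha=\mathrm{Id}$ separately; the two routes are the same computation reorganised.
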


\begin{proof}
We have 
\[
\sum_{\beta\in\mathcal{A}}\beta^{\vee}\otimes\beta=\sum_{\beta\in\mathcal{A}}(\sum_{\Pi \in \mathcal F_{\mathcal A}: \beta \in \Pi}x_{\Pi})\beta^{\vee}\otimes\beta=\sum_{\Pi\in \mathcal F_{\mathcal A}}x_{\Pi}\sum_{\beta\in\Pi\cap\mathcal{A}}\beta^{\vee}\otimes\beta.
\]
From the $\vee$-condition 
$$\sum_{\beta\in\Pi\cap\mathcal{A}}\beta^{\vee}\otimes\beta=\nu(\Pi)P_{\Pi},$$
where $P_{\Pi}$ is the orthogonal projector onto $\Pi^{\vee}.$
Taking trace and using the fact that
$
\sum_{\beta\in\mathcal{A}}\beta^{\vee}\otimes\beta=Id
$
we obtain (\ref{UR1}).\end{proof}

We call (\ref{UR1}) the {\it universal relation} for values of function $\nu.$

For the $\vee$-system of type $A_3$ the universal relation completely describes the set of all possible functions $\nu.$
Indeed, one can easily see from Fig. 2  that $x_{\Pi}=1/2$ is the only admissible weight system, which leads to the universal relation
$$\sum_{\Pi \in \mathcal F_{\mathcal A}}\nu(\Pi)=3.$$
This gives us three free parameters, which are exactly three parameters of deformation.

However, in general universal relations are not strong enough to describe possible $\nu$-functions. 
Moreover, a $\vee$-system $\mathcal{A}$
may not have admissible weights $x_{\Pi}$ at all. 
For instance, this is the case for the $\vee$-system of $D_{3}(t)$-type
(this is however the only exception among known 3D $\vee$-systems).

The list of all known $\vee$-systems in dimension three with the corresponding $\nu$-functions is given in the Appendix.

\section{Concluding Remarks}

Although the problem of classification of $\vee$-systems seems to be very hard, in
dimension three it does not look hopeless. As we have seen, matroid theory provides a
natural framework for the problem of classification of $\vee$-systems. For a given
matroidal structure the $\vee$-conditions define a set of algebraic relations 
on the vector realisations. In case when matroid is strongly projectively rigid 
we have one free parameter for each vector, which makes possible 
the full classification of $\vee$-systems with small number of vectors. 

The main problem is to describe all possible matroidal types, 
which we believe form a finite list in any dimension.
The results of Lechtenfeld et al \cite{LST} show that 
the direct computer approach is probably unrealistic for $\vee$-systems 
with more than 10 covectors, while we have already in dimension three 
an example with 31 covectors (system $(H_4,A_1)$, see 7.16 in the Appendix).
In dimension three we have an intriguing relation with the theory of configurations
on the projective plane and with the theorems in projective geometry, 
which also suggests that the final list should be finite.

In the theory of matroids and graphs many families have been proved 
to be closed under taking minors, thus giving a possibility to reduce the problem 
of classification to the identification of the forbidden minors. We hope that a similar approach could be fruitful for classification of $\vee$-systems.

Another result from matroid theory, which could be relevant, is Seymour's decomposition
theorem \cite{Sey}, which states that all regular matroids can be build up
in a simple way as sums of certain type of graphic
matroids, their duals, and one special matroid on 10 elements. Our analysis
of degenerations and extensions of $\vee$-systems suggests a possibility
of a similar result for the $\vee$-realisable matroids.

\section{Acknowledgements}
We are grateful to Olaf Lechtenfeld and especially to Misha Feigin for useful discussions.
We thank also an anonymous referee for a very thorough job.

The work of APV was partly supported by the EPSRC (grant EP/J00488X/1).

\section{Appendix. Catalogue of all Known Real $3$-Dimensional $\vee$-Systems}

Each 3D $\vee$-system $\mathcal A$ is presented below by the matrix with columns giving the covectors of the system (the first row is simply the labelling of the covectors). We give the graphical representation of the corresponding matroid with the list of orthogonal pairs, 2-flats, the form $G$ and the values of $\nu$-function. The ordering of the list is according to the number of covectors in the system. The parameters are assumed to be chosen in such a way that all the covectors are real and non-zero.

Below is a schematic way to present all known $\vee$-systems in dimension three taken from \cite{FV2}.

\begin{figure} [H]
\begin{center}
\includegraphics[scale=0.7]{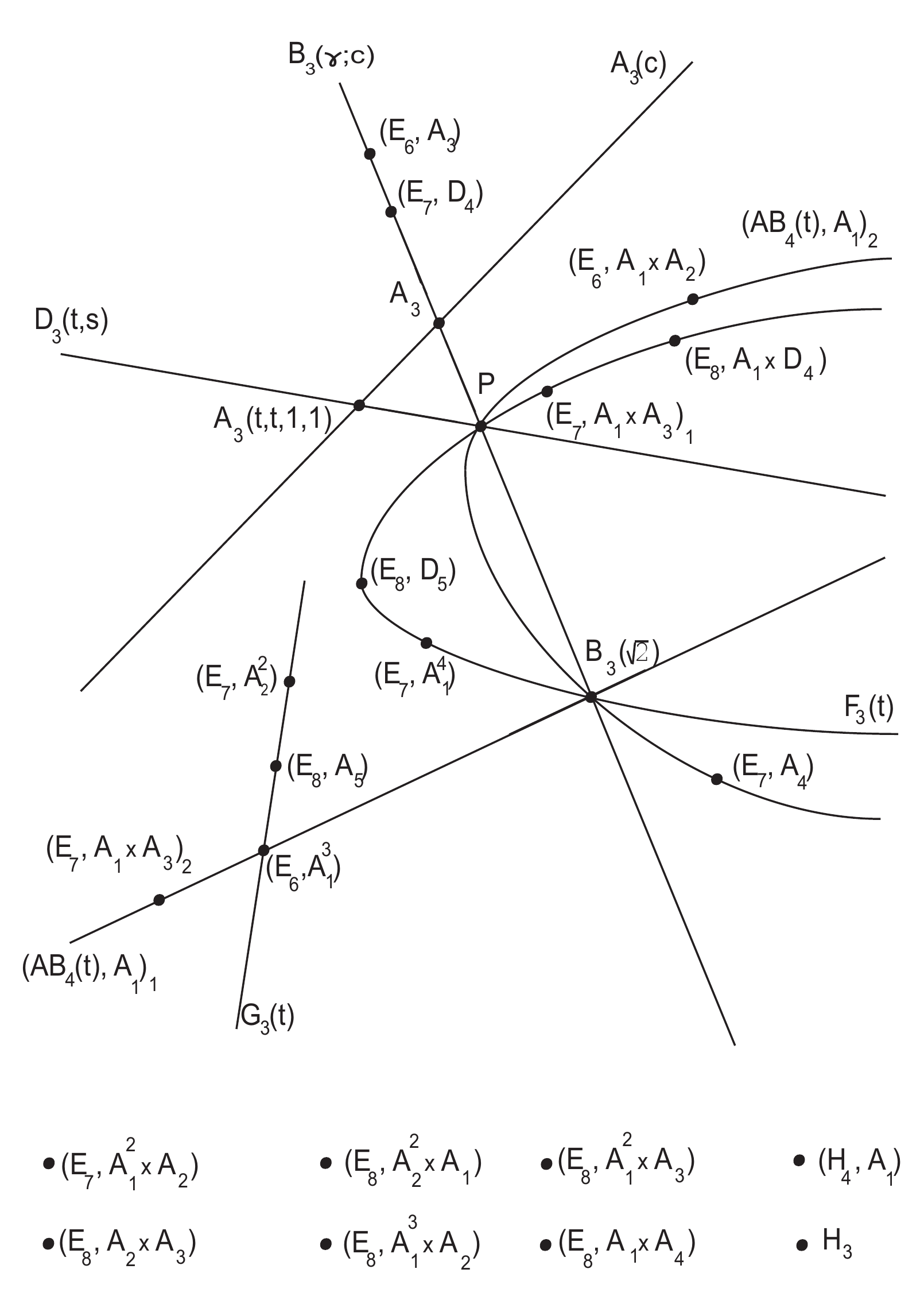} 
\caption{The map of all known 3-dimensional $\vee$-systems from \cite{FV2}.}
\end{center}
\end{figure}

We use here the notations from \cite{FV, FV2}. In particular, for a Coxeter group  $G$ and its parabolic subgroup $H$ $(G,H)$ denotes the corresponding $\vee$-system given by the restriction procedure \cite{FV}. When the type of the subgroup does not fix the subgroup up to a conjugation the index 1 or 2 is used to distinguish them. 

The $\vee$-systems of type $AB_4$, $G_3$ and $D_3$ are related to the exceptional generalised root systems $AB(1,3)$, $G(1,2)$ and $D(2,1,\lambda)$ appeared in the theory of basic classical Lie superalgebras \cite{Sergan, SV}.

\subsection{$\vee$-systems $A_{3}(c_1,c_2,c_3)$}

\begin{eqnarray*}
\mathcal{A}
 & =
 & \left[\begin{array}{cccccc}
1 & 2 & 3 & 4 & 5 & 6\\
\sqrt{c_{1}} & 0 & 0 & -\sqrt{c_{1}c_{2}} & -\sqrt{c_{1}c_{3}} & 0\\
0 & \sqrt{c_{2}} & 0 & \sqrt{c_{1}c_{2}} & 0 & -\sqrt{c_{2}c_{3}}\\
0 & 0 & \sqrt{c_{3}} & 0 & \sqrt{c_{1}c_{3}} & \sqrt{c_{2}c_{3}}
\end{array}\right]
\end{eqnarray*}

\begin{figure} [H]
\begin{center}
\includegraphics[scale=0.65]{A3}
\end{center}
\end{figure}
 
$$G = \left[\begin{array}{ccc}
c_{1}(1+c_{2}+c_{3}) & -c_{1}c_{2} & -c_{1}c_{3}\\
-c_{1}c_{2} & c_{2}(1+c_{1}+c_{3}) & -c_{2}c_{3}\\
-c_{1}c_{3} & -c_{2}c_{3} & c_{3}(1+c_{1}+c_{2})
\end{array}\right]$$\medskip

$\mathcal{I_{\mathrm{2}}}$ =$\left\{ (1,6),(2,5),(3,4)\right\} $\medskip

$\mathcal{I_{\mathrm{3}}}$ = $\begin{cases}
(1,2,4) & \nu_{3}=\frac{1+c_1+c_2}{1+c}\\
(1,3,5) & \nu_{2}=\frac{1+c_1+c_3}{1+c}\\
(2,3,6) & \nu_{1}=\frac{1+c_2+c_3}{1+c}\\
(4,5,6), & \nu_{4}=\frac{c}{1+c}, \, c=c_1+c_2+c_3.
\end{cases}$
\bigskip{}

\subsection{$\vee$-system $D_{3}(t,s)$}

\bigskip

$$\mathcal{A} =\left[\begin{array}{ccccccc}
1 & 2 & 3 & 4 & 5 & 6 & 7\\
1 & 1 & 1 & 1 & 0 & 0 & \sqrt{2(s+t-1)}\\
1 & -1 & -1 & 1 & \sqrt{\frac{2(s-t+1)}{t}} & 0 & 0\\
1 & -1 & 1 & -1 & 0 & \sqrt{\frac{2(t-s+1)}{s}} & 0
\end{array}\right]$$\medskip{}

\begin{figure} [H]
\begin{center}
\includegraphics[scale=0.6]{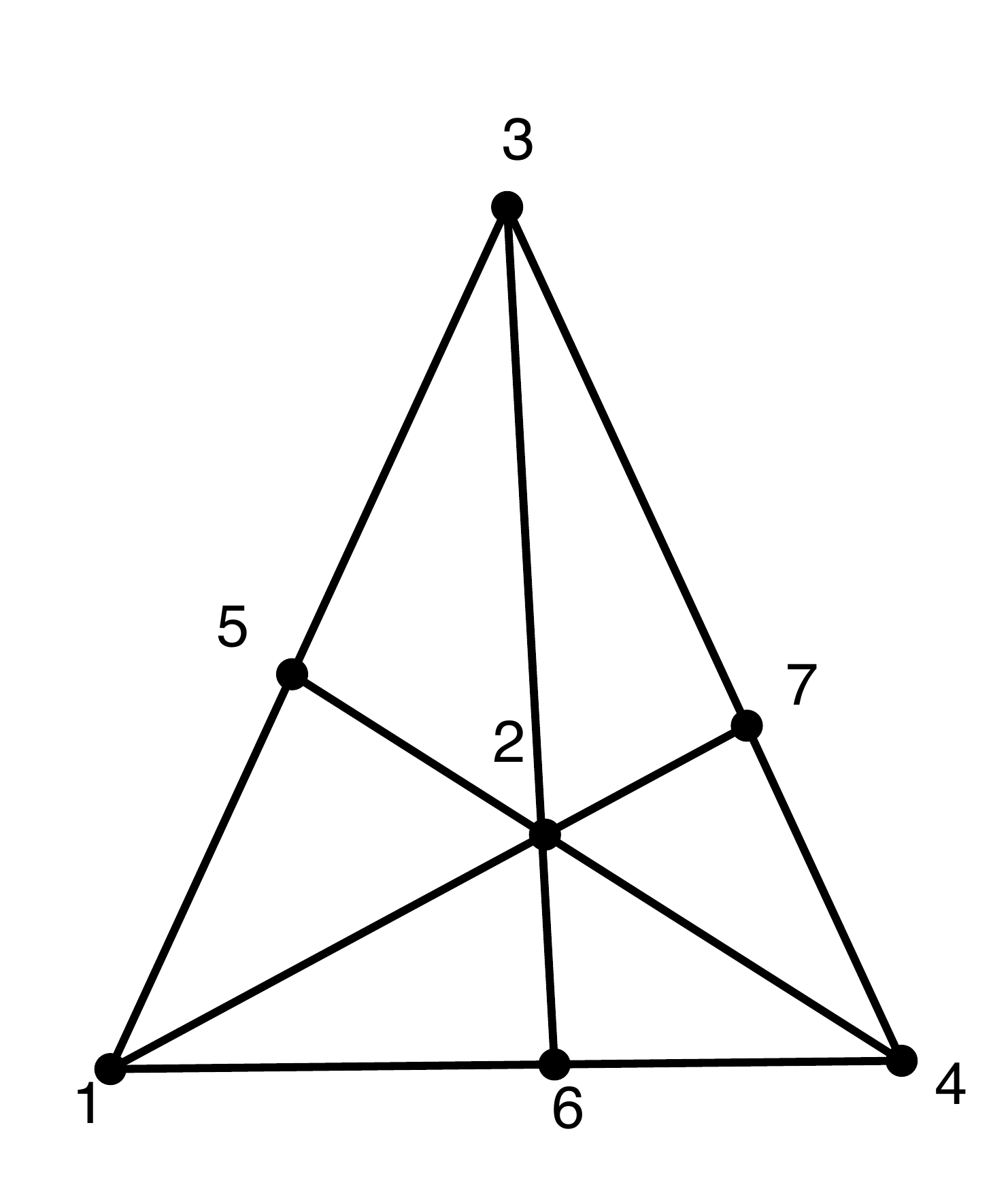}
\end{center}
\end{figure}

\medskip{}

$$G = 2(1+s+t)\left[\begin{array}{ccc}
1 & 0 & 0\\
0 & \frac{1}{t} & 0\\
0 & 0 & \frac{1}{s}
\end{array}\right]$$
\medskip{}

$\mathcal{I_{\mathrm{2}}}$ =$\left\{ (5,6),(5,7),(6,7)\right\} $

\medskip{}

$\mathcal{I}_{3}$ = $\begin{cases}
(1,2,7),(3,4,7) & \nu_{31}=\frac{s+t}{1+s+t}\\
(1,3,5),(2,4,5) & \nu_{32}=\frac{1+s}{1+s+t}\\
(1,4,6),(2,3,6) & \nu_{33}=\frac{1+t}{1+s+t}
\end{cases}$\medskip{}

\subsection{$\vee$-system $(E_{6},A_{3})$}

\medskip{}

\[
\mathcal{A}=\left[\begin{array}{cccccccc}
1 & 2 & 3 & 4 & 5 & 6 & 7 & 8\\
\\
2 & 2 & 2\sqrt{6} & 2 & -2 & 2 & -2 & 0\\
2 & -2 & 0 & \frac{1}{2} & \frac{1}{2} & -1 & -1 & \frac{\sqrt{6}}{2}\\
0 & 0 & 0 & \frac{1}{2} & \frac{1}{2} & 1 & 1 & \frac{\sqrt{6}}{2}
\end{array}\right]
\]
\medskip{}

\begin{figure} [H]
\begin{center}
\includegraphics[scale=0.6]{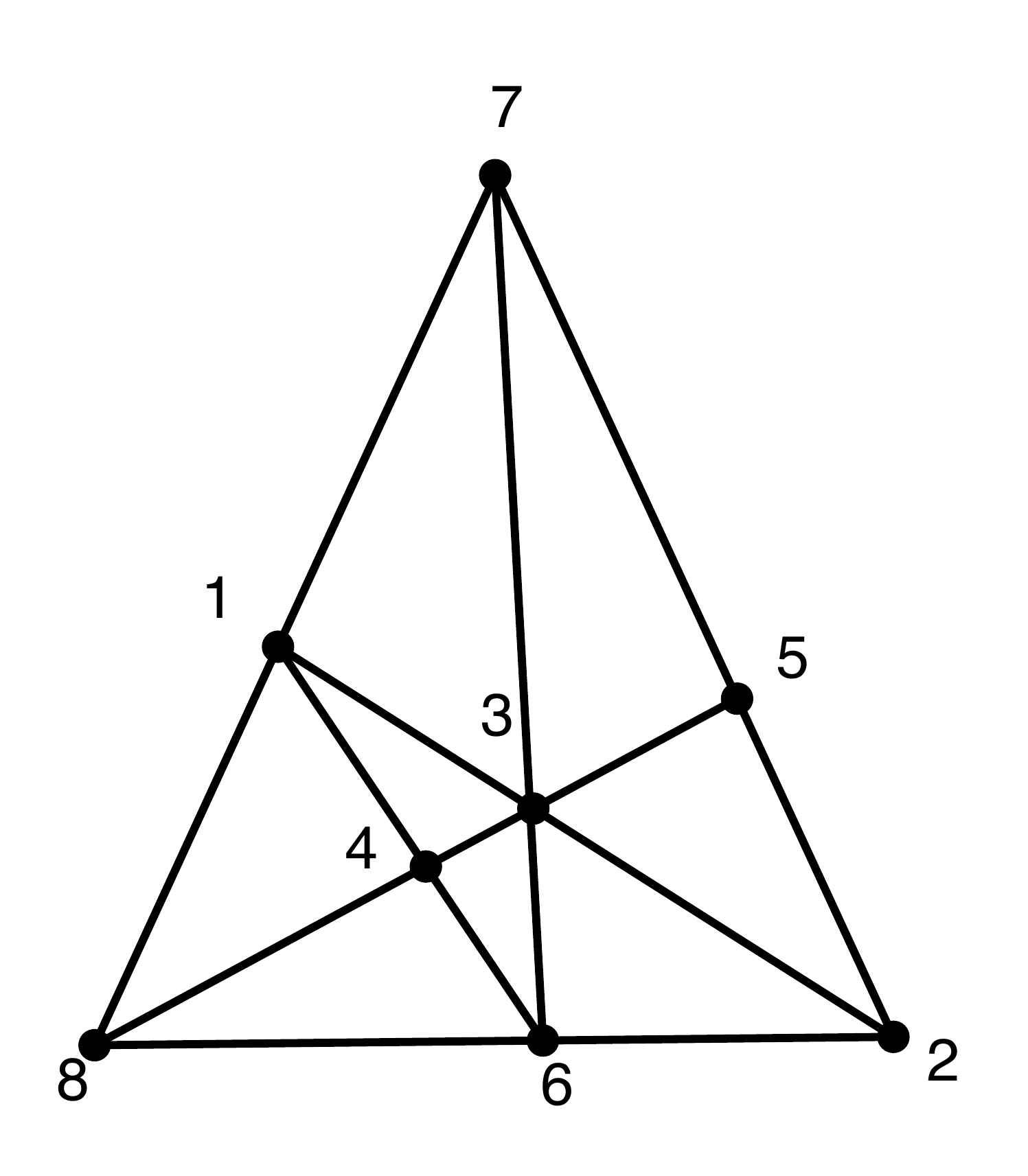}
\end{center}
\end{figure}

\medskip{}

G = $4\left[\begin{array}{ccc}
12 & 0 & 0\\
0 & 3 & 0\\
0 & 0 & 1
\end{array}\right]$\medskip{}

$\mathcal{I_{\mathrm{2}}}$ =$\left\{ (1,5),(2,4),(4,7),(5,6)\right\} $\medskip{}

$\mathcal{I_{\mathrm{3}}}$ =$\begin{cases}
(2,5,7),(1,4,6) & \nu_{31}=\frac{1}{2}\\
(6,3,7),(8,1,7),(1,2,3),(8,6,2) & \nu_{32}=\frac{2}{3}
\end{cases}$\medskip{}

$\mathcal{I_{\mathrm{4}}}$ =$\begin{cases}
(8,4,3,5) & \nu_{4}=\frac{2}{3}\end{cases}$\medskip{}

\subsection{$\vee$-systems of $B_{3}(c_1,c_2,c_3, \gamma)$}

\begin{eqnarray*}
\mathcal{A}
 & =
 & \left[\begin{array}{ccc}
1 & 2 & 3\\
\\
\sqrt{2c_{1}(c_{1}+\gamma)} & 0 & 0\\
0 & \sqrt{2c_{2}(c_{2}+\gamma)} & 0\\
0 & 0 & \sqrt{2c_{3}(c_{3}+\gamma)}
\end{array}\cdots\right.\\
 & \begin{array}{cccccc}
\end{array} & \left.\cdots\begin{array}{c}
\begin{array}{cccccc}
4 & 5 & 6 & 7 & 8 & 9\\
\\
0 & \sqrt{c_{1}c_{3}} & \sqrt{c_{1}c_{2}} & 0 & -\sqrt{c_{1}c_{3}} & -\sqrt{c_{1}c_{2}}\\
\sqrt{c_{2}c_{3}} & 0 & \sqrt{c_{1}c_{2}} & -\sqrt{c_{2}c_{3}} & 0 & \sqrt{c_{1}c_{2}}\\
\sqrt{c_{2}c_{3}} & \sqrt{c_{1}c_{3}} & 0 & \sqrt{c_{2}c_{3}} & \sqrt{c_{1}c_{3}} & 0
\end{array}\end{array}\right]
\end{eqnarray*}
\begin{figure} [H]
\begin{center}
\includegraphics[scale=0.55]{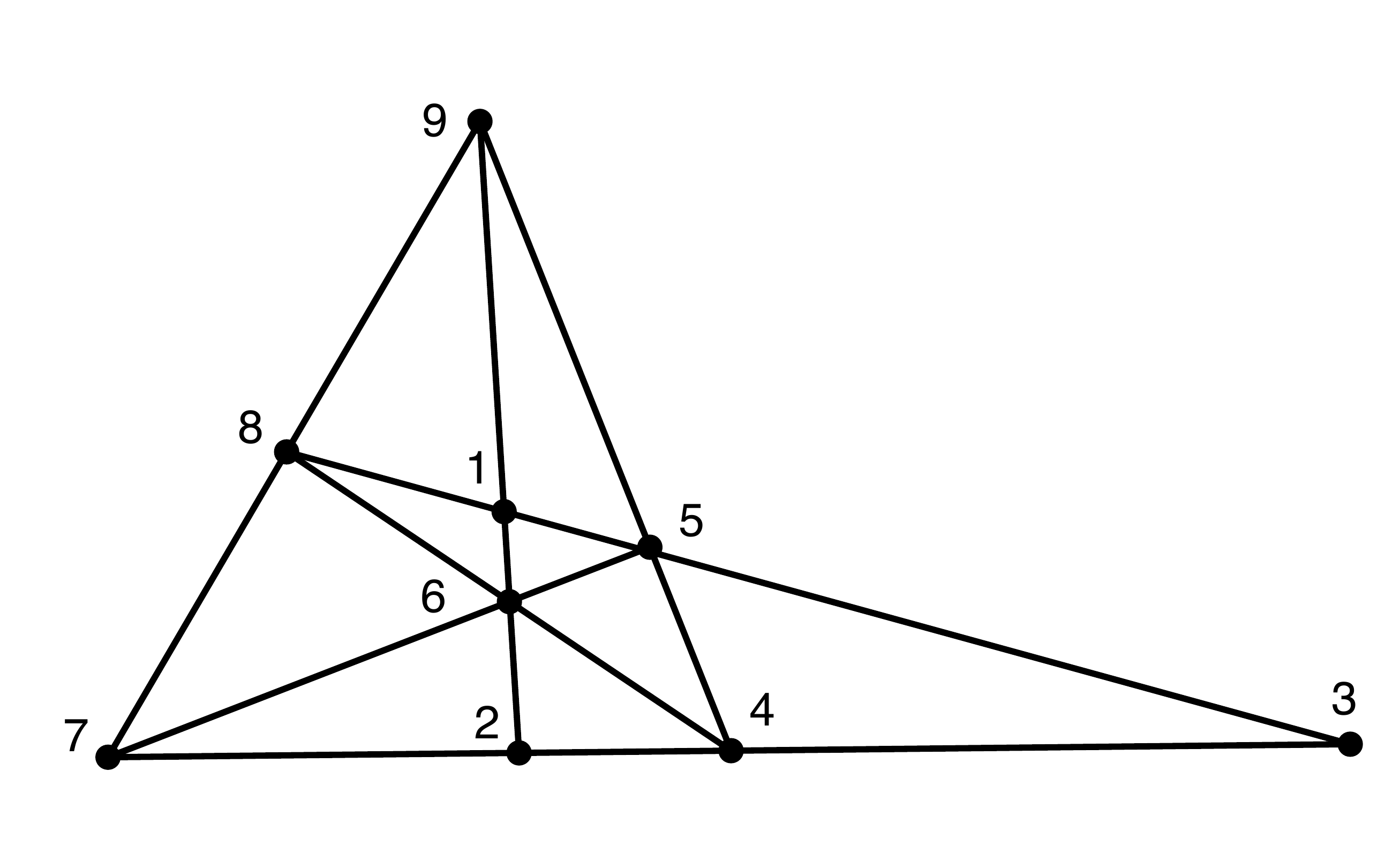}
\end{center}
\end{figure}
$$G = 2(c_{1}+c_{2}+c_{3}+\gamma)\left[\begin{array}{ccc}
c_{1} & 0 & 0\\
0 & c_{2} & 0\\
0 & 0 & c_{3}
\end{array}\right]$$

$\mathcal{I_{\mathrm{2}}}$ =$\left\{ (1,4),(1,7),(2,5),(2,8),(3,6),(3,9)\right\} $

Four 3-point lines:

$\mathcal{I_{\mathrm{3}}}$ =$\left\{ (4,5,9),(4,6,8),(5,6,7),(7,8,9)\right\},$
$\nu_{3}=\frac{\sum c_{i}}{2(\gamma+\sum_{i=1}^{3}c_{i})}$

Three 4-point lines:

$\mathcal{I_{\mathrm{4}}}$ =$\left\{ (1,2,6,9),(1,3,5,8),(2,3,4,7)\right\}, $
$\nu_{4j}=\frac{\gamma-c_{j}+\sum c_{i}}{\gamma+\sum_{i=1}^{3}c_{i})}$, \,j=1,2,3\medskip{}

\subsection{$\vee$-system $(E_{6},A_{1}^{3})$}

\medskip{}
\begin{eqnarray*}
\mathcal{A}
 & =
 & \left[\begin{array}{cccccccccc}
1 & 2 & 3 & 4 & 5 & 6 & 7 & 8 & 9 & 10\\
\\
\sqrt{2} & \sqrt{2} & 2\sqrt{3} & 0 & \sqrt{2} & -\sqrt{2} & -\sqrt{2} & \sqrt{2} & 0 & 0\\
\sqrt{2} & -\sqrt{2} & 0 & 2 & \frac{\sqrt{2}}{2} & -\frac{\sqrt{2}}{2} & \frac{\sqrt{2}}{2} & -\frac{\sqrt{2}}{2} & 1 & -1\\
0 & 0 & 0 & 0 & \frac{\sqrt{2}}{2} & \frac{\sqrt{2}}{2} & \frac{\sqrt{2}}{2} & \frac{\sqrt{2}}{2} & 1 & 1
\end{array}\right]
\end{eqnarray*}
\medskip{}

\begin{figure} [H]
\begin{center}
\includegraphics[scale=0.5]{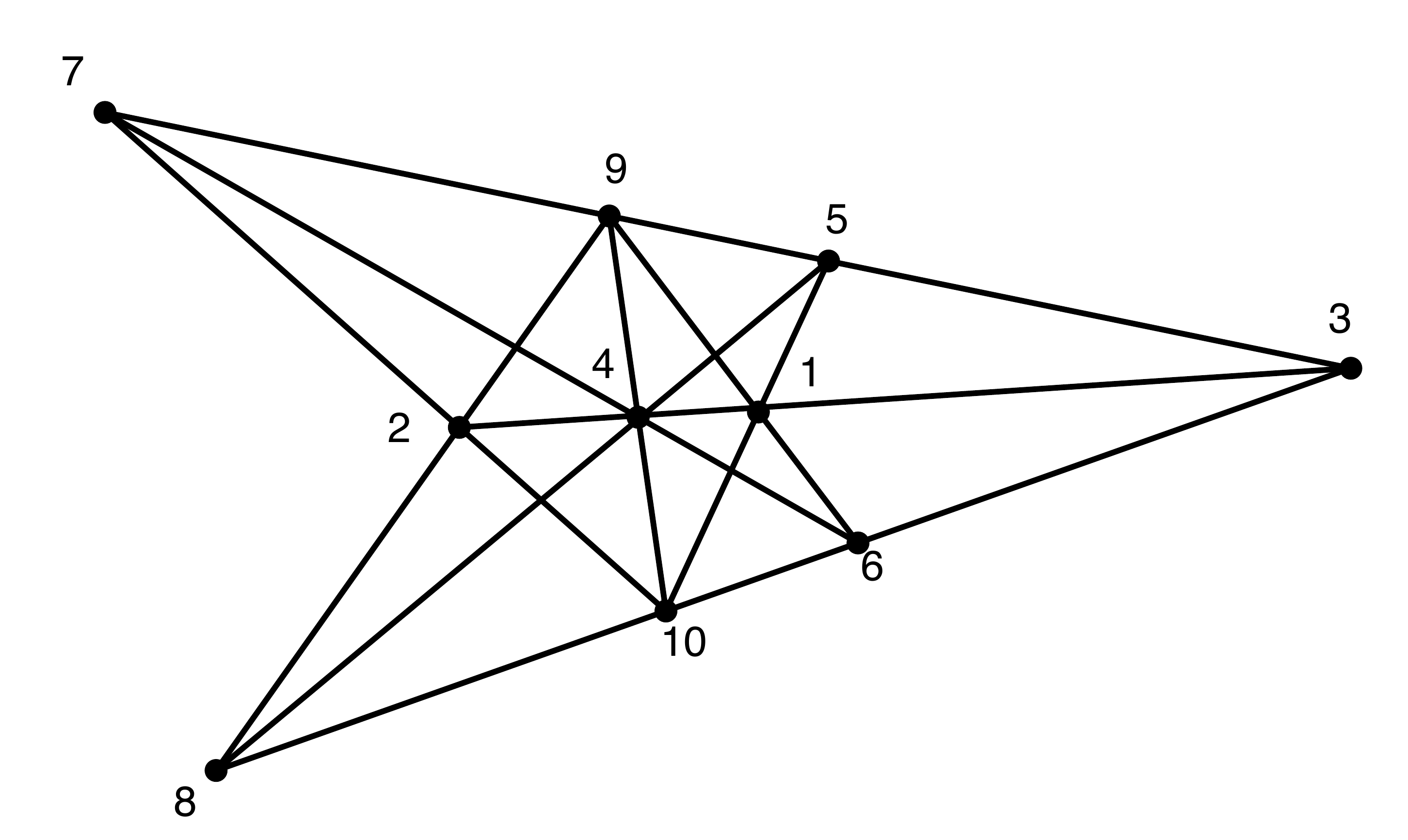}
\end{center}
\end{figure}
\medskip{}

$$G = \left[\begin{array}{ccc}
24 & 0 & 0\\
0 & 12 & 0\\
0 & 0 & 4
\end{array}\right]$$

$\mathcal{I_{\mathrm{2}}}$ =$\left\{ (1,7),(1,8),(2,5),(2,6),(5,6),(7,8)\right\} $

\medskip{}

$\mathcal{I}$$_{3}$ =$\begin{cases}
(10,2,7),(4,6,7),(9,8,2),(1,5,10),(4,5,8),(9,1,6), & \nu_{31}=\frac{5}{12}\\
(4,9,10), & \nu_{32}=\frac{1}{2}
\end{cases}$

\medskip{}

$\mathcal{I}_{4}$ =$\left\{ (4,1,2,3),(9,5,3,7),(6,3,8,10)\right\} ,\nu_{4}=\frac{2}{3}$

\medskip{}

\subsection{$\vee$-system $(AB_{4}(t),A_{1})_{2}$}

\medskip{}

$$\mathcal{A} =\left[\begin{array}{cccccccccc}
1 & 2 & 3 & 4 & 5 & 6 & 7 & 8 & 9 & 10\\
\\
\sqrt{2} & 0 & 0 & 1 & \frac{1}{\sqrt{4t^{2}+1}} & 1 & \frac{1}{\sqrt{4t^{2}+1}} & 0 & 0 & \frac{t\sqrt{2}}{\sqrt{(t^{2}+1)}}\\
0 & \sqrt{2} & 0 & 1 & -\frac{1}{\sqrt{4t^{2}+1}} & 0 & 0 & 1 & \frac{1}{\sqrt{4t^{2}+1}} & \frac{t\sqrt{2}}{\sqrt{(t^{2}+1)}}\\
0 & 0 & \sqrt{2} & 0 & 0 & 1 & -\frac{1}{\sqrt{4t^{2}+1}} & 1 & -\frac{1}{\sqrt{4t^{2}+1}} & \frac{t\sqrt{2}}{\sqrt{(t^{2}+1)}}
\end{array}\right]$$ 

\medskip{}

\begin{figure} [H]
\begin{center}
\includegraphics[scale=0.65]{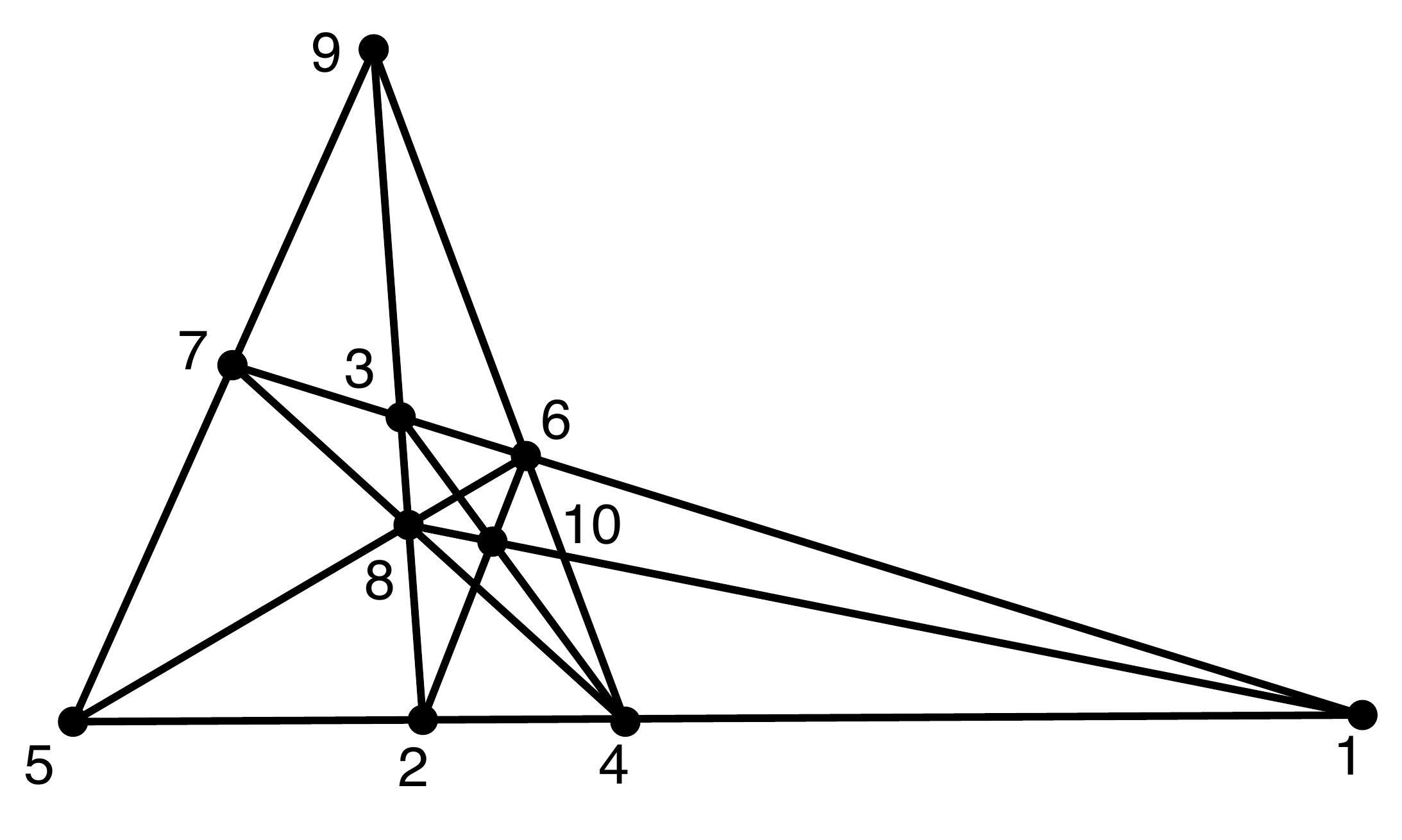}
\end{center}
\end{figure}

$$G = \frac{6(1+2t^2)}{(1+t^2)(1+4t^2)}\left[\begin{array}{ccc}
1+2t^2 & t^2 & t^2\\
t^2 & 1+2t^2 & t^2\\
t^2 & t^2 & 1+2t^2
\end{array}\right]$$

\medskip{}

$\mathcal{I}_{2}$ = $\left\{ (1,9),(2,7),(3,5),(5,10),(7,10),(9,10)\right\} $\medskip{}

$\mathcal{I}_{3}$ = $\begin{cases}
(1,8,10),(2,6,10),(3,4,10) & \nu_{31}=\frac{1+4t^{2}}{3(1+2t^{2})}\\
(4,6,9),(4,7,8),(5,6,8) & \nu_{32}=\frac{3+4t^{2}}{6(1+2t^{2})}\\
(5,7,9) & \nu_{33}=\frac{1}{2(1+2t^{2})}
\end{cases}$

\medskip{}

$\mathcal{I}_{4}$ = $\{(1,2,4,5),(1,3,6,7),(2,3,8,9)\},$ $\nu_{4}=\frac{2}{3}$\medskip{}

\subsection{$\vee$-system $(AB_{4}(t),A_{1})_{1}$}

$$\mathcal{A} =\left[\begin{array}{ccccccccccc}
1 & 2 & 3 & 4 & 5 & 6 & 7 & 8 & 9 & 10 & 11\\
\\
\sqrt{2(2t^{2}+1)} & 0 & 0 & \sqrt{2} & \sqrt{2} & t\sqrt{2} & t\sqrt{2} & t & t & t & t\\
0 & 2\sqrt{2(t^{2}+1)} & 0 & \sqrt{2} & -\sqrt{2} & 0 & 0 & 2t & -2t & 2t & -2t\\
0 & 0 & t\sqrt{\frac{2(2t^{2}-1)}{t^2+1}} & 0 & 0 & t\sqrt{2} & -t\sqrt{2} & t & t & -t & -t
\end{array}\right]$$

\medskip{}

\begin{figure} [H]
\begin{center}
\includegraphics[scale=0.5]{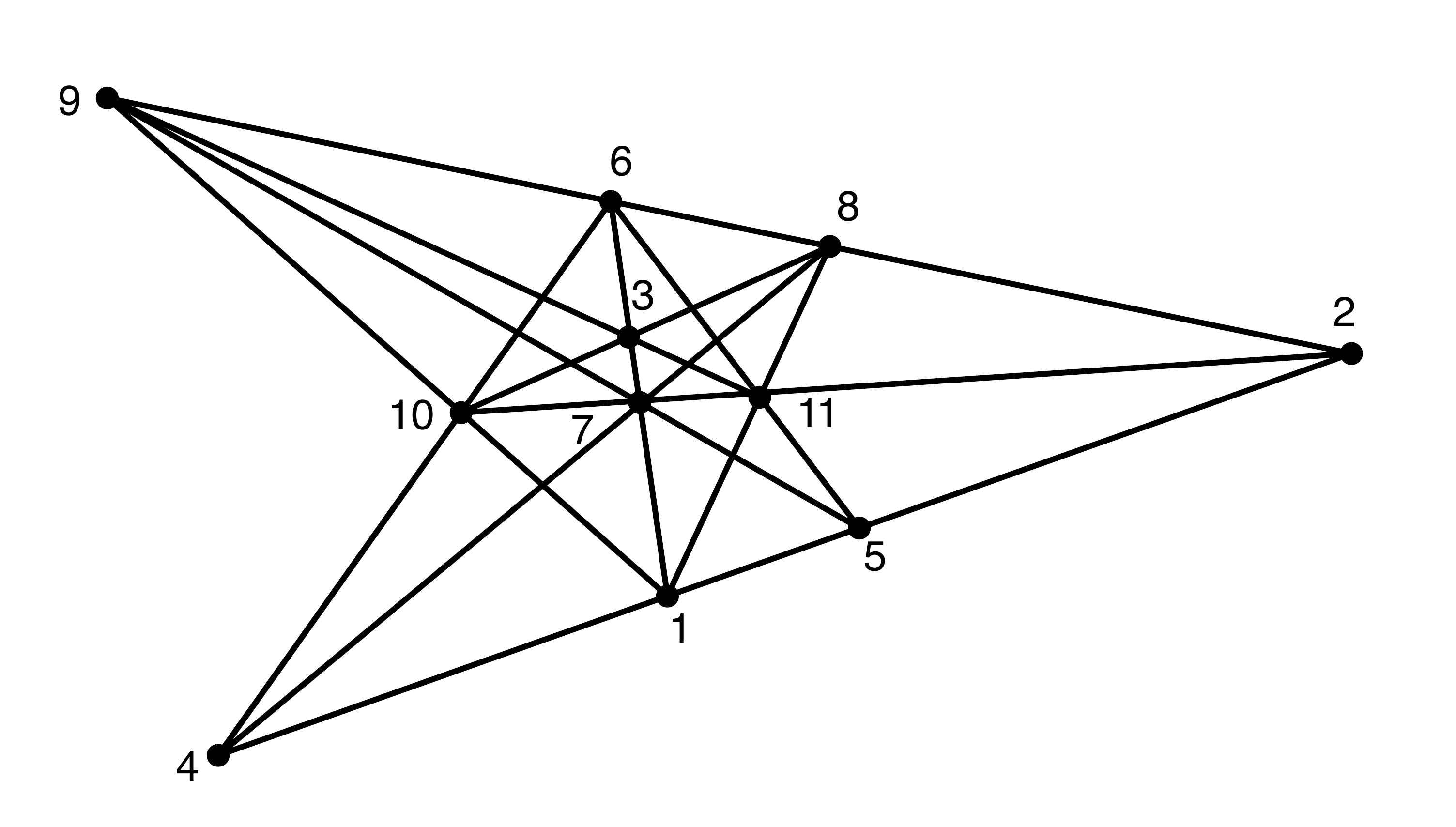}
\end{center}
\end{figure}

\medskip{}

$$G = 6\left[\begin{array}{ccc}
1+2t ^2& 0 & 0\\
0 & 2+4t^2 & 0\\
0 & 0 & \frac{t^{2}+2t^{4}}{1+t^{2}}
\end{array}\right]$$
\medskip{}

$\mathcal{I}_{2}$ = $\left\{ (2,3),(3,4),(3,5),(4,9),(4,11),(5,8),(5,10)\right\} $\medskip{}

$\mathcal{I}_{3}$ = $\begin{cases}
(4,6,10),(4,7,8),(5,6,11),(5,7,9) & \nu_{31}=\frac{3+4t^{2}}{6(1+2t^{2})}\\
(1,8,11),(1,9,10) & \nu_{32}=\frac{1+3t^{2}}{3(1+2t^{2})}\\
(3,8,10),(3,9,11) & \nu_{33}=\frac{t^{2}}{(1+2t^{2})}
\end{cases}$

\medskip{}

$\mathcal{I}_{4}$ = $\begin{cases}
(2,6,8,9),(2,7,10,11) & \nu_{41}=\frac{2}{3}\\
(1,2,4,5) & \nu_{42}=\frac{3+2t^{2}}{3(1+2t^{2})}\\
(1,3,6,7) & \nu_{43}=\frac{1+4t^{2}}{3(1+2t^{2})}
\end{cases}$\medskip{}

\subsection{$\vee$-system $G_{3}(t)$}

$$\mathcal{A} =\left[\begin{array}{ccccccccccccc}
1 & 2 & 3 & 4 & 5 & 6 & 7 & 8 & 9 & 10 & 11 & 12 & 13\\
\\
\sqrt{2t+1} & 0 & \sqrt{2t+1} & \sqrt{\frac{2t-1}{3}} & 2\sqrt{\frac{2t-1}{3}} & \sqrt{\frac{2t-1}{3}} & 0 & 1 & 1 & 0 & 0 & 1 & 1\\
0 & \sqrt{2t+1} & \sqrt{2t+1} & -\sqrt{\frac{2t-1}{3}} & \sqrt{\frac{2t-1}{3}} & 2\sqrt{\frac{2t-1}{3}} & 0 & 0 & 0 & 1 & 1 & 1 & 1\\
0 & 0 & 0 & 0 & 0 & 0 & \sqrt{\frac{3}{t}} & 1 & -1 & 1 & -1 & 1 & -1
\end{array}\right]$$\medskip{}

\begin{figure} [H]
\begin{center}
\includegraphics[scale=0.4]{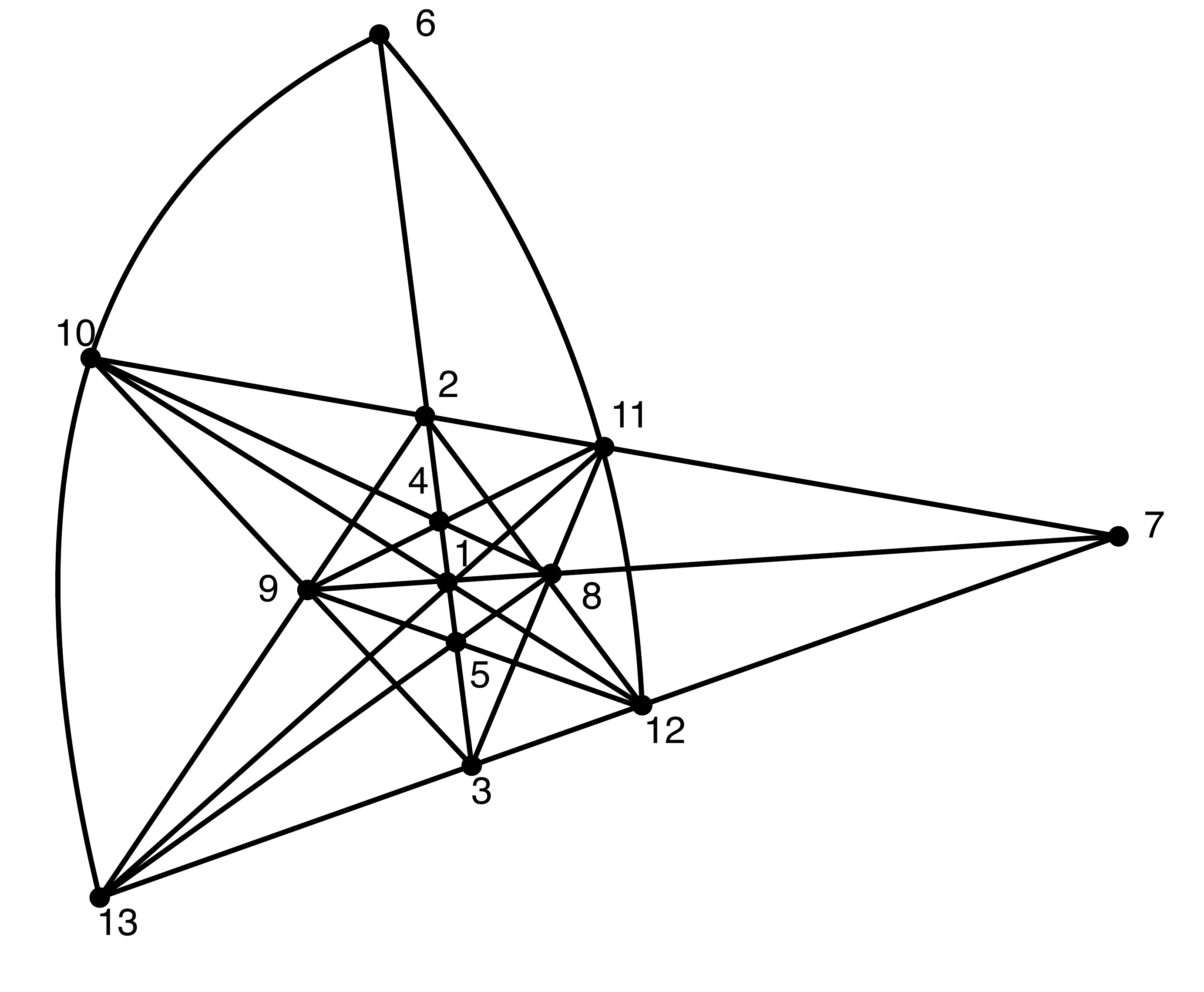}
\end{center}
\end{figure}

$$G = \left[\begin{array}{ccc}
4(1+2t) & 2(1+2t) & 0\\
2(1+2t) & 4(1+2t) & 0\\
0 & 0 & 3(2+\frac{1}{t})
\end{array}\right]$$

$\mathcal{I_{\mathrm{2}}}$ =$\left\{ (4,7),(4,12),(4,13),(5,7),(5,10),(5,11),(6,7),(6,8),(6,9)\right\} $

\medskip{}

$\mathcal{I}_{3}$ = $\begin{cases}
(2,9,13),(2,8,12),(3,8,11),(1,11,13),(1,10,12),(3,9,10) & \nu_{31}=\frac{3+4t}{6(1+2t)}\\
(4,8,10),(6,11,12),(6,10,13),(5,9,12),(5,8,13),(4,9,11) & \nu_{32}=\frac{1+4t}{6(1+2t)}
\end{cases}$

\medskip{}

$\mathcal{I}_{4}$ =\{(2,7,10,11),(1,7,8,9),(3,7,12,13)\} ,$\nu_{4}=\frac{3+2t}{3+6t}$

\medskip{}

$\mathcal{I}_{6}$ = $\{(1,2,3,4,5,6)\},$$\nu_{6}=\frac{2t}{1+2t}$\medskip{}

\subsection{$\vee$-system $(E_{7},A_{1}^{2}\times A_{2})$}

\medskip{}
\begin{eqnarray*}
\mathcal{A}
 & =
\left[\begin{array}{ccccccccccccc}
1 & 2 & 3 & 4 & 5 & 6 & 7 & 8 & 9 & 10 & 11 & 12 & 13\\
\\
\sqrt{3} & \sqrt{3} & 2 & 0 & 0 & \frac{1}{\sqrt{2}} & -\frac{1}{\sqrt{2}} & -\frac{1}{\sqrt{2}} & \frac{1}{\sqrt{2}} & \sqrt{\frac{3}{2}} & -\sqrt{\frac{3}{2}} & -\sqrt{\frac{3}{2}} & \sqrt{\frac{3}{2}}\\
\sqrt{3} & -\sqrt{3} & 0 & 2\sqrt{6} & 0 & \frac{3}{\sqrt{2}} & -\frac{3}{\sqrt{2}} & \frac{3}{\sqrt{2}} & -\frac{3}{\sqrt{2}} & \sqrt{\frac{3}{2}} & -\sqrt{\frac{3}{2}} & \sqrt{\frac{3}{2}} & -\sqrt{\frac{3}{2}}\\
0 & 0 & 0 & 0 & 1 & \frac{1}{\sqrt{2}} & \frac{1}{\sqrt{2}} & \frac{1}{\sqrt{2}} & \frac{1}{\sqrt{2}} & \sqrt{\frac{3}{2}} & \sqrt{\frac{3}{2}} & \sqrt{\frac{3}{2}} & \sqrt{\frac{3}{2}}
\end{array}\right]
\end{eqnarray*}
\medskip{}

\begin{figure} [H]
\begin{center}
\includegraphics[scale=0.56]{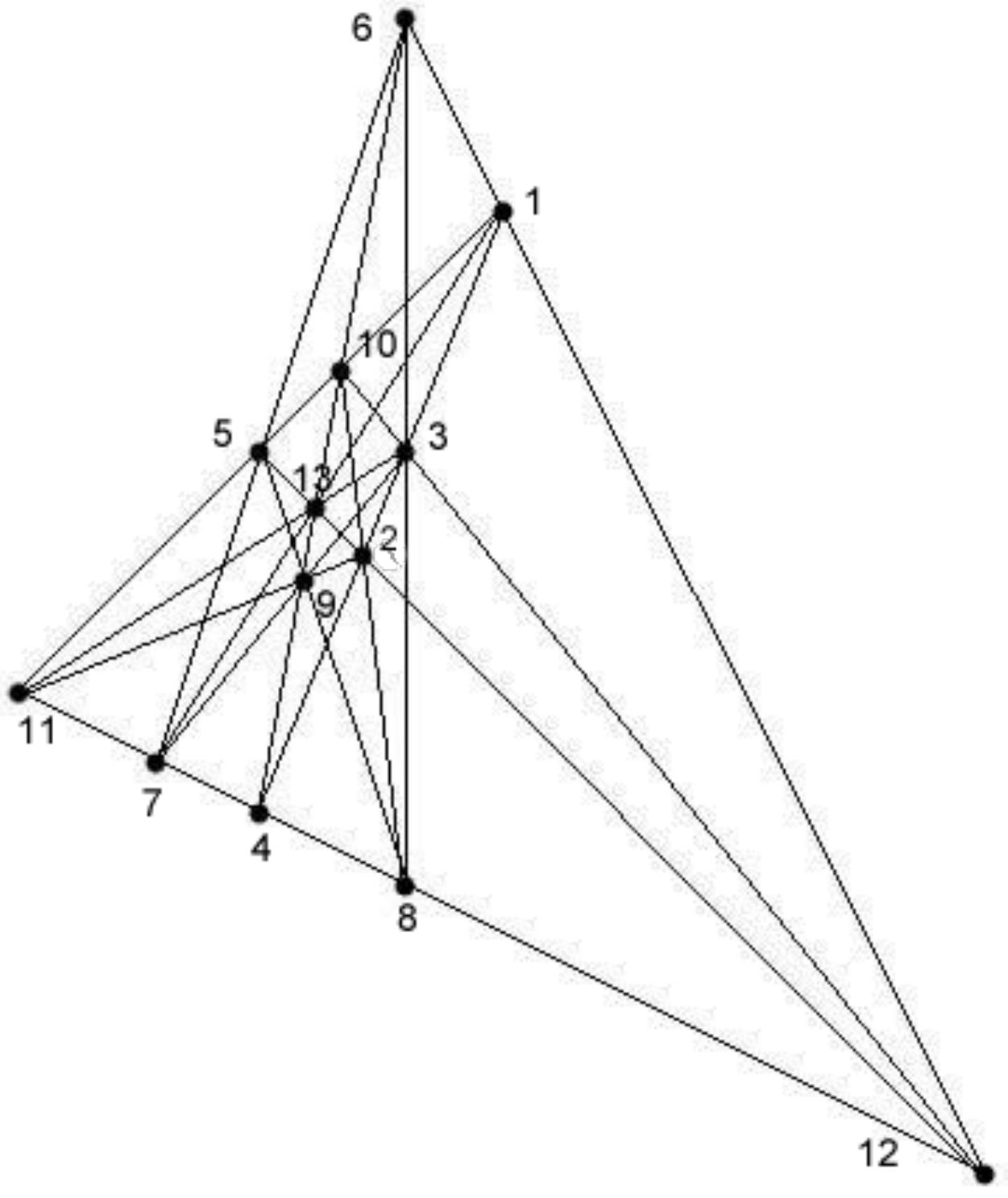}
\end{center}
\end{figure}

$$G = 9\left[\begin{array}{ccc}
2 & 0 & 0\\
0 & 6 & 0\\
0 & 0 & 1
\end{array}\right]$$

$\mathcal{I_{\mathrm{2}}}$ =$\left\{ (1,8),(1,9),(2,6),(2,7),(3,5),(4,5),(6,11),(7,10),(8,13),(9,12)\right\} $\medskip{}

$\mathcal{I_{\mathrm{3}}}$ =$\begin{cases}
(5,9,8),(7,5,6) & \nu_{31}=\frac{2}{9}\\
(10,3,12),(11,13,3) & \nu_{32}=\frac{7}{18}\\
(6,3,8),(7,9,3) & \nu_{33}=\frac{5}{18}\\
(6,1,12),(10,2,8),(7,13,1),(11,9,2) & \nu_{34}=\frac{1}{3}
\end{cases}$

\medskip{}

$\mathcal{I_{\mathrm{4}}}$ =$\begin{cases}
(5,13,2,12),(11,5,10,1) & \nu_{41}=\frac{4}{9}\\
(4,2,3,1) & \nu_{42}=\frac{5}{9}
\end{cases}$

\medskip{}

$\mathcal{I_{\mathrm{5}}}$ =$\left\{ (11,7,4,8,12),(6,10,13,9,4)\right\} ,\nu_{5}=\frac{2}{3}$\medskip{}

\subsection{$\vee$-system $F_{3}(t)$}

\medskip{}

\begin{eqnarray*}
\mathcal{A}
 & =
\left[\begin{array}{ccccccccccccc}
1 & 2 & 3 & 4 & 5 & 6 & 7 & 8 & 9 & 10 & 11 & 12 & 13\\
\\
\sqrt{4t^{2}+2} & 0 & 0 & 1 & 1 & 1 & 1 & 0 & 0 & t\sqrt{2} & t\sqrt{2} & t\sqrt{2} & t\sqrt{2}\\
0 & \sqrt{4t^{2}+2} & 0 & 1 & -1 & 0 & 0 & 1 & 1 & t\sqrt{2} & -t\sqrt{2} & t\sqrt{2} & -t\sqrt{2}\\
0 & 0 & \sqrt{4t^{2}+2} & 0 & 0 & 1 & -1 & 1 & -1 & t\sqrt{2} & t\sqrt{2} & -t\sqrt{2} & -t\sqrt{2}
\end{array}\right]
\end{eqnarray*}

\medskip{}

\begin{figure} [H]
\begin{center}
\includegraphics[scale=0.5]{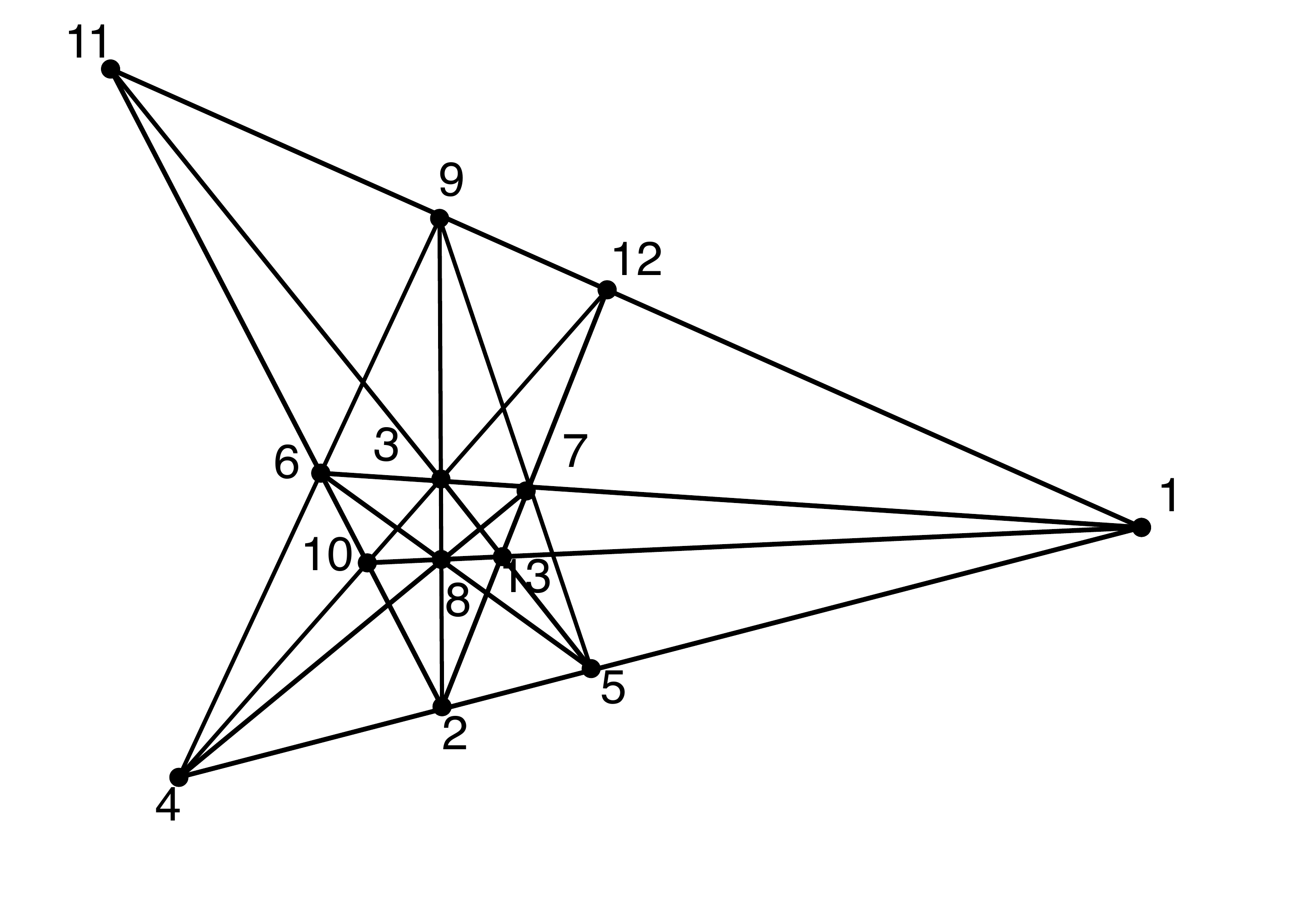}
\end{center}
\end{figure} 
$$G = (6+12t^2)I$$
\medskip

$\mathcal{I_{\mathrm{2}}}$ =\{(4,11),(4,13),(5,10),(5,12),(6,13),(7,10),(7,11),(8,11),(8,12),(9,10),(9,13)\}

\medskip{}

$\mathcal{I_{\mathrm{3}}}$ =$\begin{cases}
(4,6,9),(4,7,8),(5,6,8),(5,7,9), & \nu_{31}=\frac{1}{2+4t^{2}}
\end{cases}$\medskip{}

$\mathcal{I_{\mathrm{4}}}$ =$\begin{cases}
(1,2,4,5),(1,3,6,7),(2,3,8,9), & \nu_{32}=\frac{2(1+t^{2})}{3+6t^{2}}\\
(1,8,10,13),(1,9,11,12),(2,6,10,11),(2,7,12,13),\\
(3,4,10,12),(3,5,11,13), & \nu_{33}=\frac{1+4t^{2}}{3+6t^{2}}
\end{cases}$\medskip{}

\subsection{Coxeter $\vee$-system $H_{3}$}

$$\mathcal{A} = \left[\begin{array}{ccccccccccccccc}
1 & 2 & 3 & 4 & 5 & 6 & 7 & 8 & 9 & 10 & 11 & 12 & 13 & 14 & 15\\
2\phi & 0 & 0 & 1 & -1 & 1 & 1 & \phi & -\phi & \phi & \phi & \phi^{2} & \phi^{2} & -\phi^{2} & \phi^{2}\\
0 & 2\phi & 0 & \phi & \phi & -\phi & \phi & -\phi^{2} & \phi^{2} & \phi^{2} & \phi^{2} & 1 & -1 & 1 & 1\\
0 & 0 & 2\phi & \phi^{2} & \phi^{2} & \phi^{2} & -\phi^{2} & 1 & 1 & -1 & 1 & -\phi & \phi & \phi & \phi
\end{array}\right]$$
where $\phi$ is the golden ratio $\phi=\frac{1+\sqrt{5}}{2}.$ 
\medskip{}

\begin{figure} [H]
\begin{center}
\includegraphics[scale=0.4]{H3}
\end{center}
\end{figure}

$$G = 10(3+\sqrt{5})\mathrm{I}$$

$\mathcal{I_{\mathrm{2}}}$ =$\begin{cases}(1,2),(1,3),(2,3),(4,8),(4,12),(5,10),(5,13),(6,11),\\
(6,14),(7,9),(7,15),(8,12),(9,15),(10,13),(11,14).
\end{cases}$ 

\medskip{}

$\mathcal{I_{\mathrm{3}}}$ =$\begin{cases}
(1,8,10),(1,9,11),(2,4,6),(2,5,7),(3,12,15), \\ 
(3,13,14),(4,9,13),(5,11,12),(6,10,15),(7,8,14) & \nu_{31}=\frac{3}{10}
\end{cases}$ 

\medskip{}

$\mathcal{I_{\mathrm{5}}}$ =$\begin{cases}
(1, 4, 5, 14, 15), (1, 6, 7, 12, 13), (2, 8,11, 13, 15), (2, 9, 10, 12, 14),\\
(3,4,7,10,11),(3,5,6,8,9) & \nu_{5}=\frac{1}{2}
\end{cases}$ 

\subsection{$\vee$-system $(E_{8},A_{1}\times A_{4})$}

\begin{eqnarray*}
\mathcal{A}
 & =
 & \left[\begin{array}{cccccccccccc}
1 & 2 & 3 & 4 & 5 & 6 & 7 & 8 & 9 & 10 & 11 & 12\\
\\
\sqrt{10} & \sqrt{10} & \sqrt{2} & \sqrt{2} & 0 & 0 & 2 & 0 & 1 & -1 & \sqrt{5} & -\sqrt{5}\\
\sqrt{10} & -\sqrt{10} & 0 & 0 & \sqrt{5} & \sqrt{10} & 0 & 2\sqrt{10} & \frac{2}{5} & \frac{5}{2} & -\frac{3\sqrt{5}}{2} & -\frac{3\sqrt{5}}{2}\\
0 & 0 & \sqrt{2} & -\sqrt{2} & \sqrt{5} & -\sqrt{10} & 0 & 0 & \frac{1}{2} & \frac{1}{2} & \frac{\sqrt{5}}{2} & \frac{\sqrt{5}}{2}
\end{array}\cdots\right.\\
 & \begin{array}{cccccc}
\end{array} & \left.\cdots\begin{array}{cccc}
13 & 14 & 15 & 16\\
\\
\sqrt{10} & -\sqrt{10} & 0 & 0\\
\frac{\sqrt{10}}{2} & \frac{\sqrt{10}}{2} & -\frac{5}{\sqrt{2}} & \frac{3\sqrt{10}}{2}\\
\frac{\sqrt{10}}{2} & \frac{\sqrt{10}}{2} & \frac{1}{\sqrt{2}} & \frac{\sqrt{10}}{2}
\end{array}\right]
\end{eqnarray*}
\medskip{}

\begin{figure} [H]
\begin{center}
\includegraphics[scale=0.9]{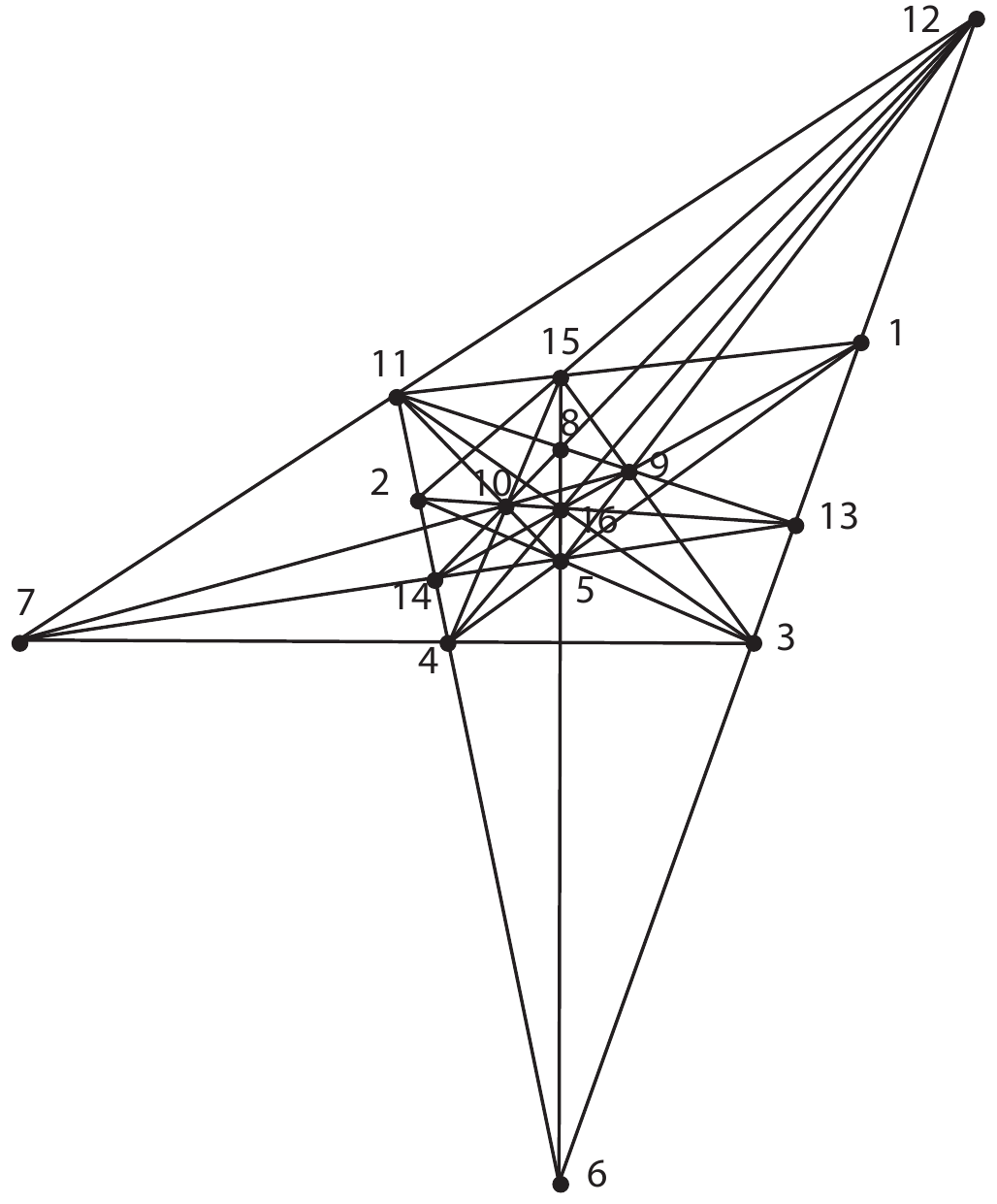}
\end{center}
\end{figure}
\medskip{}

$$G = 30 \left[\begin{array}{ccc}
2 & 0 & 0\\
0 & 5 & 0\\
0 & 0 & 1
\end{array}\right]$$

\medskip{}

$\mathcal{I_{\mathrm{2}}}$ =$\{(1,10),(2,9),(3,8),(3,10),(3,14),(4,8),(4,9),(4,13),(6,7),(6,9),(6,10),$

\qquad{}$(7,15),(7,16),(13,15),(14,15)\}$\medskip{}

$\mathcal{I_{\mathrm{3}}}$ =$\begin{cases}
(5,10,11),(5,9,12),(7,11,12) & \nu_{31}=\frac{7}{30}\\
(7,9,10) & \nu_{32}=\frac{1}{10}\\
(3,16,11),(4,5,1),(4,16,12),(3,5,2),(11,15,1),(2,15,12) & \nu_{33}=\frac{4}{15}\\
(4,10,15),(3,9,15),(3,4,7) & \nu_{34}=\frac{2}{15}
\end{cases}$

\medskip{}

$\mathcal{I_{\mathrm{4}}}=\{(7,2,8,1),(12,8,10,14),(14,16,9,1),(13,16,10,2),(13,5,14,7),(13,9,8,11)\}$,
$ $

\hfill{}$\nu_{4}=\frac{2}{5}$

\medskip{}

$\mathcal{I_{\mathrm{5}}}$ =$\{(15,8,16,5,6),(12,1,13,3,6),(11,2,14,4,6)\},\nu_{5}=\frac{3}{5}$

\medskip{}

\subsection{$\vee$-system $(E_{8},A_{2}\times A_{3})$}

\begin{eqnarray*}
\mathcal{A}
 & =
 & \left[\begin{array}{ccccccccccccc}
1 & 2 & 3 & 4 & 5 & 6 & 7 & 8 & 9 & 10 & 11 & 12 & 13\\
\\
2\sqrt{3} & 2\sqrt{3} & 0 & 0 & \sqrt{3} & \sqrt{\frac{15}{2}} & 2\sqrt{3} & 0 & \frac{3}{2} & \frac{3}{2} & -3 & -3 & \frac{3\sqrt{6}}{2}\\
2\sqrt{3} & -2\sqrt{3} & 2 & 2 & 0 & 0 & 0 & 2\sqrt{6} & 2 & -2 & 2 & -2 & 0\\
0 & 0 & 2 & -2 & \sqrt{3} & -\sqrt{\frac{15}{2}} & 0 & 0 & \frac{1}{2} & \frac{1}{2} & 1 & 1 & \frac{\sqrt{6}}{2}
\end{array}\cdots\right.\\
 & \begin{array}{cccccc}
\end{array} & \left.\cdots\begin{array}{cccc}
14 & 15 & 16 & 17\\
\\
-\frac{\sqrt{3}}{2} & -\frac{\sqrt{3}}{2} & \sqrt{3} & \sqrt{3}\\
2\sqrt{3} & -2\sqrt{3} & 2\sqrt{3} & -2\sqrt{3}\\
\frac{\sqrt{3}}{2} & -\frac{\sqrt{3}}{2} & \sqrt{3} & \sqrt{3}
\end{array}\right]
\end{eqnarray*}
\medskip{}

\begin{figure} [H]
\begin{center}
\includegraphics[scale=0.95]{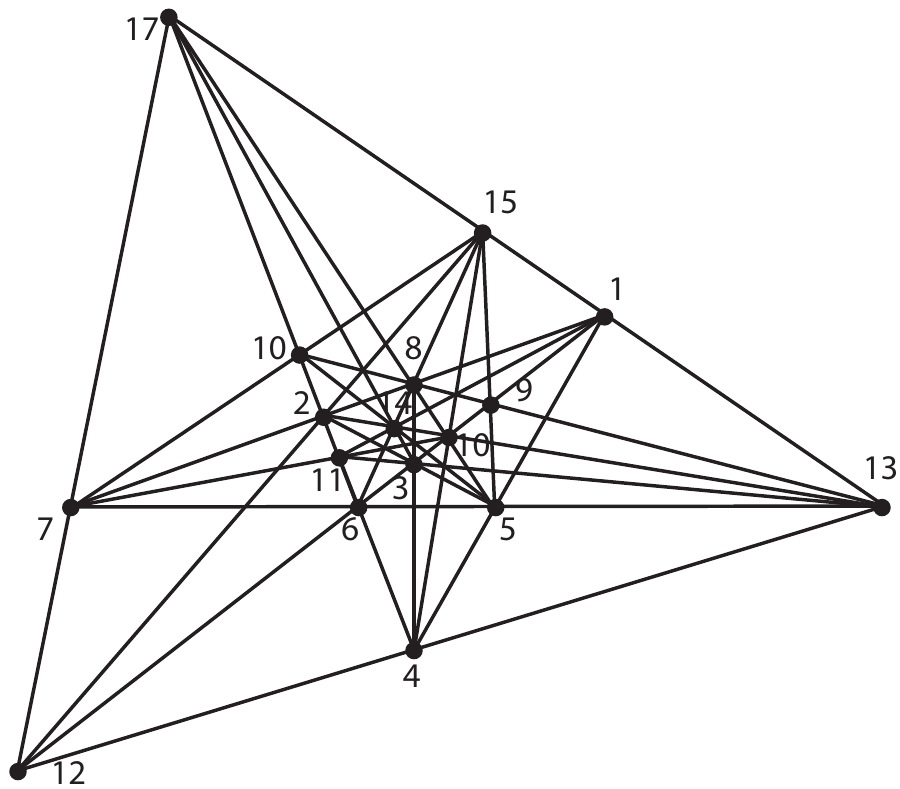}
\end{center}
\end{figure}
\medskip{}

$$G =30 \left[\begin{array}{ccc}
3 & 0 & 0\\
0 & 4 & 0\\
0 & 0 & 1
\end{array}\right]$$

$\mathcal{I_{\mathrm{2}}}$ =$\{(1,10),(2,9),(3,7),(3,10),(3,15),(4,7),(4,9),(4,14),$

\qquad{}$(5,11),(5,12),(9,11),(9,17),(10,12),(10,16),(11,15),(12,14)\}$\medskip{}

$\mathcal{I_{\mathrm{3}}}$ =$\begin{cases}
\{(5,9,15),(5,10,14),(7,9,14),(7,10,15)\} & \nu_{31}=\frac{1}{6}\\
\{(3,14,17),(4,12,13),(4,15,16),(7,11,16),(7,12,17),(8,11,12),\\
(1,4,5),(1,11,14),(2,3,5),(2,12,15),(3,4,8),(3,11,13)\} & \nu_{32}=\frac{4}{15}
\end{cases}$ 

\medskip{}

$\mathcal{I_{\mathrm{4}}}$ =$\begin{cases}
(8,9,10,13) & \nu_{41}=\frac{4}{15}\\
(1,2,7,8),(1,13,15,17),(2,13,14,16),(5,6,7,13),(5,8,16,17),(6,8,14,15) & \nu_{42}=\frac{2}{5}
\end{cases}$

\medskip{}

$\mathcal{I_{\mathrm{6}}}$ =$\{(1,3,6,9,12,16),(2,4,6,10,11,17)\}$,
$\nu_{6}=\frac{3}{5}$

\bigskip\bigskip

\subsection{$\vee$-system $(E_{8},A_{1}^{2}\times A_{3})$}

\begin{eqnarray*}
\mathcal{A}
 & =
 & \left[\begin{array}{ccccccccccccccc}
1 & 2 & 3 & 4 & 5 & 6 & 7 & 8 & 9 & 10 & 11 & 12 & 13\\
\\
2 & 2 & 0 & 0 & 2 & 2 & 2 & 0 & 0 & \frac{\sqrt{2}}{2} & \frac{\sqrt{2}}{2} & \frac{\sqrt{2}}{2} & \frac{\sqrt{2}}{2}\\
2 & -2 & 2 & 2 & 0 & 0 & 0 & 2\sqrt{10} & 0 & 2\sqrt{2} & -2\sqrt{2} & -2\sqrt{2} & 2\sqrt{2}\\
0 & 0 & 2 & -2 & 2 & -2 & 0 & 0 & 2 & \frac{\sqrt{2}}{2} & -\frac{\sqrt{2}}{2} & \frac{\sqrt{2}}{2} & -\frac{\sqrt{2}}{2}
\end{array}\cdots\right.\\
 & \begin{array}{cccccc}
\end{array} & \left.\cdots\begin{array}{cccc}
14 & 15 & 16 & 17\\
\\
\sqrt{2} & \sqrt{2} & \sqrt{2} & \sqrt{2}\\
2\sqrt{2} & -2\sqrt{2} & -2\sqrt{2} & 2\sqrt{2}\\
\sqrt{2} & -\sqrt{2} & \sqrt{2} & -\sqrt{2}
\end{array}\right]
\end{eqnarray*}
\medskip{}

\begin{figure} [H]
\begin{center}
\includegraphics[scale=0.82]{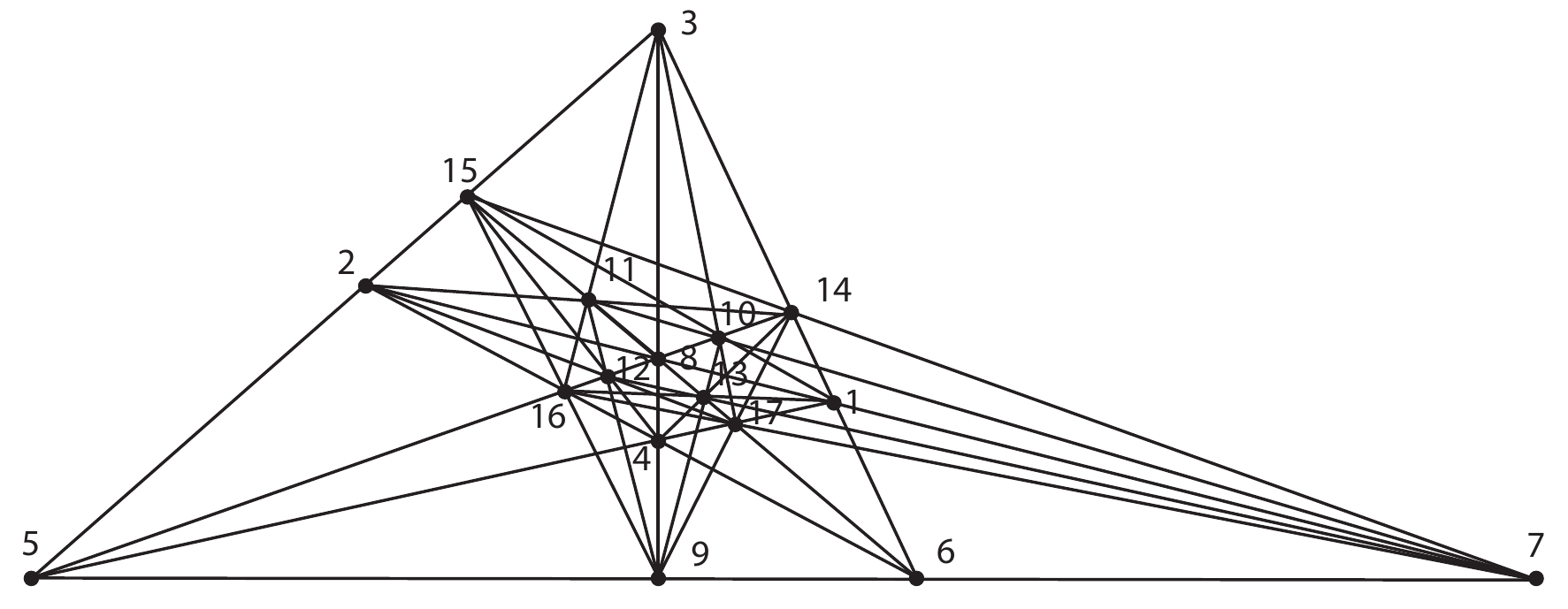}
\end{center}
\end{figure}

$$G = 30\left[\begin{array}{ccc}
1 & 0 & 0\\
0 & 4 & 0\\
0 & 0 & 1
\end{array}\right]$$
\medskip{}

$\mathcal{I_{\mathrm{2}}}$ =$\{(1,9),(1,11),(1,12),(2,9),(2,10),(2,13),(3,7),(3,12),$

\qquad{}$(3,13),(4,7),(4,10),(4,11),(5,11),(5,13),(6,10),(6,12)\}$\medskip{}

$\mathcal{I_{\mathrm{3}}}$ =$\begin{cases}
(7,10,11),(7,12,13),(9,10,13),(9,11,12) & \nu_{31}=\frac{1}{6}\\
(9,15,16),(9,14,17),(7,16,17),(7,14,15) & \nu_{32}=\frac{4}{15}\\
(1,10,15),(1,13,16),(2,11,14),(2,12,17),(3,10,17),\\
(3,11,16),(4,12,15),(4,13,14) & \nu_{33}=\frac{7}{30}
\end{cases}$ 

\medskip{}

$\mathcal{I_{\mathrm{4}}}$ =$\{(1,2,7,8),(1,3,6,14),(1,4,5,17),(2,3,5,15),(2,4,6,\ 16),(3,4,8,9),(5,6,7,9)\}$,

\hfill{}$\nu_{4}=\frac{2}{5}$

\medskip{}

$\mathcal{I_{\mathrm{6}}}$ =$\{(6,8,11,13,15,17),(5,8,10,12,14,16)\}$,$\nu_{6}=\frac{3}{5}$

\medskip{}

\subsection{$\vee$-system $(E_{8},A_{1}^{3}\times A_{2})$}

\begin{eqnarray*}
\mathcal{A} & = & \left[\begin{array}{ccccccccccccccc}
1 & 2 & 3 & 4 & 5 & 6 & 7 & 8 & 9 & 10 & 11 & 12 & 13 & 14 & 15\\
\\
\sqrt{3} & 3 & 0 & 1 & \sqrt{3} & 0 & \sqrt{6} & 0 & 0 & \sqrt{6} & \sqrt{3} & 3 & 1 & \sqrt{3} & 0\\
\sqrt{3} & 0 & 3 & -1 & 0 & \sqrt{3} & 0 & \sqrt{6} & 0 & \sqrt{6} & \sqrt{3} & 3 & 2 & 0 & \sqrt{3}\\
0 & 3 & 3 & 0 & -\sqrt{3} & -\sqrt{3} & 0 & 0 & 6\sqrt{2} & 3\sqrt{6} & 4\sqrt{3} & 6 & 3 & 3\sqrt{3} & 3\sqrt{3}
\end{array}\cdots\right.\\
 & \begin{array}{cccccc}
\end{array} & \left.\cdots\begin{array}{cccc}
16 & 17 & 18 & 19\\
\\
2 & \sqrt{6} & 0 & \sqrt{6}\\
1 & 0 & \sqrt{6} & \sqrt{6}\\
3 & 2\sqrt{6} & 2\sqrt{6} & \sqrt{6}
\end{array}\right]
\end{eqnarray*}

\begin{figure} [H]
\begin{center}
\includegraphics[scale=0.75]{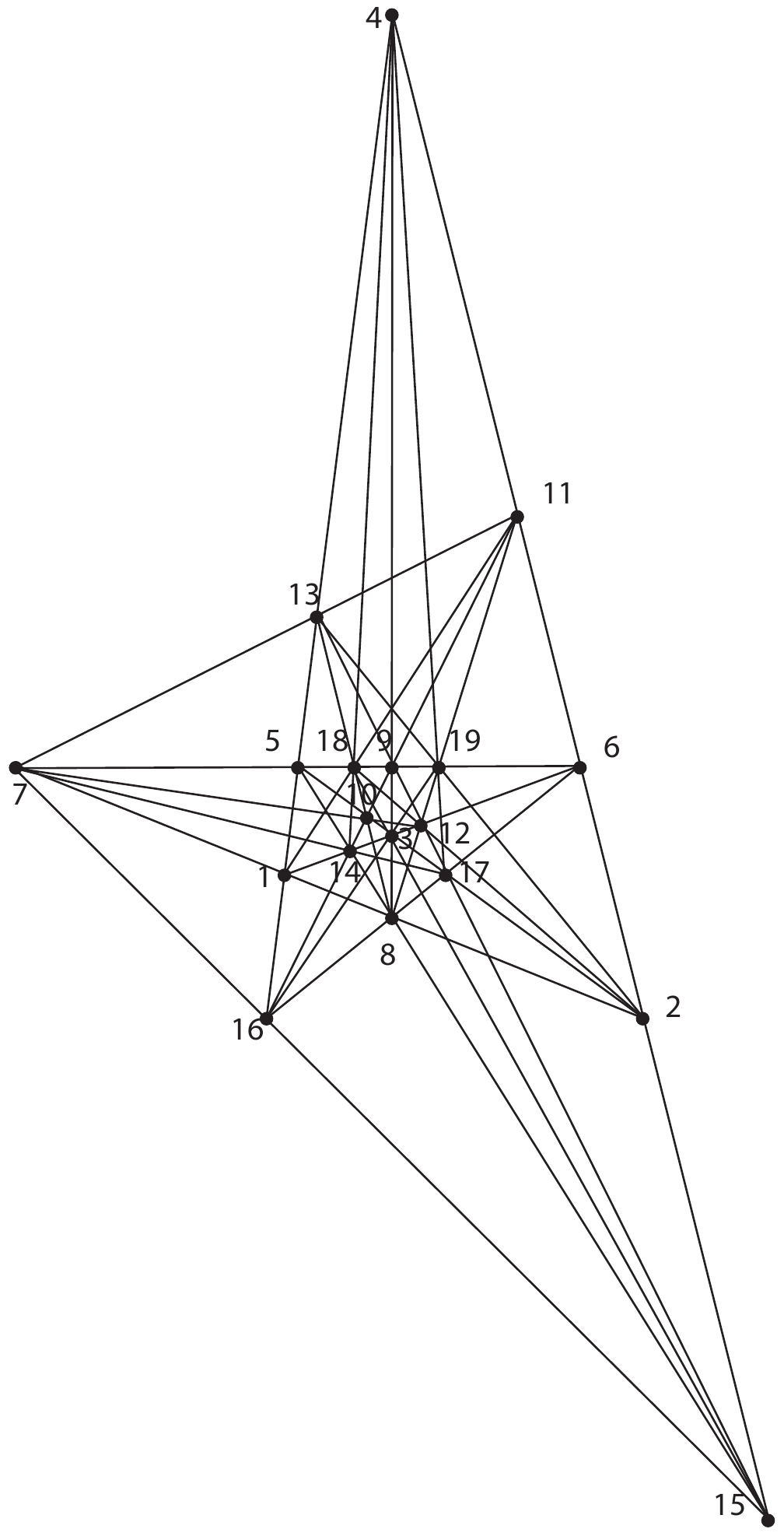}
\end{center}
\end{figure}
\medskip{}

$$G = 30\left[\begin{array}{ccc}
1 & 0 & 0\\
0 & 2 & 0\\
0 & 0 & 3
\end{array}\right]$$
\medskip{}

\[
\mathcal{I_{\mathrm{2}}}=\{(1,9),(1,15),(1,17),(2,9),(2,14),(2,16),(3,7),(3,11),
\]
\[
(3,13),(4,7),(4,10),(4,12),(5,11),(5,12),(6,10),(6,13),(10,15),(11,17),(12,16),
\]
\[
(13,14),(14,19),(15,19),(16,18),(17,18)\}
\]
\[
\]

\medskip{}

$\mathcal{I_{\mathrm{3}}}$ =$\begin{cases}
(3,15,18),(3,16,19),(4,14,18),(4,17,19),(7,14,17),(7,15,16) & \nu_{31}=\frac{7}{30}\\
(7,13,11),(2,12,18),(2,13,19),(1,10,19),(1,11,18),(7,10,12) & \nu_{32}=\frac{1}{6}
\end{cases}$ 

\medskip{}

$\mathcal{I_{\mathrm{4}}}$ =$\begin{cases}
(1,2,7,8),(8,10,13,18),(8,11,12,19) & \nu_{31}=\frac{4}{15}\\
(3,4,8,9),(5,8,14,15),(6,8,16,17) & \nu_{32}=\frac{2}{5}
\end{cases}$

\medskip{}

$\mathcal{I_{\mathrm{5}}}$ =$\begin{cases}
(9,12,13,15,17),(9,10,11,14,16),(2,4,6,11,15),\\
(2,3,5,10,17),(1,4,5,13,16),(1,3,6,12,14)& \nu_{5}=\frac{2}{5}
\end{cases}$

\medskip{}

$\mathcal{I_{\mathrm{6}}}$ =$\{(5,6,7,9,18,19)\},$$\nu_{6}=\frac{3}{5}$

\subsection{$\vee$-system $(E_{8},A_{2}^{2}\times A_{1})$}

\begin{eqnarray*}
\mathcal{A}
 & =
 & \left[\begin{array}{ccccccccccccccc}
1 & 2 & 3 & 4 & 5 & 6 & 7 & 8 & 9 & 10 & 11 & 12 & 13 & 14 & 15\\
\\
\sqrt{3} & 3 & 0 & 1 & \sqrt{3} & 0 & \sqrt{6} & 0 & 0 & \sqrt{6} & \sqrt{3} & 3 & 1 & \sqrt{3} & 0\\
\sqrt{3} & 0 & 3 & -1 & 0 & \sqrt{3} & 0 & \sqrt{6} & 0 & \sqrt{6} & \sqrt{3} & 3 & 2 & 0 & \sqrt{3}\\
0 & 3 & 3 & 0 & -\sqrt{3} & -\sqrt{3} & 0 & 0 & 6\sqrt{2} & 3\sqrt{6} & 4\sqrt{3} & 6 & 3 & 3\sqrt{3} & 3\sqrt{3}
\end{array}\cdots\right.\\
 & \begin{array}{cccccc}
\end{array} & \left.\cdots\begin{array}{cccc}
16 & 17 & 18 & 19\\
\\
2 & \sqrt{6} & 0 & \sqrt{6}\\
1 & 0 & \sqrt{6} & \sqrt{6}\\
3 & 2\sqrt{6} & 2\sqrt{6} & \sqrt{6}
\end{array}\right]
\end{eqnarray*}
\medskip{}

\begin{figure} [H]
\begin{center}
\includegraphics[scale=0.6]{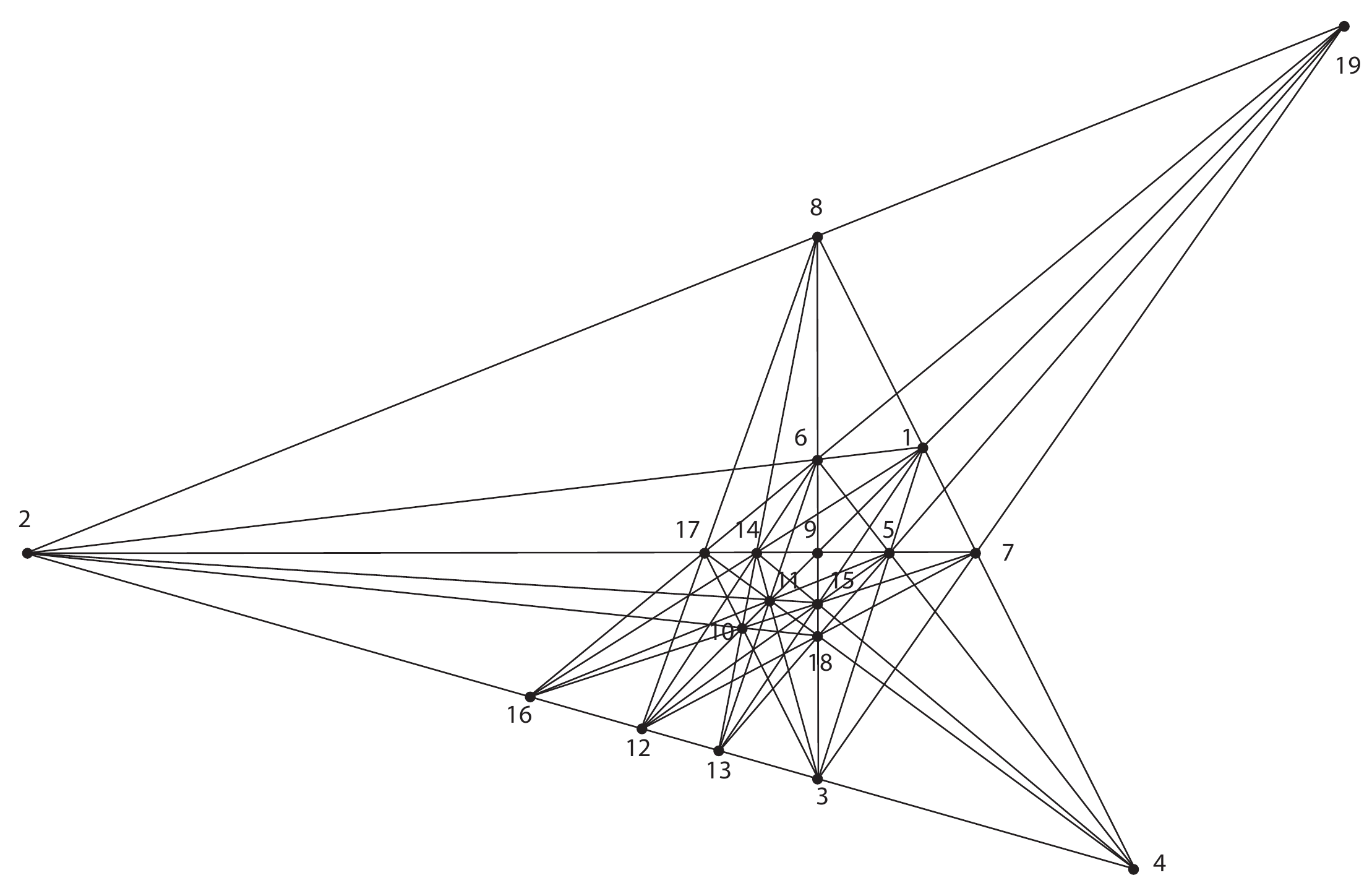}
\end{center}
\end{figure}

$$G = 30\left[\begin{array}{ccc}
2 & 1 & 3\\
1 & 2 & 3\\
3 & 3 & 12
\end{array}\right]$$
\medskip{}

\[
\mathcal{I_{\mathrm{2}}}=\{(1,17),(1,18),(4,9),(4,10),(4,19),(5,8),(5,10),(6,7),(6,10),(7,11),(7,13),
\]
\[
(8,11),(8,16),(9,13),(9,16),(13,17),(14,18),(14,19),(15,17),(15,19),(16,18)\}
\]
\[
\]

\medskip{}

$\mathcal{I_{\mathrm{3}}}$ =$\begin{cases}
(1,2,6),(1,3,5),(2,11,15),(3,11,14),(5,12,15),(6,12,14) & \nu_{31}=\frac{7}{30}\\
(7,12,18),(8,12,17),(3,10,17),(3,7,19),(2,10,18),(2,8,19) & \nu_{32}=\frac{4}{15}\\
(1,13,15),(1,14,16),(4,5,6),(4,14,15),(5,11,16),(6,11,13) & \nu_{33}=\frac{1}{6}
\end{cases}$ 

\medskip{}

$\mathcal{I_{\mathrm{4}}}$ =$\begin{cases}
(1,4,7,8),(4,11,17,18),(5,13,18,19),(6,16,17,19),\\
(7, 10,15,16),(8,10,13,14) & \nu_{4}=\frac{4}{15}\end{cases}$

\medskip{}

$\mathcal{I_{\mathrm{6}}}$ =$\begin{cases}
(2,3,4,12,13,16); & \nu_{61}=\frac{2}{5}\\
(2,5,7,9,14,17),(1,9,10,11,12,19),(3,6,8,9,15,18) & \nu_{62}=\frac{3}{5}
\end{cases}$

\medskip{}

\subsection{$\vee$-system $(H_{4},A_{1})$}

\begin{eqnarray*}
\mathcal{A}
 & =
 & \left[\begin{array}{cccccccccccccccc}
1 & 2 & 3 & 4 & 5 & 6 & 7 & 8 & 9 & 10 & 11 & 12 & 13 & 14 & 15 & 16\\
\\
1 & 0 & 0 & \frac{\sqrt{2}}{2} & \frac{\sqrt{2}}{2} & \frac{\sqrt{2}}{2} & \frac{\sqrt{2}}{2} & a & a & a & a & b & b & b & b & \frac{1}{2}\\
0 & 1 & 0 & \frac{\sqrt{2}}{2} & \frac{\sqrt{2}}{2} & -\frac{\sqrt{2}}{2} & -\frac{\sqrt{2}}{2} & \frac{1}{2} & \frac{1}{2} & -\frac{1}{2} & -\frac{1}{2} & a & a & -a & -a & b\\
0 & 0 & 1 & \frac{\sqrt{2}}{2} & -\frac{\sqrt{2}}{2} & \frac{\sqrt{2}}{2} & -\frac{\sqrt{2}}{2} & b & -b & b & -b & \frac{1}{2} & -\frac{1}{2} & \frac{1}{2} & -\frac{1}{2} & a
\end{array}\cdots\right.\\
 &  & ...\begin{array}{ccccccccccc}
17 & 18 & 19 & 20 & 21 & 22 & 23 & 24 & 25 & 26 & 27\\
\\
\frac{1}{2} & \frac{1}{2} & \frac{1}{2} & a\sqrt{2} & a\sqrt{2} & 0 & 0 & b\sqrt{2} & b\sqrt{2} & \sqrt{b\sqrt{5}} & \sqrt{b\sqrt{5}}\\
b & -b & -b & b\sqrt{2} & -b\sqrt{2} & a\sqrt{2} & a\sqrt{2} & 0 & 0 & 2a\sqrt{b\sqrt{5}} & -2a\sqrt{b\sqrt{5}}\\
-a & a & -a & 0 & \text{0} & b\sqrt{2} & -b\sqrt{2} & a\sqrt{2} & -a\sqrt{2} & 0 & 0
\end{array}...\\
 &  & \left.\cdots\begin{array}{cccc}
28 & 29 & 30 & 31\\
\\
0 & 0 & 2a\sqrt{b\sqrt{5}} & -2a\sqrt{b\sqrt{5}}\\
\sqrt{b\sqrt{5}} & \sqrt{b\sqrt{5}} & 0 & 0\\
2a\sqrt{b\sqrt{5}} & -2a\sqrt{b\sqrt{5}} & \sqrt{b\sqrt{5}} & \sqrt{b\sqrt{5}}
\end{array}\right]
\end{eqnarray*}  with $a=\frac{1+\sqrt{5}}{4}$ and $b=\frac{-1+\sqrt{5}}{4}.$ 

\begin{figure} [H]
\begin{center}
\includegraphics[scale=0.9]{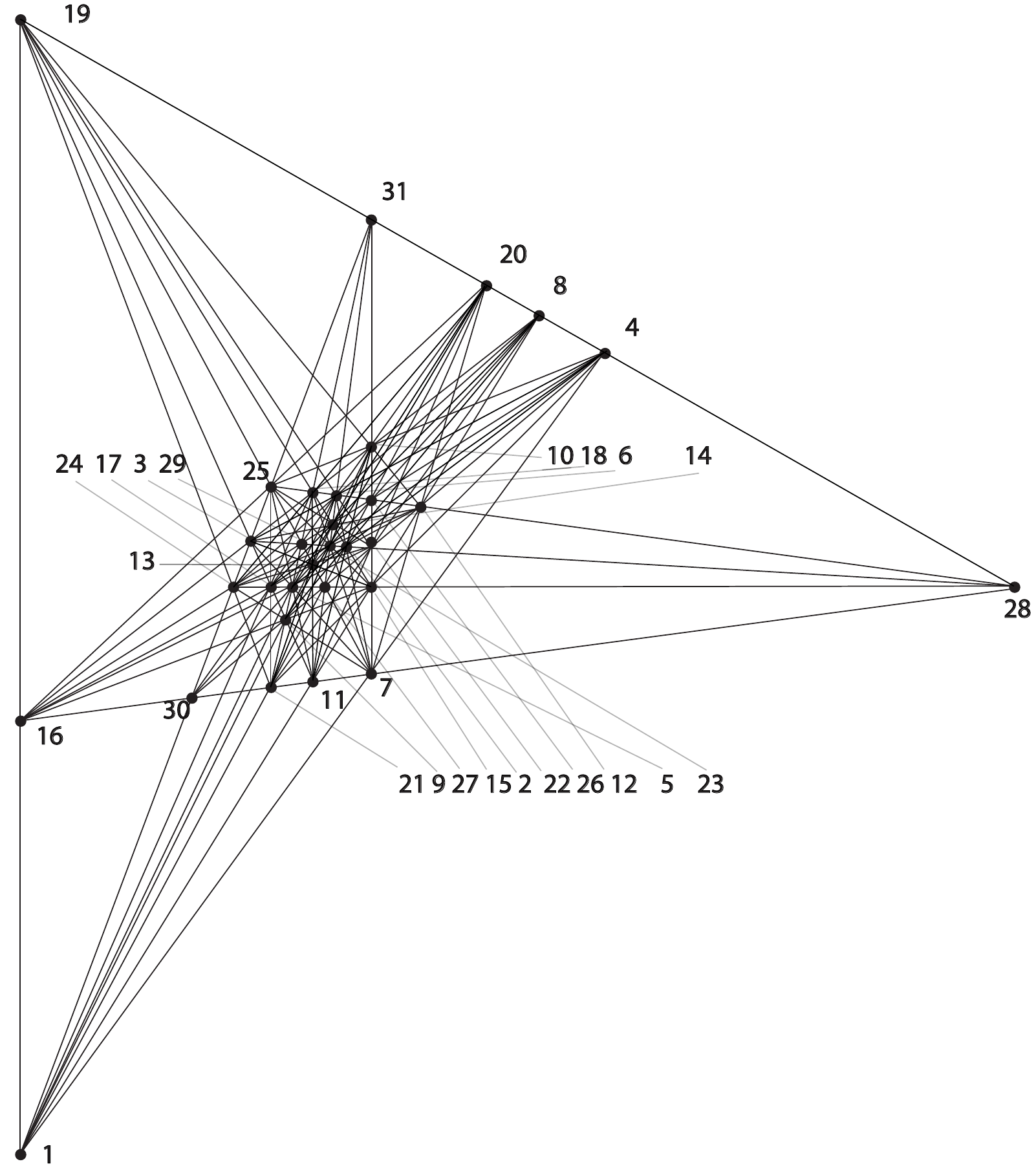}
\end{center}
\end{figure}

$$G = I$$

$\mathcal{I_{\mathrm{2}}}$ =$\{(1,22),(1,23),(1,28),(1,29),(2,24),(2,25),(2,30),(2,31),(3,20),(3,21),(3,26),(3,27),$

\qquad{}$(4,11),(4,13),(4,18),(5,10),(5,12),(5,19),(6,9),(6,15),(6,16),(7,8),(7,14),(7,17),(8,25),$

\qquad{}$(8,29),(8,27),(9,25),(9,26),(9,28),(10,24),(10,27),(10,28),(11,24),(11,26),(11,29),$

\qquad{}$(12,29),(12,31),(13,20),(13,28),(13,31),(14.21),(14,28),(12,21),(14,30),(15,20),$

\qquad{}$(15,29),(15,30),(16,23),(16,27),(16,31),(17,22),(17,26),$

\qquad{}$(17,31),(18,22),(18,27),(18,30),(19.23),(19.26),(19,30)\}$

\medskip

$\mathcal{I_{\mathrm{3}}}$ =$\begin{cases}
\{(1,4,7),(1,5,6),(2,4,5),(2,6,7),(3,4,6),(3,5,7),(4,10,25),\\
(4,15,21),(4,17,23),(5,11,25),(17,21,25),(8,21,22),(18,20,24),\\
(16,20,25),(19,21,24),(5,14,20),(5,16,22),(6,8,24),(6,13,21),\\
(6,19,22),(7,9,24),(7,12,20),(7,18,23),(9,20,23),(10,21,23),\\
(11,20,22),(12,23,24),(13,22,24),(14,22,25),(15,23,25)\} & \nu_{31}=\frac{2}{15} \\
\{(1,16,19),(1,17,18),(2,8,9),(2,10,11),(3,12,14),(3,13,15),\\
(8,14,17),(11,13,18),(9,15,16),(10,12,19)\} & \nu_{32}=\frac{1}{10} 
\end{cases}$

$\mathcal{I}_{5}$ =$\{(1,8,11,12,15),(1,9,10,13,14),(2,12,13,16,17),(2,14,15,18,19),$

\qquad{}$(3,8,10,16,18),(3,9,11,17,19)\} \hfill \nu_{5}=\frac{1}{6}$
\medskip{}

$\mathcal{I}_{6}$ =$\{(1,2,20,21,26,27),(1,3,24,25,30,31),(2,3,22,23,28,29),(4,8,19,20,28,31),$

\qquad{}$(5,8,13,23,26,30),(5,9,18,21,29,31),(5,15,17,24,27,28),(6,10,17,20,29,30),$

\qquad{}$(6,11,14,23,27,31),(6,12,18,25,26,28),(4,9,12,22,27,30),(4,14,16,24,26,29),$

\qquad{}$(7,10,15,22,26,31),(7,11,16,21,28,30),(7,13,19,25,27,29)\} \hfill \nu_{6}=\frac{1}{3}$

\medskip

\end{document}